\documentclass[11pt]{article}

\usepackage[margin=1in]{geometry}
\usepackage{authblk}
\usepackage{microtype}
\usepackage{amsmath,amssymb,amsthm,mathtools}
\usepackage{aliascnt}
\usepackage{physics}
\usepackage{booktabs}
\usepackage{array}
\usepackage{algorithm}
\usepackage{algpseudocode}
\usepackage{graphicx}
\usepackage{flafter}
\usepackage{placeins}
\usepackage{tikz}
\usepackage{xcolor}
\usepackage{hyperref}
\usepackage[capitalize,nameinlink]{cleveref}
\RemoveFromHook{label}[firstaid/cleveref] 
\hypersetup{
    colorlinks=true,linkcolor=blue!55!black,
    citecolor=blue!55!black,urlcolor=blue!55!black}
\allowdisplaybreaks
\newtheorem{theorem}{Theorem}[section]
\newtheorem{proposition}[theorem]{Proposition}
\newtheorem{lemma}[theorem]{Lemma}

\newtheorem{corollary}[theorem]{Corollary}

\theoremstyle{definition}
\newtheorem{definition}[theorem]{Definition}
\newtheorem{example}[theorem]{Example}
\newtheorem{remark}[theorem]{Remark}

\newcommand{\FF}{\mathbb{F}}
\newcommand{\CC}{\mathbb{C}}
\newcommand{\RR}{\mathbb{R}}
\newcommand{\cc}{\mathrm{cc}}
\newcommand{\cl}{\mathrm{cl}}

\newcommand{\calB}{\mathcal{B}}
\newcommand{\calC}{\mathcal{C}}
\newcommand{\calH}{\mathcal{H}}
\newcommand{\calL}{\mathcal{L}}
\newcommand{\calP}{\mathcal{P}}
\newcommand{\calQ}{\mathcal{Q}}
\newcommand{\calR}{\mathcal{R}}
\newcommand{\calN}{\mathcal{N}}

\newcommand{\calS}{\mathcal{S}}
\newcommand{\calT}{\mathcal{T}}

\newcommand{\sfP}{\mathsf{P}}
\newcommand{\sfS}{\mathsf{S}}
\newcommand{\sfN}{\mathsf{N}}
\newcommand{\sfT}{\mathsf{T}}

\newcommand{\im}{\operatorname{im}}
\newcommand{\leaf}{\operatorname{leaf}}
\newcommand{\rw}{\operatorname{rw}}
\newcommand{\tw}{\operatorname{tw}}
\newcommand{\poly}{\operatorname{poly}}
\newcommand{\syn}{\operatorname{syn}}
\newcommand{\supp}{\operatorname{supp}}
\newcommand{\Span}{\operatorname{span}}

\newcommand{\argmax}{\operatorname*{arg\,max}}

\newcommand{\rowsp}{\operatorname{rowsp}}
\newcommand{\sigmoid}{\operatorname{sigmoid}}

\title{Exact Maximum Likelihood Decoding beyond Treewidth via Rank-Decomposition Dynamic Programming}
\author[1]{Bin Cheng\thanks{\href{mailto:bincheng@nus.edu.sg}{bincheng@nus.edu.sg}}}
\author[2]{Feng Pan\thanks{\href{mailto:feng_pan@sutd.edu.sg}{feng\_pan@sutd.edu.sg}}}
\affil[1]{Centre for Quantum Technologies, National University of Singapore, Singapore}
\affil[2]{Science, Mathematics and Technology Cluster, Singapore University of Technology and Design, Singapore}
\date{}
\begin{document}
\maketitle
\begin{abstract}
Maximum-likelihood decoding provides an optimal decoding strategy for quantum error correction under stochastic Pauli noise.
However, computing logical-class probabilities is challenging, and the leading exact tensor-network contraction requires time exponential in treewidth.
In this work, we introduce a new decoding algorithm based on rank-decomposition dynamic programming (Rank DP).
We express decoding partition functions with independent local fault factors as quadratic sums of powers and apply Rank DP.
The resulting exact evaluator has arithmetic complexity polynomial in the input size and exponential in the Tanner graph's rank-width with respect to the fault partition, including decomposition construction.
After Gaussian elimination, it gives polynomial-time decoding for punctured quantum Reed-Muller codes and a Steane-concatenated family with growing distance under independent single-qubit Pauli noise.
Standard contraction of the corresponding tensor networks requires superpolynomial time.
A nonnegative realization gives relative floating-point error bounds under explicit arithmetic assumptions.
Numerical experiments demonstrate runtime advantages over the tested tensor-network implementations on selected code-capacity and circuit-level instances, with full likelihood evaluation for Reed-Muller codes up to $1{,}023$ qubits.
We further use these likelihoods to learn circuit noise parameters from syndromes, evaluate rare postselection probabilities, and quantify decoder optimality gaps.
Our work opens new avenues for exploiting algebraic structure in quantum decoding and noise characterization.
\end{abstract}

\section{Introduction}
\label{sec:introduction}

Quantum error correction protects logical quantum information by encoding it across physical qubits and repeatedly measuring error syndromes.
Its effectiveness depends on a classical decoder that uses these observations to determine a recovery operation~\cite{gottesman_1997_stabilizer,cao_2026_maximum}.
Under a correctly specified stochastic Pauli noise model, maximum-likelihood decoding (MLD) minimizes the probability of selecting an incorrect logical class.
It therefore provides both an optimal decoding rule and a benchmark for practical decoders.
The computational cost of implementing this rule is a central problem in quantum error correction.

The defining feature of quantum MLD is the need to account for degeneracy.
For a fixed syndrome, errors in the same logical class differ by a stabilizer and are corrected by the same recovery.
MLD therefore selects the class with the largest total error probability~\cite{pelchat_2013_degenerate,bravyi_2014_efficient}.
Under local stochastic noise, these class probabilities are partition functions, reducing MLD to their evaluation and comparison~\cite{chubb_2021_statistical}.
Exact degenerate MLD is \#P-hard under polynomial-time Turing reductions, motivating algorithms that exploit additional structure~\cite{iyer_2015_hardness}.

Tensor-network (TN) methods are among the leading approaches to evaluating or approximating decoding partition functions.
They have been developed for surface codes and general two-dimensional Pauli codes, and extended to higher-dimensional codes and circuit-level noise~\cite{bravyi_2014_efficient,chubb_2021_general,piveteau_2024_tensor}.
Recent work also uses exact TN likelihoods to estimate circuit-level noise parameters directly from syndrome data~\cite{cao_2026_differentiable}.
These methods enable high-accuracy inference, but exact contraction can require time exponential in the treewidth of the underlying graph~\cite{markov_2008_simulating}.
Approximate contraction uses truncation to reduce this cost, but does not in general preserve exact likelihoods.

The constraints in a decoding problem contain algebraic structure that is not fully described by the connectivity of a tensor network.
In the binary representation of Pauli faults, syndrome and logical constraints are linear equations over $\FF_2$.
Many constraints or partial assignments can therefore carry the same information across a partition.
This suggests that the number of independent parity conditions, rather than the size of a dense tensor boundary, may provide a more succinct description of the computation.
The resulting question is:
\begin{quote}
    \emph{Can exact maximum-likelihood decoding go beyond tensor-network contraction?}
\end{quote}

We answer this question by expressing decoding likelihoods as weighted quadratic sums of powers (SOPs).
Applying the rank-decomposition dynamic program~\cite{decolnet_2026_quadratic}, we obtain an exact evaluator whose complexity is exponential in the rank-width of the underlying Tanner graph.
To our knowledge, this is the first application of this quadratic SOP/Rank DP framework to exact quantum MLD.
Our contribution is the decoding formulation, the resulting complexity guarantees and explicit tractable code families, and numerical evidence identifying instances where the approach is useful.
Gaussian elimination further reduces the graph on which this evaluator operates.
For two families based on punctured quantum Reed-Muller codes, the reduced representation gives polynomial-time MLD, whereas dense pairwise contraction of the specified original tensor networks requires superpolynomial time.

\subsection{Main results}
\label{subsec:intro-main-results}

The decoding likelihoods under independent local Pauli noise can be expressed as partition functions
\begin{align}
    Z_{\vb{M}}(\vb{t};\vb{w})
    =\sum_{\substack{\vb{e}\in\FF_2^d\\\vb{M}\vb{e}=\vb{t}}}
        \prod_{B\in\calB}w_B(\vb{e}_B),
    \label{eq:intro-partition-function}
\end{align}
where $\vb{M}\in\FF_2^{m\times d}$ and $\vb{t}$ specify the syndrome and logical constraints, and each $w_B$ gives the joint distribution of a local fault block $B\in\calB$.
This formulation covers both code-capacity and circuit-level decoding.
By expressing $Z_{\vb{M}}(\vb{t};\vb{w})$ as a quadratic SOP and applying the rank-decomposition dynamic programming (Rank DP)~\cite{decolnet_2026_quadratic}, we obtain the following complexity bound.

\begin{theorem}[Informal version of \cref{thm:sop-dp}]
\label{thm:intro-likelihood}
Let $N:=m+d+\sum_{B\in\calB}2^{|B|}$.
For explicitly given finite local weight tables, the partition function in \cref{eq:intro-partition-function} can be evaluated exactly using $\poly(N)2^{O(\rw(T(\vb{M});\calC))}$ arithmetic operations, where $\rw(T(\vb{M});\calC)$ is the rank-width of the Tanner graph $T(\vb{M})$ with respect to the partition $\calC$ into fault blocks and singleton checks.
\end{theorem}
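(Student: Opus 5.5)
The plan is to rewrite $Z_{\vb{M}}(\vb{t};\vb{w})$ as a weighted quadratic sum of powers whose variable-interaction structure is exactly the Tanner graph $T(\vb{M})$, with the fault blocks as grouped variable sets. Then we invoke the rank-decomposition dynamic program of~\cite{decolnet_2026_quadratic} with the partition $\calC$.

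The first step is to eliminate the hard constraint $\vb{M}\vb{e}=\vb{t}$ by Fourier expansion over $\FF_2$. We write
\begin{align*}
\mathbf{1}[\vb{M}\vb{e}=\vb{t}]=2^{-m}\sum_{\vb{y}\in\FF_2^m}(-1)^{\vb{y}^{\top}(\vb{M}\vb{e}+\vb{t})}.
\end{align*}
This gives
\begin{align*}
Z_{\vb{M}}(\vb{t};\vb{w})=2^{-m}\sum_{\vb{y}\in\FF_2^m}\sum_{\vb{e}\in\FF_2^d}(-1)^{\vb{y}^{\top}\vb{t}}\,(-1)^{\vb{y}^{\top}\vb{M}\vb{e}}\prod_{B\in\calB}w_B(\vb{e}_B).
\end{align*}
The phase $(-1)^{\vb{y}^\top\vb{M}\vb{e}}$ is a quadratic form in the joint variables $(\vb{y},\vb{e})$. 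Its bipartite interaction graph between check variables $y_i$ and fault variables $e_j$ is precisely $T(\vb{M})$.

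The remaining factors are local. The term $(-1)^{\vb{y}^\top\vb{t}}$ is a product of unary factors on the singleton checks. Each $w_B$ is an arbitrary function of the block $\vb{e}_B$. To fit the SOP format with only quadratic phases and grouped local weights, we treat each block $B$ as a single part of $\calC$ and supply its weight table of size $2^{|B|}$ directly. This is how the term $\sum_B 2^{|B|}$ enters $N$.

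The second step is to check that the hypotheses of the Rank DP theorem are met. The relevant width is the rank-width of the quadratic form's graph with respect to $\calC$. For a cut $(X,Y)$ of a decomposition whose leaves are parts of $\calC$, the cut-rank is the $\FF_2$ rank of the submatrix of $\vb{M}$ restricted to checks on one side and faults on the other, plus its transpose block. This is exactly the cut-rank in $T(\vb{M})$.

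Intra-block dependencies never cross a cut, since blocks are leaves. Consequently, the arbitrary non-quadratic weight $w_B$ does not affect the width. The Rank DP then computes the sum with $\poly(N)\,2^{O(k)}$ operations given a decomposition of width $k$.

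The third step is to obtain the decomposition. We use an approximation algorithm for partitioned rank-width, for example Oum's or the Hliněný–Oum approach via connectivity functions, applied to the submodular cut-rank function on the partition $\calC$. This produces a decomposition of width $O(\rw(T(\vb{M});\calC))$ in time $\poly(N)2^{O(\rw)}$. Feeding that decomposition into the DP keeps the bound at $2^{O(\rw)}$.

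I expect the main obstacle to be the middle step. Specifically, one must verify that the DP of~\cite{decolnet_2026_quadratic} handles grouped leaves carrying arbitrary weight tables with cost linear in the table size rather than exponential in block boundary ranks. Otherwise a block with many fault variables could create an extra $2^{|B|}$-type blowup beyond the table.

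My first attempt would be to process each leaf $B$ by enumerating all $2^{|B|}$ assignments. Each assignment is projected onto the quotient space induced by the cut-rank to the rest of the graph, which has dimension at most $\rw$, and the weights are accumulated into a $2^{\rw}$-sized table. Merges at internal nodes then combine tables via the standard rank-DP join in $2^{O(\rw)}$ time. The global prefactor $2^{-m}$ and the signs are trivial to track.
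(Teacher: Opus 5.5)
Your proposal is correct and follows the paper's proof essentially step for step. The paper does the same three things: the Fourier reduction of \cref{lem:partition-sop} to a block-weighted quadratic SOP on $T(\vb{M})$; an Oum--Seymour-type approximation applied to the induced connectivity function on the parts of $\calC$ (\cref{prop:partition-rank-decomposition}, width at most $4\rw(T(\vb{M});\calC)$); and \cref{alg:sop-dp}, whose leaf step is exactly your enumeration of the $2^{|B|}$ block assignments into signature tables, followed by joins costing at most $4^{\omega}$ per node. One caveat: the single-exponential construction time comes from the Oum--Seymour approximation in connectivity-function form, not from Hlin\v{e}n\'y--Oum's exact algorithm, which does not give single-exponential dependence on the width.
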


Thus, logarithmic rank-width and polynomially many logical classes give polynomial-time MLD in the arithmetic model.
Our second main result compares rank-decomposition dynamic program for MLD with standard TN contraction.

\begin{theorem}[Informal version of \cref{thm:sop-tn-separation}]
\label{thm:intro-tn-separation}
Assume that the fault blocks have bounded size, and let $G$ be the decoding TN graph defined in \cref{prop:decoding-tensor-network}.
Given an optimal rank decomposition of $G$, the Rank DP evaluates the likelihood using $\poly(N)2^{O(\rw(G))}$ arithmetic operations.
Dense pairwise contraction of $G$ requires $2^{\Omega(\tw(G))}$ time, where $\rw(G)$ and $\tw(G)$ denote its rank-width and treewidth, respectively.
\end{theorem}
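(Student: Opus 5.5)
The statement has two halves: an upper bound for the Rank DP on $G$, and a lower bound for dense pairwise contraction of the same graph. I would prove them separately and then put them side by side.

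\textbf{Upper bound.} The first step is to fix how the decoding tensor network of \cref{prop:decoding-tensor-network} relates to the Tanner graph. Its vertices are the fault-block tensors $w_B$ and the parity-check (XOR) tensors. Its edges join a block to a check exactly when some fault variable in the block appears in that check. Hence $G$ is the quotient of $T(\vb{M})$ in which each fault block is contracted to a single vertex.

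Because blocks have bounded size, say $|B|\le b$, a rank decomposition of $G$ can be lifted to a decomposition of $T(\vb{M})$ with respect to $\calC$. The lift hangs the at most $b$ variables of each block under that block's leaf. For any cut, the cut-rank of the lifted Tanner graph with respect to $\calC$ is at most $b$ times the cut-rank of $G$ at the corresponding cut. The reason is that each block row of the $\FF_2$ adjacency matrix splits into at most $b$ variable rows, and each of those is a restriction of rows already counted. I would verify this inequality carefully; it may be that it even holds with equality up to constants.

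With the lifted decomposition in hand, I apply \cref{thm:sop-dp} with the given decomposition, so no construction cost is incurred. This gives $\poly(N)\,2^{O(b\,\rw(G))}=\poly(N)\,2^{O(\rw(G))}$ arithmetic operations. I also need $N$ to be polynomial in $|V(G)|$. This holds because $2^{|B|}\le 2^b$ is constant.

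\textbf{Lower bound.} I would invoke the standard fact from \cite{markov_2008_simulating}. Any pairwise contraction sequence of a tensor network on graph $G$ determines a contraction tree whose largest intermediate tensor has at least $\tw(G)$ open indices, up to an additive constant (e.g.\ $\tw(G)+1$). Equivalently, the contraction complexity equals the treewidth of the line graph, and that is at least $(\tw(G)+1)/2-1$ by the Markov--Shi bound.

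Every index has dimension $2$, and dense contraction writes out each intermediate tensor explicitly. So some step must touch $2^{\Omega(\tw(G))}$ entries, which gives time $2^{\Omega(\tw(G))}$.

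The main obstacle is technical. Parity tensors of high degree may be represented with large degree, and I must make sure the "dense" cost model charges for full intermediate tensors regardless of any sparsity in the XOR constraints. I would state the cost model explicitly as a product of index dimensions per contraction and reduce to the line-graph treewidth bound. To connect $\tw(L(G))$ back to $\tw(G)$, I would use $\tw(L(G))\ge (\tw(G)+1)/2-1$.
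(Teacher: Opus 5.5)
Your lower-bound half is fine. By Markov--Shi, every pairwise contraction order has a step whose index union has size at least $\tw(L(G))\geq(\tw(G)-1)/2$, where $L(G)$ is the line graph, and a dense step costs exponentially in that size. This is the result the paper cites.

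The gap is in the upper bound, and it starts with your description of $G$. The graph of \cref{prop:decoding-tensor-network} is not the quotient of $T(\vb{M})$ with each block contracted to a vertex. It contains all of $T(\vb{M})$ (a COPY vertex $v_j$ per coordinate and a parity vertex $c_a$ per check), plus a separate weight vertex $u_B$ adjacent only to the $v_j$ with $j\in B$. For the quotient you describe, the inequality you planned to verify is false. A block's quotient row is the Boolean OR of its variables' rows, so it does not control their span. Pair coordinates into blocks $\{j,j'\}$ with columns $\vb{M}_j=\vb{x}_i$ and $\vb{M}_{j'}=\vb{x}_i+\vb{1}$. Every block then meets every check, so the quotient is complete bipartite with rank-width at most one. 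The cut separating all blocks from all checks has quotient cut-rank one but Tanner cut-rank $\rank\vb{M}$. Restricting any decomposition over $\calC$ to the checks and the first coordinate of each block also gives $\rw(T(\vb{M});\calC)\geq\rw(T(\vb{X}))$, where $\vb{X}$ has columns $\vb{x}_i$. This is unbounded, e.g.\ for the middle-levels graphs of \cref{app:rm-original-width}. So no bound of the form $b\cdot\rw$ holds for the graph you worked with.

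The missing idea is the role of the separate variable vertices in the true $G$. Delete the variable leaves of the given decomposition, suppress unary nodes, and relabel each $u_B$ by $B$. For a node with leaf set $X$, let $C_X,U_X$ and $C_{\overline X},U_{\overline X}$ be the check and weight vertices on each side. The corresponding Tanner side $Y$ consists of $C_X$ and all variables of blocks $B$ with $u_B\in X$. The factor $b$ comes only from displaced variables, those separated from their weight vertex: $D_-=V_{\mathrm{var}}\cap(X\setminus Y)$ and $D_+=V_{\mathrm{var}}\cap(Y\setminus X)$. Set $V_{\mathrm{in}}=V_{\mathrm{var}}\cap X\cap Y$ and $O=V_{\mathrm{var}}\setminus(X\cup Y)$. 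Then the Tanner cut matrix has rank at most $\rank\vb{M}[C_X,O]+\rank\vb{M}[C_{\overline X},V_{\mathrm{in}}]+|D_-|+|D_+|$. Order the rows of $\vb{A}_G[X,\overline X]$ as $(C_X,U_X,D_-,V_{\mathrm{in}})$ and the columns as $(O,D_+,U_{\overline X},C_{\overline X})$. This makes the matrix block upper triangular. Hence $\rho_G(X)$ is at least the same two $\vb{M}$-ranks plus $\rank\vb{W}[U_X,D_+]+\rank\vb{W}[U_{\overline X},D_-]$, where $\vb{W}$ is the weight--variable incidence matrix. The nonzero rows of $\vb{W}$ have disjoint supports of size at most $b$, so $|D_+|\leq b\rank\vb{W}[U_X,D_+]$ and $|D_-|\leq b\rank\vb{W}[U_{\overline X},D_-]$. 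This gives $\rho_{T(\vb{M})}(Y)\leq b\,\rho_G(X)$, and the given-tree bound \cref{eq:decoding-dp-given-tree} then yields $\poly(N)4^{b\,\rw(G)}$.
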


To obtain explicit separations, we further reduce the SOP graph by Gaussian elimination without increasing the decomposition width (\cref{cor:systematic-sop}).
For punctured quantum Reed-Muller codes and a suitable Steane concatenation~\cite{anderson_2014_conversion,hastings_2018_distillation,jochym_2014_concatenated,chamberland_2016_thresholds}, the reduced representation gives polynomial-time MLD under independent single-qubit Pauli noise (\cref{cor:rm-tn-separation}).
Dense pairwise contraction of the corresponding TN representations requires superpolynomial time $2^{\Omega(\sqrt{n}/(\log n)^{3/2})}$.

We also give a nonnegative realization of Rank DP that preserves the table sizes and arithmetic complexity (\cref{prop:nonnegative-realization}).
It uses only nonnegative sums and products, avoiding cancellation and admitting relative roundoff-error bounds under explicit floating-point assumptions (Supplemental \cref{app:numerical-accuracy}).

We complement these theoretical results with numerical experiments assessing the practical performance of Rank DP for exact MLD.
The Reed-Muller experiments illustrate the gap between decomposition widths after reduction and treewidth lower bounds for the original graphs, with full likelihood evaluation up to $1{,}023$ qubits.
Rank DP achieves shorter runtimes than the tested TN implementations on selected code-capacity and circuit-level decoding instances.
We allow additional algebraic optimizations for the TN implementations, whereas Rank DP uses the general algorithm without code-specific algebraic optimization.
On color-code memory circuits, it computes both logical-observable sector probabilities with certified relative accuracy and determines the maximum-likelihood label.
We further use these likelihoods to learn circuit noise models from syndrome data, evaluate rare-event probabilities under postselection, and quantify decoder optimality gaps.

\subsection{Related work}
\label{subsec:intro-related-work}

\paragraph{Maximum-likelihood decoding and partition functions.}
The distinction between the most probable physical error and the most probable logical class is central to degenerate decoding~\cite{pelchat_2013_degenerate}.
Iyer and Poulin~\cite{iyer_2015_hardness} establish the counting complexity of optimal stabilizer decoding.
Chubb and Flammia~\cite{chubb_2021_statistical} develop statistical-mechanical mappings that account for correlations in Pauli noise.
These results motivate treating logical-class probabilities as partition functions and seeking structural conditions for their efficient evaluation.
Exact decoding also has an established connection to graphical representations of codes, including optimal decoding of concatenated quantum codes~\cite{poulin_2006_concatenated} and classical normal and tree realizations~\cite{forney_2001_normal,kashyap_2009_tree}.
Quantum stabilizer trellises provide minimal sequential state representations supporting error estimation and posterior inference~\cite{ollivier_2006_trellises}, while degenerate Viterbi decoding accounts for stabilizer-equivalent errors~\cite{pelchat_2013_degenerate}.
Polynomial exact decoding is also available in special circuit-level settings, including repetition-code noise models admitting a planar partition-function representation~\cite{cao_2025_exact}.
Our nonnegative realization uses the state spaces of linear-code tree realizations and relates the resulting dynamic program to the SOP representation while preserving the local fault blocks.

\paragraph{Tensor-network decoding.}
Bravyi, Suchara, and Vargo~\cite{bravyi_2014_efficient} give an exact surface-code decoder for independent bit-flip and phase-flip noise and an approximate MPS decoder for more general noise.
Chubb~\cite{chubb_2021_general} develops a general TN decoder for two-dimensional stabilizer and subsystem codes under Pauli noise.
Piveteau, Chubb, and Renes~\cite{piveteau_2024_tensor} extend TN decoding to three-dimensional codes and noisy syndrome extraction, including circuit-level noise.
These algorithms use the tensor structure to evaluate or approximate the same logical-class sums considered here.

\paragraph{Learned decoding and noise estimation.}
Learning can improve decoding either by predicting logical labels or by estimating the noise parameters supplied to a decoder.
AlphaQubit, introduced by Bausch et al.~\cite{bausch_2024_learning}, is a recurrent transformer decoder that outperforms the TN baseline reported in the same Sycamore experiments.
Sivak, Newman, and Klimov~\cite{sivak_2024_optimization} instead use reinforcement-learning-inspired optimization to calibrate decoder priors against logical error rates.
Differentiable exact-likelihood methods estimate noise parameters by optimizing syndrome likelihoods or conditional logical-outcome likelihoods~\cite{cao_2026_differentiable,pancotti_2026_exact_learning}.
Complementary methods estimate decoding weights and fault priors from syndrome correlations, including higher-order correlations for hypergraph models~\cite{spitz_2018_adaptive,takou_2025_estimating,remm_2025_correlations}.
Our noise-learning example uses Rank DP as an alternative exact likelihood evaluator for the established syndrome-only estimation objective.

\paragraph{Sums of powers and quantum-circuit simulation.}
SOP represents circuit amplitudes as sums with polynomial phases, providing an algebraic approach to classical simulation~\cite{dawson_2005_polynomial}.
This representation has also been applied to complexity theory~\cite{montanaro_2017_quantum} and quantum circuit analysis~\cite{amy_2019_verification}.
To evaluate these sums, a novel simulation paradigm called FeynmanDD combines SOP representations with decision diagrams~\cite{wang_2025_feynmandd}.
The complexity of this approach is governed by linear rank-width, which permits separations from standard TN contraction~\cite{cheng_2025_breaking}.
Extending this computation from linear layouts to rank-decomposition trees gives a dynamic program parameterized by rank-width~\cite{decolnet_2026_quadratic}.
Here, we apply this framework to the exact evaluation of decoding likelihoods.

\subsection{Organization}

The remaining paper is organized as follows.
In \cref{sec:preliminaries}, we review the stabilizer formalism, binary representation and rank-width.
Then, we specify the noise models and formulate MLD as partition-function evaluation in \cref{sec:decoding}.
This formulation leads to the SOP representation and Rank DP in \cref{sec:sop-alg}, where we analyze the computational complexity and give a nonnegative realization.
To compare this approach with TN methods, \cref{sec:tn-separation} establishes an exponential separation from dense pairwise contraction of the specified representations using Reed-Muller code families.
Numerical comparisons for code-capacity and circuit-level decoding follow in \cref{sec:numerical-experiments}.
We then apply the resulting likelihoods to circuit noise learning, postselection, and decoder assessment in \cref{sec:applications}.
The paper concludes with a discussion of limitations and future directions in \cref{sec:discussion}.
The Supplemental Material contains additional proofs, numerical error analysis, a surface-code example, and experimental details.

\section{Preliminaries}
\label{sec:preliminaries}

In this section, we review stabilizer codes and the classification of Pauli errors by their syndromes and logical actions.
We then introduce the binary symplectic representation, which expresses these classes as affine spaces.
Finally, we recall rank-width and the decomposition convention used in the algorithms below.

\subsection{Stabilizer formalism}

Let $\calH_n=(\CC^2)^{\otimes n}$ be the $n$-qubit Hilbert space and $\sfP_n$ the group of $n$-qubit Pauli operators.
Each $f\in\sfP_n$ has the form $f=cf_1\otimes\cdots\otimes f_n$, where $f_i\in\{I,X,Y,Z\}$ are single-qubit Pauli operators and $c\in\{\pm1,\pm\mathrm{i}\}$ is an overall phase.
An $[[n,k]]$ quantum code is a $2^k$-dimensional subspace of $\calH_n$.
A \emph{stabilizer code} is specified by an Abelian subgroup $\sfS=\langle g_1,\ldots,g_{n-k}\rangle\subseteq\sfP_n$ with $n-k$ independent generators and $-I\notin\sfS$~\cite{gottesman_1997_stabilizer}.
Its code space is the simultaneous $+1$ eigenspace of all stabilizers
\begin{align}
    \calQ=\{\ket{\psi}\in\calH_n:g\ket{\psi}=\ket{\psi}\text{ for all }g\in\sfS\}.
    \label{eq:stabilizer-code}
\end{align}
The \emph{normalizer} of $\sfS$ in $\sfP_n$ consists of Pauli operators commuting with every stabilizer: $\sfN(\sfS) := \{f\in\sfP_n:fg=gf\text{ for all }g\in\sfS\}$.
Its elements preserve the code space $\calQ$.

Since stabilizers fix every code state, Pauli errors in the same coset $f\sfS := \{fg:g\in\sfS\}$ have identical action on $\calQ$.
The syndrome records the $+1$ or $-1$ measurement outcome of each stabilizer generator after an error acts on a code state.
Two errors $f,h\in\sfP_n$ have the same syndrome if they give the same measurement outcome for each stabilizer generator.
That means, the two errors differ by an operator that commutes with every stabilizer, i.e., $f^{\dagger}h\in\sfN(\sfS)$.
Thus, all errors with a fixed syndrome form a coset $t\sfN(\sfS)$, where $t$ is any error with that syndrome.
Choosing a representative $\ell \in \sfN(\sfS)$ for each coset in $\sfN(\sfS)/\sfS$, the coset $\ell \sfS$ labels a \emph{logical Pauli operator}.
Multiplying these cosets by $t$ gives the disjoint union
\begin{align}
    t\sfN(\sfS)=\bigsqcup_{\ell\sfS\in\sfN(\sfS)/\sfS}t\ell\sfS.
\end{align}
Consequently, any Pauli error $f\in\sfP_n$ can be decomposed as $f=t\ell g$, where $t$ represents its syndrome, $\ell$ represents its logical Pauli class relative to $t$, and $g\in\sfS$ is a stabilizer.

\subsection{Binary symplectic representation}

We now translate the operator description into binary linear algebra, so that commutation, syndromes, and logical classes can be expressed using vectors and matrices.

Ignoring scalar phases, represent a Pauli error $f$ by a binary column vector $\vb{e}=(\vb{x}^{\sfT},\vb{z}^{\sfT})^{\sfT}\in\FF_2^{2n}$ through $f\simeq E(\vb{e}):=\bigotimes_{i=1}^n X^{x_i}Z^{z_i}$, where $\simeq$ denotes equality up to phase.
The local pair $\vb{e}_i=(x_i,z_i)^{\sfT}$ represents $I,X,Z,Y$ for $(x_i,z_i)=(0,0),(1,0),(0,1),(1,1)$, respectively.
All binary vectors below are column vectors, and Pauli multiplication becomes vector addition.
For two errors with binary vectors $\vb{e},\vb{e}'$, define the symplectic product $\langle\vb{e},\vb{e}'\rangle_s:=\vb{e}^{\sfT}\vb{\Lambda}\vb{e}'$, where $\vb{\Lambda}:=\left(\begin{smallmatrix}\vb{0}&\vb{I}_n\\\vb{I}_n&\vb{0}\end{smallmatrix}\right)$.
The errors commute when this product is zero and anticommute when it is one.
Let $\vb{H}\in\FF_2^{(n-k)\times2n}$ have the transposed binary vectors of $g_1,\ldots,g_{n-k}$ as its rows.
Their independence and commutativity give $\rank\vb{H}=n-k$ and $\vb{H}\vb{\Lambda}\vb{H}^{\sfT}=\vb{0}$.
The stabilizer group $\sfS$ and its normalizer $\sfN(\sfS)$ are represented by the binary subspaces
\begin{align}
    S=\rowsp(\vb{H}),\qquad N=S^{\perp_s}=\ker(\vb{H}\vb{\Lambda}),
\end{align}
with $S\subseteq N$ and dimensions $n-k$ and $n+k$, respectively.
Here, $S^{\perp_s}$ denotes the orthogonal complement of $S$ with respect to the symplectic product.

The stabilizer coset $f\sfS$ is represented by the affine space $\vb{e}+S$.
Encode the measured eigenvalue of $g_a$ as $(-1)^{s_a}$; the syndrome vector is $\vb{s}=\vb{H}\vb{\Lambda}\vb{e}\in\FF_2^{n-k}$.
Two errors have the same syndrome exactly when their binary vectors differ by an element of $N$.
Let $\vb{r}_{\vb{s}}$ be the binary vector of the chosen representative $t \in \sfP_n$, so $\vb{H}\vb{\Lambda}\vb{r}_{\vb{s}}=\vb{s}$ and the syndrome class is $\vb{r}_{\vb{s}}+N$.
Choose the logical representatives with binary vectors in a linear complement $L$ satisfying $N=S\oplus L$ and $\dim L=2k$.
The coset partition becomes
\begin{align}
    \vb{r}_{\vb{s}}+N=\bigsqcup_{\vb*{\ell}\in L}\bigl(\vb{r}_{\vb{s}}+\vb*{\ell}+S\bigr).
\end{align}
Consequently, the operator decomposition $f=t\ell g$ becomes $\vb{e}=\vb{r}_{\vb{s}}+\vb*{\ell}+\vb{u}$, where $\vb{r}_{\vb{s}}$, $\vb*{\ell}\in L$, and $\vb{u}\in S$ represent $t$, $\ell$, and $g$, respectively.

\subsection{Rank decomposition and rank-width}
\label{subsec:rank-width}

We recall rank decomposition and rank-width~\cite{oum_2005_rankwidth,oum_seymour_2006_approximating}.
Let $G=(V,E)$ be a finite simple undirected graph with binary adjacency matrix $\vb{A}_G\in\FF_2^{|V|\times|V|}$.
For $Y\subseteq V$, write $\overline Y:=V\setminus Y$ and let $\vb{A}_G[Y,\overline Y]$ be the $|Y|\times|\overline Y|$ submatrix whose rows are indexed by $Y$ and columns by $\overline Y$.
The \emph{cut-rank function} of $G$ is
\begin{align}
    \rho_G(Y):=\rank_{\FF_2}\vb{A}_G[Y,\overline Y].
    \label{eq:prelim-cutrank}
\end{align}
This rank is the dimension of the space spanned by the binary neighborhood vectors across the cut.
Since $\vb{A}_G$ is symmetric, $\rho_G(Y)=\rho_G(\overline Y)$, so either side determines the same cut-rank.

A \emph{rank decomposition} of $G$ is a pair $(\calT,\mu)$, where $\calT$ is a rooted binary tree in which every internal node has exactly two children, and $\mu:\leaf(\calT)\to V$ is a bijection.
For each node $u$ of $\calT$, let $Y_u\subseteq V$ be the set of vertices assigned to the leaves of the subtree rooted at $u$.
The node $u$ thus displays the cut $Y_u\mid\overline{Y_u}$.
The width of the decomposition is the largest cut-rank displayed by its nodes:
\begin{align}
    w(\calT,\mu):=\max_{u\in V(\calT)}\rho_G(Y_u).
    \label{eq:rank-decomposition-width}
\end{align}
The \emph{rank-width} of $G$ is the minimum over all its rank decompositions,
\begin{align}
    \rw(G):=\min_{(\calT,\mu)}w(\calT,\mu).
    \label{eq:rank-width}
\end{align}
We set $\rw(G)=0$ when $G$ has at most one vertex.

We also use rank decompositions over a fixed partition $\calP$ of $V$ into nonempty parts.
Here, $\mu:\leaf(\calT)\to\calP$ is a bijection.
For each node $u$ of $\calT$, let $Y_u\subseteq V$ be the union of the parts assigned to the leaves of the subtree rooted at $u$.
The width is also given by \cref{eq:rank-decomposition-width}.
Thus, every displayed cut separates whole parts of $\calP$.
The rank-width of $G$ with respect to $\calP$, denoted by $\rw(G;\calP)$, is the minimum width over all such decompositions.
The singleton partition recovers $\rw(G)$.
This is the graph cut-rank version of grouping ground-set elements into parts in the partitioned-matroid and subspace-arrangement frameworks~\cite{jeong_2017_trellis}.

\section{Maximum likelihood decoding}
\label{sec:decoding}

Maximum-likelihood decoding selects the most probable logical label conditioned on an observed syndrome or detector-deviation vector.
We first specify the code-capacity and circuit-level noise models, then formulate their likelihood computations as a common partition-function evaluation problem.

\subsection{Noise model}
\label{subsec:noise-model}

\paragraph{Local stochastic noise.}
We consider stochastic Pauli channels with independent local factors.
For a single-qubit factor,
\begin{align}
    \calN_i(\rho)&=\sum_{f\in\{I,X,Y,Z\}}p_i(f)f\rho f,
    &p_i(f)&\geq0,\qquad\sum_f p_i(f)=1.
\end{align}
In particular, depolarizing noise has $p_i(I)=1-p$ and $p_i(X)=p_i(Y)=p_i(Z)=p/3$.
The Pauli noise channels may differ between qubits.
Writing a local error as $(x_i,z_i)$ gives the joint weight $p_i(x_i,z_i)$ and the product model $p(\vb{e})=\prod_i p_i(x_i,z_i)$.

More generally, let $d$ denote the total number of binary fault coordinates and write $\vb{e}\in\FF_2^d$ for a fault configuration.
The single-qubit Pauli model above has $d=2n$.
Let $\calB$ partition the coordinate index set $\{1,\ldots,d\}$ into disjoint local blocks.
For each block $B\in\calB$, write $\vb{e}_B\in\FF_2^{|B|}$ for the restriction of $\vb{e}$ to $B$ and assign a nonnegative joint weight $w_B:\FF_2^{|B|}\to\RR_{\geq0}$.
We assume independence between blocks, so a normalized probabilistic model has
\begin{align}
    p(\vb{e})&=\prod_{B\in\calB}w_B(\vb{e}_B),
    &\sum_{\vb{a}\in\FF_2^{|B|}}w_B(\vb{a})&=1\qquad(B\in\calB).
\end{align}
Within each block, $w_B$ is a joint distribution that need not factor over the individual coordinates.
For example, a single-qubit Pauli factor uses $\vb{e}_B=(x,z)^{\sfT}\in\FF_2^2$, whose four values represent $I,X,Z,Y$.
A stochastic two-qubit Pauli location uses $\vb{e}_B=(x_1,z_1,x_2,z_2)^{\sfT}\in\FF_2^4$, whose $2^4=16$ values represent the possible two-qubit Pauli errors.

\paragraph{Code-capacity noise.}
In the code-capacity model, Pauli noise acts on the data qubits, while syndrome extraction and recovery are noiseless.
Define $\vb{M}_s=\vb{H}\vb{\Lambda}$ and $\syn(\vb{e})=\vb{M}_s\vb{e}$, so the measured syndrome is $\vb{s}=\syn(\vb{e})$.
The joint likelihood of syndrome $\vb{s}$ and logical class $\vb*{\ell}\in L$ is
\begin{align}
    P^{\cc}(\vb{s},\vb*{\ell})=\sum_{\vb{u}\in S}p(\vb{r}_{\vb{s}}+\vb*{\ell}+\vb{u}).
    \label{eq:logical-partition}
\end{align}

\paragraph{Circuit-level noise.}
Circuit-level noise includes gate, reset, and measurement faults.
A \emph{detector} checks a measurement parity that is deterministic without faults~\cite{mcewen_2023_relaxing}.
A detector error model (DEM) assigns probabilities to independent error mechanisms and records their detector and logical-observable flips~\cite{stim}.
We now use $\vb{e}$ for the binary fault vector of the entire measurement circuit, rather than the data-qubit Pauli vector of the code-capacity model.
For an independent DEM, $e_j=1$ if error mechanism $j$ occurs, and $e_j=0$ otherwise.
Define the binary detector matrix $\vb{D}$ by $D_{aj}=1$ if fault coordinate $j$ flips detector $a$, and $D_{aj}=0$ otherwise.
Define the binary observable matrix $\vb{O}$ by $O_{bj}=1$ if fault coordinate $j$ flips measured logical observable $b$, and $O_{bj}=0$ otherwise.
Given a Clifford measurement circuit, specified detector and logical-observable parities, and an explicit local Pauli fault model, these matrices can be constructed classically in polynomial time by propagating each fault through the circuit~\cite{stim}.
Let $\vb{s}$ denote the detector deviations from noiseless values computed from the measurement record, and $\vb*{\ell}$ the logical flips to be inferred from $\vb{s}$.
Then, we have
\begin{align}
    \vb{s}=\vb{D}\vb{e},\qquad \vb*{\ell}=\vb{O}\vb{e}.
    \label{eq:circuit-maps}
\end{align}
When the ideal logical outcomes are known, $\vb*{\ell}$ can also be obtained from the measurement record for decoder evaluation.
Summing the local-block weights over fault configurations with these outputs gives the joint likelihood:
\begin{align}
    P^{\cl}(\vb{s},\vb*{\ell})=\sum_{\substack{\vb{e}\in\FF_2^d\\\vb{D}\vb{e}=\vb{s},\;\vb{O}\vb{e}=\vb*{\ell}}}
    \prod_{B\in\calB}w_B(\vb{e}_B).
    \label{eq:circuit-likelihood}
\end{align}
Singleton blocks give the independent DEM case; larger blocks retain local fault correlations.
The detector rows need not be independent.
Exactness is relative to the given stochastic fault model.

\subsection{Decoding via partition functions}
\label{subsec:decoding}

We formulate the two decoding rules and identify the partition-function evaluation problem underlying their likelihoods.

\paragraph{Code capacity.}
For an observed syndrome $\vb{s}$, the compatible logical classes are $C_{\vb*{\ell}}(\vb{s})=\vb{r}_{\vb{s}}+\vb*{\ell}+S$, with $\vb*{\ell}\in L$.
A Pauli recovery from $C_{\vb*{\ell}}(\vb{s})$ corrects the error up to a stabilizer if and only if the actual error belongs to that class.
Maximum-likelihood decoding therefore selects the class with the greatest total probability in \cref{eq:logical-partition}~\cite{bravyi_2014_efficient,cao_2026_maximum}:
\begin{align}
    \widehat{\vb*{\ell}}_{\mathrm{MLD}}
    \in\argmax_{\vb*{\ell}\in L}P^{\cc}(\vb{s},\vb*{\ell}).
    \label{eq:mld}
\end{align}

\paragraph{Circuit level.}
Let $q$ be the number of rows of $\vb{O}$, so $\vb*{\ell}\in\FF_2^q$ records the specified logical-observable flips.
Using the joint likelihood in \cref{eq:circuit-likelihood}, MLD selects
\begin{align}
    \widehat{\vb*{\ell}}_{\mathrm{MLD}}
    \in\argmax_{\vb*{\ell}\in\FF_2^q}P^{\cl}(\vb{s},\vb*{\ell}).
    \label{eq:circuit-mld}
\end{align}
This rule maximizes the probability of correctly identifying the entire flip vector specified by $\vb{O}$.
If $\vb{O}$ records only one final measured memory observable, its two labels need not specify the entire residual logical Pauli error.
For $\nu\in\{\cc,\cl\}$, the marginal probability of the observed vector is $P^\nu(\vb{s})=\sum_{\vb*{\ell}}P^\nu(\vb{s},\vb*{\ell})$, where the sum runs over $L$ for $\nu=\cc$ and $\FF_2^q$ for $\nu=\cl$.
For $P^\nu(\vb{s})>0$, the logical posterior is $\pi^\nu(\vb*{\ell}\mid\vb{s})=P^\nu(\vb{s},\vb*{\ell})/P^\nu(\vb{s})$, so the common denominator does not affect the maximizing label.

\paragraph{Decoding failure probability.}
For code-capacity decoding, let $\vb{s}$ have positive probability under the true noise model, and let a decoder $\delta$ select a logical label $\delta(\vb{s})\in L$.
Define
\begin{align}
    R_*(\vb{s})&=1-\max_{\vb*{\ell}\in L}\pi^{\cc}(\vb*{\ell}\mid\vb{s}),
    &R_\delta(\vb{s})&=1-\pi^{\cc}(\delta(\vb{s})\mid\vb{s}),\nonumber\\
    \Delta_\delta(\vb{s})&=R_\delta(\vb{s})-R_*(\vb{s})
        =\max_{\vb*{\ell}\in L}\pi^{\cc}(\vb*{\ell}\mid\vb{s})
         -\pi^{\cc}(\delta(\vb{s})\mid\vb{s}).
    \label{eq:decoder-regret}
\end{align}
Here, $R_\delta(\vb{s})$ is the conditional failure probability of $\delta$, and $R_*(\vb{s})$ is the minimum conditional failure probability, attained by MLD.
Thus, $\Delta_\delta(\vb{s})\geq0$ is the decoder's excess conditional failure probability relative to MLD.
Averaging $R_\delta$ and $\Delta_\delta$ over the true syndrome distribution gives the overall failure probability and excess failure probability relative to MLD, respectively.

\paragraph{Likelihood evaluation.}
We express the likelihood computations above as the following partition-function evaluation problem with binary linear constraints.
This partition-function formulation of maximum-likelihood decoding was considered by Chubb and Flammia for correlated Pauli noise~\cite{chubb_2021_statistical}.

\begin{definition}[Partition function evaluation]
\label{def:partition-function}
Given a binary matrix $\vb{M}\in\FF_2^{m\times d}$, a target $\vb{t}\in\FF_2^m$, a partition $\calB$ of $\{1,\ldots,d\}$, and finite nonnegative weights $w_B:\FF_2^{|B|}\to\RR_{\geq0}$ for each $B\in\calB$, compute the partition function
\begin{align}
    Z_{\vb{M}}(\vb{t};\vb{w})
    :=\sum_{\substack{\vb{e}\in\FF_2^d\\\vb{M}\vb{e}=\vb{t}}}
    \prod_{B\in\calB}w_B(\vb{e}_B),
    \label{eq:inference-primitive}
\end{align}
where $\vb{w}=(w_B)_{B\in\calB}$ and $\vb{e}_B$ is the restriction of $\vb{e}$ to $B$.
\end{definition}

When the weights depend on noise parameters $\theta$, we also write $Z_{\vb{M}}(\vb{t};\theta):=Z_{\vb{M}}(\vb{t};\vb{w}(\theta))$.
For code capacity, choose a full-row-rank matrix $\vb{P}\in\FF_2^{(n+k)\times2n}$ with $\ker\vb{P}=S$, and set $\vb{t}_{\vb*{\ell}}=\vb{P}(\vb{r}_{\vb{s}}+\vb*{\ell})$.
The constraint $\vb{P}\vb{e}=\vb{t}_{\vb*{\ell}}$ selects exactly $C_{\vb*{\ell}}(\vb{s})$, giving $P^{\cc}(\vb{s},\vb*{\ell})=Z_{\vb{P}}(\vb{t}_{\vb*{\ell}};\theta)$.
For the circuit model, take
\begin{align*}
    \vb{M}&=\begin{pmatrix}\vb{D}\\\vb{O}\end{pmatrix},
    &\vb{t}&=\begin{pmatrix}\vb{s}\\\vb*{\ell}\end{pmatrix},
    &P^{\cl}(\vb{s},\vb*{\ell})&=Z_{\vb{M}}(\vb{t};\theta).
\end{align*}

MLD can thus be implemented by evaluating the candidate likelihoods and selecting a maximizer.
Direct enumeration uses $4^k$ candidate classes for full-Pauli code-capacity decoding ($2^k$ for a binary component) and at most $2^q$ flip vectors for the circuit task.
A single likelihood value does not in general determine the maximizing label.
Approximate likelihoods suffice when their error bounds establish a maximizing label; otherwise, the largest estimate need not give an MLD decision (\cref{subsec:output-guarantees}).

\section{Rank DP for maximum-likelihood decoding}
\label{sec:sop-alg}

In this section, we reduce the partition-function evaluation problem in \cref{def:partition-function} to a quadratic sum-of-powers (SOP) for both code-capacity and circuit-level noise.
We then review the rank-decomposition dynamic program of de Colnet et al.~\cite{decolnet_2026_quadratic} in our notation, adapting its initialization to retain the local joint weights.
Together, the reduction and the dynamic program give a common arithmetic complexity bound for likelihood evaluation and, by comparing logical labels, maximum-likelihood decoding.

\subsection{Reduction to quadratic sums of powers}

For a simple graph $G$, a partition $\calC$ of $V(G)$, and local weights $g_C:\FF_2^C\to\CC$, define the quadratic SOP
\begin{align}
    \calS_{\calC}(G;g)
    :=\sum_{\vb{q}\in\FF_2^{V(G)}}
        (-1)^{\sum_{(u,v)\in E(G)}q_uq_v}
        \prod_{C\in\calC}g_C(\vb{q}_C).
    \label{eq:quadratic-sop}
\end{align}
For singleton blocks, we abbreviate $\calS_{\calC}(G;g)$ as $\calS(G;g)$ and write $g_v:=g_{\{v\}}$.
For $\vb{M}\in\FF_2^{m\times d}$, let $T(\vb{M})$ be the Tanner graph with check nodes $c_1,\ldots,c_m$, variable nodes $v_1,\ldots,v_d$, and edge $(c_a,v_j)$ exactly when $M_{aj}=1$.

\begin{lemma}[Partition-function reduction to quadratic SOPs]
\label{lem:partition-sop}
For every instance $\vb{M},\vb{t},\calB,\vb{w}$ of \cref{def:partition-function}, let $\calC = \calB \cup \{ \{c_1\}, \ldots, \{c_m\} \}$ be a partition of the $m+d$ vertices of $T(\vb{M})$ into the given coordinate blocks and singleton check nodes.
Define the local weights $g_B(\vb{e}_B):=w_B(\vb{e}_B)$ and $g_{c_a}(\lambda_a):=(-1)^{t_a\lambda_a}$.
Then,
\begin{align}
    Z_{\vb{M}}(\vb{t};\vb{w})
    =2^{-m}\calS_{\calC}\bigl(T(\vb{M});g\bigr).
    \label{eq:partition-sop}
\end{align}
\end{lemma}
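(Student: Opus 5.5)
The identity follows from the Fourier (character) representation of the parity constraints: sum over the check variables first, with the fault variables held fixed.

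\emph{Expanding the exponent.} The vertex set of $T(\vb{M})$ is $\{c_1,\ldots,c_m\}\cup\{v_1,\ldots,v_d\}$. Write a configuration $\vb{q}\in\FF_2^{V}$ as $(\vb*{\lambda},\vb{e})$ with $\vb*{\lambda}\in\FF_2^m$ on the checks and $\vb{e}\in\FF_2^d$ on the variables. The Tanner graph is bipartite with edges $(c_a,v_j)$ exactly when $M_{aj}=1$. The quadratic exponent is therefore
\begin{align*}
\sum_{(u,v)\in E}q_uq_v=\sum_{a,j}M_{aj}\lambda_a e_j=\vb*{\lambda}^{\sfT}\vb{M}\vb{e}\pmod 2.
\end{align*}
The sign is well defined because only the parity matters. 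Since $\calC$ is the union of the blocks $\calB$ and the singleton checks, the product of local weights factorizes as
\begin{align*}
\prod_{C\in\calC}g_C(\vb{q}_C)=\prod_{B\in\calB}w_B(\vb{e}_B)\cdot(-1)^{\vb{t}^{\sfT}\vb*{\lambda}}.
\end{align*}

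\emph{Summing out the checks.} Substituting into \cref{eq:quadratic-sop} and exchanging the finite sums gives
\begin{align*}
\calS_{\calC}(T(\vb{M});g)=\sum_{\vb{e}\in\FF_2^d}\prod_{B\in\calB}w_B(\vb{e}_B)\sum_{\vb*{\lambda}\in\FF_2^m}(-1)^{\vb*{\lambda}^{\sfT}(\vb{M}\vb{e}+\vb{t})}.
\end{align*}
The inner sum is a character sum, which factorizes over coordinates as $\prod_{a=1}^m\bigl(1+(-1)^{(\vb{M}\vb{e}+\vb{t})_a}\bigr)$. This product equals $2^m$ when $\vb{M}\vb{e}=\vb{t}$ and $0$ otherwise. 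The double sum therefore collapses to $2^m Z_{\vb{M}}(\vb{t};\vb{w})$, and dividing by $2^m$ gives \cref{eq:partition-sop}.

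\emph{Expected difficulty.} There is no real obstacle. The only care needed is bookkeeping:
\begin{itemize}
\item The graph has no check--check or variable--variable edges, so no extra quadratic terms appear.
\item Each block $B\in\calB$ consists only of variable nodes, so $g_B$ sees exactly $\vb{e}_B$.
\item The sign $(-1)^{t_a\lambda_a}$ combines with the edge terms into the single character $(-1)^{\lambda_a(\vb{M}\vb{e}+\vb{t})_a}$.
\end{itemize}
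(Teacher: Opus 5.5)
Your proposal is correct and is essentially the paper's argument. The paper inserts the Fourier identity $\mathbf{1}[\vb{M}\vb{e}=\vb{t}]=2^{-m}\sum_{\vb*{\lambda}}(-1)^{\vb*{\lambda}^{\sfT}(\vb{M}\vb{e}+\vb{t})}$ into $Z_{\vb{M}}(\vb{t};\vb{w})$ and identifies $\vb*{\lambda}^{\sfT}\vb{M}\vb{e}$ as the edge polynomial of $T(\vb{M})$; your computation is the same identity read in the opposite direction, with the coordinatewise factorization of the character sum written out explicitly.
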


\begin{proof}
For each $\vb{e}\in\FF_2^d$, the binary Fourier identity gives
\begin{align}
    \mathbf{1}[\vb{M}\vb{e}=\vb{t}]
    =2^{-m}\sum_{\vb*{\lambda}\in\FF_2^m}
        (-1)^{\vb*{\lambda}^{\sfT}(\vb{M}\vb{e}+\vb{t})}.
    \label{eq:character-identity}
\end{align}
Multiplying \cref{eq:character-identity} by the local weights and summing over $\vb{e}$ yields
\begin{align}
    Z_{\vb{M}}(\vb{t};\vb{w})
    =2^{-m}\sum_{\substack{\vb*{\lambda}\in\FF_2^m\\\vb{e}\in\FF_2^d}}
        (-1)^{\vb*{\lambda}^{\sfT}\vb{M}\vb{e}}
        \prod_{a=1}^m g_{c_a}(\lambda_a)
        \prod_{B\in\calB}w_B(\vb{e}_B).
    \label{eq:partition-fourier}
\end{align}
The bilinear exponent is the edge polynomial of $T(\vb{M})$, with variables $\lambda_a$ at $c_a$ and $e_j$ at $v_j$.
The remaining factors are exactly the local weights $g_C$, proving \cref{eq:partition-sop}.
Constructing the graph and assigning these weights takes polynomial time in the explicit input size.
\end{proof}

Below, we give the SOP representation of the likelihood for both code-capacity and circuit-level noise.

\paragraph{Code-capacity noise.}
Take $\vb{M}=\vb{P}$ and $\vb{t}=\vb{t}_{\vb*{\ell}}$ from \cref{subsec:decoding}.
Applying \cref{lem:partition-sop} gives
\begin{align}
    P^{\cc}(\vb{s},\vb*{\ell})
    =2^{-(n+k)}\calS_{\calC}\bigl(T(\vb{P});g^{(\vb{s},\vb*{\ell})}\bigr).
    \label{eq:capacity-scalar-sop}
\end{align}
For independent single-qubit Pauli noise, let $B_i:=\{i,n+i\}$ be the coordinate block of qubit $i$.
The weights in \cref{eq:capacity-scalar-sop} are
\begin{align*}
    g_{B_i}^{(\vb{s},\vb*{\ell})}(x_i,z_i)
    &=p_i(x_i,z_i),\qquad 1\leq i\leq n,\\
    g_{c_a}^{(\vb{s},\vb*{\ell})}(\lambda_a)
    &=(-1)^{[\vb{P}(\vb{r}_{\vb{s}}+\vb*{\ell})]_a\lambda_a},
        \qquad 1\leq a\leq n+k.
\end{align*}
The graph $T(\vb{P})$ has $3n+k$ vertices.

\paragraph{Circuit-level noise.}
Take the stacked detector/observable matrix and target from \cref{subsec:decoding}.
If $\vb{D}$ has $m_{\mathrm{det}}$ rows and $\vb{O}$ has $q$ rows, then $m=m_{\mathrm{det}}+q$ and \cref{lem:partition-sop} gives
\begin{align}
    P^{\cl}(\vb{s},\vb*{\ell})
    =2^{-m}\calS_{\calC}\bigl(T(\vb{M});g^{(\vb{s},\vb*{\ell})}\bigr).
    \label{eq:circuit-sop}
\end{align}
The graph $T(\vb{M})$ has $m+d$ vertices.
For an independent DEM, all blocks are singletons, and the check weights are $g_{c_a}^{(\vb{s},\vb*{\ell})}(\lambda_a)=(-1)^{t_a\lambda_a}$ for $1\leq a\leq m$, where $\vb{t}=(\vb{s}^{\sfT},\vb*{\ell}^{\sfT})^{\sfT}$.
The coordinate weights are
\begin{align}
    g_{v_j}^{(\vb{s},\vb*{\ell})}(e_j)=
        \begin{cases}
            1-p_j(\theta),&e_j=0,\\
            p_j(\theta),&e_j=1,
        \end{cases}
        \qquad 1\leq j\leq d,
    \label{eq:dem-sop-weights}
\end{align}
where $d$ is the number of independent error mechanisms in the DEM, one per column of $\vb{M}$, and $p_j(\theta)$ is the occurrence probability of mechanism $j$ for noise parameters $\theta$.
Thus, the singleton specialization is
\begin{align}
    P^{\cl}(\vb{s},\vb*{\ell})
    =2^{-m}\calS\bigl(T(\vb{M});g^{(\vb{s},\vb*{\ell})}\bigr).
    \label{eq:dem-scalar-sop}
\end{align}

\subsection{Rank-decomposition dynamic programming}
\label{sec:dp-sop}

De Colnet et al.~\cite{decolnet_2026_quadratic} give a rank-decomposition dynamic programming (Rank DP) algorithm to evaluate a quadratic SOP and proved that its complexity is exponential on the rank-width of the underlying graph.
Here, we adapt their algorithm to the partition function evaluation problem in \cref{def:partition-function}.

Recall that we are given a graph $G$, a partition $\calC$ of its vertices, and local weights $g_C:\FF_2^C\to\CC$, which define a quadratic SOP $\calS_{\calC}(G;g)$ in \cref{eq:quadratic-sop}.
Let $\mathcal T$ be a rooted binary tree whose leaves are the blocks $C\in\calC$, and let $Y_u\subseteq V(G)$ be the union of the blocks below a node $u$.
For $Y\subseteq V(G)$, write $\overline Y:=V(G)\setminus Y$ and let $\vb{A}_G$ be the binary adjacency matrix of $G$.
The boundary signature and cut-rank are
\begin{align}
    \vb*{\sigma}_Y(\vb{q}_Y)
        &:=\vb{A}_G[Y,\overline Y]^{\sfT}\vb{q}_Y,
    \label{eq:sop-signature}\\
    \rho_G(Y)&:=\rank_{\FF_2}\vb{A}_G[Y,\overline Y].
    \label{eq:cutrank} 
\end{align}
The signature determines the quadratic sign contributed by all edges crossing the cut, so configurations with the same signature can be summed.
There are $2^{\rho_G(Y)}$ attainable signatures, and we set $\omega:=\max_u\rho_G(Y_u)$.
For singleton blocks this is the width of a rooted rank-decomposition; minimizing over such decompositions gives the usual rank-width~\cite{oum_2005_rankwidth}.
For general blocks, the given tree keeps each joint factor intact.

Let $\phi_Y(\vb{q}_Y):=\sum_{(v,v')\in E(G[Y])}q_vq_{v'}\in\FF_2$.
The table at $u$ is the partial SOP
\begin{align}
    F_u(\vb*{\sigma})
    :=\sum_{\substack{\vb{q}_{Y_u}\in\FF_2^{Y_u}\\
                       \vb*{\sigma}_{Y_u}(\vb{q}_{Y_u})=\vb*{\sigma}}}
        (-1)^{\phi_{Y_u}(\vb{q}_{Y_u})}
        \prod_{\substack{C\in\calC\\C\subseteq Y_u}}g_C(\vb{q}_C).
    \label{eq:sop-dp-table}
\end{align}
As we traverse the tree in postorder (children before parents), we compute the table at each internal node from the tables at its children.
For a node $u$ with children $u_1,u_2$, representatives $\vb{a},\vb{b}$ of signatures $\vb*{\alpha},\vb*{\beta}$ determine the parent signature and crossing parity by
\begin{align}
    \Gamma_u(\vb*{\alpha},\vb*{\beta})
        &:=\vb*{\alpha}|_{\overline{Y_u}}+
            \vb*{\beta}|_{\overline{Y_u}},
    \label{eq:sop-dp-parent-signature}\\*
    \chi_u(\vb*{\alpha},\vb*{\beta})
        &:=\vb{a}^{\sfT}\vb{A}_G[Y_{u_1},Y_{u_2}]\vb{b}.
    \label{eq:sop-dp-maps}
\end{align}
As proved in \cite{decolnet_2026_quadratic}, $\chi_u$ is independent of the chosen representatives because each child signature fixes its parity contribution to the other child.
With these notations, \cref{alg:sop-dp} presents the dynamic program for evaluating $\calS_{\calC}(G;g)$.

\begin{algorithm}[htbp]
\caption{Block-weighted quadratic SOP evaluation, adapted from~\cite{decolnet_2026_quadratic}.}
\label{alg:sop-dp}
\begin{algorithmic}[1]
\Require $G$, $\calC$, weights $g_C$, and a rooted block tree $\mathcal T$ with root $u_\star$
\Ensure $\calS_{\calC}(G;g)$
\State Precompute signature coordinates, leaf parities, and the maps $\Gamma_u,\chi_u$
\For{each leaf $u$ corresponding to a block $C$}
    \State Initialize $F_u(\vb*{\sigma})\gets0$ for every attainable signature
    \For{each $\vb{q}_C\in\FF_2^C$}
        \State $\vb*{\sigma}\gets\vb*{\sigma}_C(\vb{q}_C)$
        \State $F_u(\vb*{\sigma})\mathrel{+}=(-1)^{\phi_C(\vb{q}_C)}g_C(\vb{q}_C)$
    \EndFor
\EndFor
\For{each internal node $u$ in postorder, with children $u_1,u_2$}
    \State Initialize $F_u(\vb*{\sigma})\gets0$ for every attainable signature
    \For{each pair of attainable child signatures $(\vb*{\alpha},\vb*{\beta})$}
        \State $\vb*{\sigma}\gets\Gamma_u(\vb*{\alpha},\vb*{\beta})$
        \State $F_u(\vb*{\sigma})\mathrel{+}=(-1)^{\chi_u(\vb*{\alpha},\vb*{\beta})}F_{u_1}(\vb*{\alpha})F_{u_2}(\vb*{\beta})$
    \EndFor
\EndFor
\State \Return $F_{u_\star}(\varnothing)$
\end{algorithmic}
\end{algorithm}

\subsection{Application to decoding and complexity}
\label{subsec:sop-complexity}

We now bound the evaluation cost in terms of the rank-width of the Tanner graph with respect to its block partition.
The following consequence of the Oum-Seymour approximation algorithm includes the construction of a decomposition that respects this partition.

\begin{proposition}[Constructing a rank decomposition over a partition]
\label{prop:partition-rank-decomposition}
Let $G$ be a graph with $n\geq1$ vertices, and let $\calP$ be a partition of $V(G)$ into nonempty parts.
Set $r:=\rw(G;\calP)$ and $c:=\max_{P\in\calP}\rho_G(P)$.
A rank decomposition of $G$ over $\calP$ of width at most $3r+c\leq4r$ can be constructed deterministically in time $\poly(n)2^{O(r)}$.
\end{proposition}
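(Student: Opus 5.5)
We would reduce this to the standard Oum--Seymour machinery for connectivity functions, applied to a function on the index set $\calP$ rather than on $V(G)$. For $\calX\subseteq\calP$, define $f(\calX):=\rho_G\bigl(\bigcup_{P\in\calX}P\bigr)$.

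\textbf{Step 1: $f$ is a connectivity function.} We would check that $f$ is symmetric, submodular and integer-valued with $f(\varnothing)=0$. Symmetry holds because $\rho_G(Y)=\rho_G(\overline Y)$. Submodularity is inherited from the cut-rank function $\rho_G$, which is submodular on subsets of $V(G)$: unions commute with the union map, and intersections do too, because the parts are disjoint. Moreover, decompositions of $\calP$ in the Oum--Seymour sense, meaning subcubic trees whose leaves are bijective with $\calP$, are exactly rank decompositions of $G$ over $\calP$ up to rooting. Hence the branch-width of $f$ equals $\rw(G;\calP)=r$. Rooting changes nothing, since one can subdivide an edge to create a root; the root then displays a trivial cut, and every rooted cut is already a tree-edge cut.

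\textbf{Step 2: invoke the approximation algorithm.} The Oum--Seymour algorithm takes $f$ given by an oracle with branch-width at most $k$. In time polynomial in $|\calP|$, and with $2^{O(k)}$ oracle calls per step, it either outputs a decomposition of width at most $3k+1$, or certifies that the branch-width exceeds $k$. Iterating over $k=0,1,2,\dots$ finds some $k\le r$ with a successful output, and the total cost is $\poly(n)2^{O(r)}$. Each oracle call is a rank computation over $\FF_2$, which takes $O(n^3)$ time.

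\textbf{Step 3: the additive term.} The bound $3r+c$ rather than $3r+1$ needs care. The additive constant in Oum--Seymour is $\max_{x}f(\{x\})$, the connectivity of a single ground element. For the singleton partition this is at most $1$, which is where the ``$+1$'' comes from. For general parts, each leaf $P$ has $f(\{P\})=\rho_G(P)\le c$. So I would rerun their tangle/well-linked argument tracking $\max_x f(\{x\})$ instead of $1$; equivalently, I would cite the general form of the theorem, which yields width at most $3k+\max_x f(\{x\})$. If only the $+1$ version is available, I would fall back on the exact statement of Oum--Seymour for arbitrary connectivity functions, which already includes this term.

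\textbf{Step 4: the final inequality.} Finally, $c\le r$, because in any decomposition over $\calP$ each leaf displays the cut $P\mid\overline P$. This gives $3r+c\le 4r$.

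The main obstacle is Step 3: pinning down the exact additive term of the approximation theorem for non-singleton ground elements, and confirming that the running-time analysis depends only on $k$ and $|\calP|$, not on the sizes of the parts.
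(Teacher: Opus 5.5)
Your proposal is correct and follows the paper's proof: the paper defines the same connectivity function $f(\calS)=\rho_G(Y_{\calS})$ on $\calP$, identifies its branch-width with $r$, and invokes the Oum--Seymour algorithm in its connectivity-function form (cited from Korhonen--Oum), which gives width at most $3k+c$, with $c\le r$ closing the bound. The paper adds two small refinements: it starts the search at $k=c$, because that form of the theorem assumes $k\ge c$ and every smaller $k$ is trivially infeasible, and it handles the case $c=0$ (which includes $|\calP|=1$) separately, since then any tree has width zero.
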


The proof is given in \cref{proof:partition-rank-decomposition}.
Combining \cref{prop:partition-rank-decomposition} with the dynamic program gives the following complexity bound, including decomposition construction and structural preprocessing.
It applies to both code-capacity and circuit-level likelihoods through the choices of $\vb{M}$ and $\vb{t}$ in \cref{subsec:decoding}.

\begin{theorem}[Partition-function evaluation complexity]
\label{thm:sop-dp}
Given $\vb{M}\in\FF_2^{m\times d}$, $\vb{t}\in\FF_2^m$, a partition $\calB$ of $\{1,\ldots,d\}$, and finite nonnegative local weight tables $\vb{w}=(w_B)_{B\in\calB}$, consider the partition function $Z_{\vb{M}}(\vb{t};\vb{w})$ in \cref{def:partition-function}.
Set $\calC:=\calB\cup\{\{c_1\},\ldots,\{c_m\}\}$, $r:=\rw(T(\vb{M});\calC)$, and $N:=m+d+\sum_{B\in\calB}2^{|B|}$.
Then, $Z_{\vb{M}}(\vb{t};\vb{w})$ can be computed exactly using \cref{alg:sop-dp} in $\poly(N)2^{O(r)}$ time, including decomposition construction and structural preprocessing.
The numeric DP tables contain $O((m+|\calB|)2^{O(r)})$ entries in total, excluding input weights, structural data, and workspace.
\end{theorem}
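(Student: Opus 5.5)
The proof combines three ingredients: the reduction in \cref{lem:partition-sop}, the decomposition construction in \cref{prop:partition-rank-decomposition}, and a correctness-plus-cost analysis of \cref{alg:sop-dp}.

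\emph{Reduction and decomposition.} By \cref{lem:partition-sop}, $Z_{\vb{M}}(\vb{t};\vb{w})=2^{-m}\calS_{\calC}(T(\vb{M});g)$. The graph and the weights $g$ are built in time polynomial in $N$. The Tanner graph has $n=m+d\le N$ vertices. Apply \cref{prop:partition-rank-decomposition} to $G=T(\vb{M})$ with $\calP=\calC$. This gives, deterministically in time $\poly(N)2^{O(r)}$, a rank decomposition $(\calT,\mu)$ over $\calC$ of width $\omega\le 4r$. Rooting it (or using it as already rooted) gives a rooted binary block tree. It has $|\calC|=m+|\calB|$ leaves, hence $2(m+|\calB|)-1$ nodes.

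\emph{Correctness.} Argue by induction in postorder that each table $F_u$ equals the partial SOP \cref{eq:sop-dp-table}.

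For a leaf with block $C$, the initialization loop is exactly \cref{eq:sop-dp-table} with $Y_u=C$.

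For an internal node with children $u_1,u_2$, we have $Y_u=Y_{u_1}\sqcup Y_{u_2}$. Two facts drive the step:
\begin{itemize}
\item $\phi_{Y_u}=\phi_{Y_{u_1}}+\phi_{Y_{u_2}}+\vb{q}_{Y_{u_1}}^{\sfT}\vb{A}_G[Y_{u_1},Y_{u_2}]\vb{q}_{Y_{u_2}}$.
\item $\vb*{\sigma}_{Y_u}(\vb{q}_{Y_u})$ is the sum of the two child signatures restricted to $\overline{Y_u}$, as in $\Gamma_u$.
\end{itemize}
The crossing term depends only on the child signatures, because $\vb{q}_{Y_{u_1}}^{\sfT}\vb{A}[Y_{u_1},Y_{u_2}]=\vb*{\alpha}|_{Y_{u_2}}^{\sfT}$ is determined by $\vb*{\alpha}$. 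So the crossing parity equals $\vb*{\alpha}|_{Y_{u_2}}^{\sfT}\vb{b}$. The same reasoning shows this is independent of the representative $\vb{b}$: two representatives $\vb{b},\vb{b}'$ of $\vb*{\beta}$ differ by an element of the kernel of $\vb{A}[Y_{u_2},\overline{Y_{u_2}}]^{\sfT}$, and $\vb*{\alpha}|_{Y_{u_2}}$ lies in the column space of that matrix, so the pairing with $\vb{b}-\vb{b}'$ vanishes.

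Grouping configurations by the pair $(\vb*{\alpha},\vb*{\beta})$ then yields the update rule of \cref{alg:sop-dp}. At the root, $\overline{Y_{u_\star}}=\varnothing$, so $F_{u_\star}(\varnothing)=\calS_{\calC}$. Multiplying by $2^{-m}$ gives $Z$.

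\emph{Cost.} Attainable signatures at $u$ form the column space of $\vb{A}_G[Y_u,\overline{Y_u}]^{\sfT}$, so there are $2^{\rho_G(Y_u)}\le 2^{\omega}$ of them. The total table size is therefore $O((m+|\calB|)2^{4r})$, which gives the stated entry bound.

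The work per node is as follows.
\begin{itemize}
\item Each leaf enumerates $2^{|C|}$ assignments. Blocks contribute $2^{|B|}$ each and checks contribute $2$ each, so the leaves cost $\poly(N)$ in total.
\item Each internal node processes $2^{2\omega}$ pairs, each at $\poly(N)$ cost, so the internal nodes cost $\poly(N)2^{O(r)}$ in total.
\end{itemize}

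\emph{Structural preprocessing} (the main obstacle). We must represent signatures compactly and compute $\Gamma_u,\chi_u$ without an exponential blowup beyond $2^{O(r)}$. The plan is as follows.
\begin{itemize}
\item For each $u$, compute by Gaussian elimination a basis of the row space of $\vb{A}_G[Y_u,\overline{Y_u}]$. Encode each signature by its coordinates in this basis, a vector in $\FF_2^{\rho_G(Y_u)}$, and index the table by it.
\item Store one representative $\vb{q}$ per basis vector.
\item Then $\Gamma_u$ is linear: precompute the $\rho\times\rho$ matrices mapping child coordinates to parent coordinates, by expressing the restricted child basis vectors in the parent basis.
\item $\chi_u$ is a bilinear form: precompute the $\rho_1\times\rho_2$ matrix of $\vb{a}_i^{\sfT}\vb{A}[Y_{u_1},Y_{u_2}]\vb{b}_j$ over the basis representatives.
\end{itemize}
All of this takes $\poly(N)$ per node. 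Each pair evaluation afterwards costs $O(\omega^2)$ bit operations.

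Since the weights enter only through $g$ and the arithmetic is exact, the total running time is $\poly(N)2^{O(r)}$, including the decomposition construction.
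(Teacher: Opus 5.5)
Your proposal is correct and follows essentially the same route as the paper. Both construct a width-$\le 4r$ decomposition via \cref{prop:partition-rank-decomposition}, apply \cref{lem:partition-sop}, verify the recurrence by postorder induction, and bound the leaf, join, preprocessing and table costs. Your explicit well-definedness argument for $\chi_u$ and your coordinate encoding of signatures fill in what the paper delegates to de Colnet et al.\ and to ``polynomial-time binary linear algebra.''

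One small slip: $\vb*{\alpha}|_{Y_{u_2}}$ lies in the column space of $\vb{A}_G[Y_{u_2},\overline{Y_{u_2}}]$, not of its transpose. That column space is the orthogonal complement of $\ker\vb{A}_G[Y_{u_2},\overline{Y_{u_2}}]^{\sfT}$, which is what your argument needs.
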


\begin{proof}
\cref{prop:partition-rank-decomposition} constructs a rank-decomposition tree $\calT$ of $T(\vb{M})$ over $\calC$ of width $\omega\leq4r$ in time $\poly(N)2^{O(r)}$.
By \cref{lem:partition-sop}, $Z_{\vb{M}}(\vb{t};\vb{w})=2^{-m}\calS_{\calC}(T(\vb{M});g)$, where $g_B=w_B$ and $g_{c_a}(\lambda_a)=(-1)^{t_a\lambda_a}$.
Leaf initialization gives \cref{eq:sop-dp-table}.
At an internal node, each block below it belongs to one child subtree, so \cref{eq:sop-dp-parent-signature,eq:sop-dp-maps} allow us to group \cref{eq:sop-dp-table} by attainable child signatures:
\begin{align}
    F_u(\vb*{\sigma})
    =\sum_{\substack{\vb*{\alpha},\vb*{\beta}\\
        \Gamma_u(\vb*{\alpha},\vb*{\beta})=\vb*{\sigma}}}
        (-1)^{\chi_u(\vb*{\alpha},\vb*{\beta})}
        F_{u_1}(\vb*{\alpha})F_{u_2}(\vb*{\beta}).
    \label{eq:sop-dp-recurrence}
\end{align}
This is the update in \cref{alg:sop-dp}, so induction gives $F_{u_\star}(\varnothing)=\calS_{\calC}(T(\vb{M});g)$.
Multiplication by $2^{-m}$ then gives $Z_{\vb{M}}(\vb{t};\vb{w})$.

For a graph $G$ with block partition $\calC$ and a given tree of width $\omega$, each leaf enumerates $2^{|C|}$ configurations and each internal node considers $2^{\rho_G(Y_{u_1})+\rho_G(Y_{u_2})}$ pairs of child signatures.
Hence, the dynamic program uses
\begin{align}
    O\!\left(\sum_{C\in\calC}2^{|C|}
        +\sum_{u\text{ internal}}
            2^{\rho_G(Y_{u_1})+\rho_G(Y_{u_2})}\right)
    \subseteq O\!\left(\sum_{C\in\calC}2^{|C|}+|\calC|4^\omega\right)
    \label{eq:sop-dp-cost}
\end{align}
scalar operations.
Here, the tree has $|\calC|$ leaves and hence $|\calC|-1$ internal nodes.
Each summand over internal nodes is at most $2^{2\omega}=4^\omega$, which gives the second bound.
For $G=T(\vb{M})$, we have $|\calC|=m+|\calB|$ and $\sum_{C\in\calC}2^{|C|}=2m+\sum_{B\in\calB}2^{|B|}$.
Substituting these identities into \cref{eq:sop-dp-cost} and including the final normalization gives
\begin{align}
    O\!\left(\sum_{B\in\calB}2^{|B|}
        +(m+|\calB|)4^\omega\right)
    \label{eq:decoding-dp-given-tree}
\end{align}
scalar operations on any given rank-decomposition tree of width $\omega$, after structural preprocessing.

To construct these structural data, we first compute the signature bases and compact descriptions of the maps in \cref{alg:sop-dp} by polynomial-time binary linear algebra.
We then enumerate the leaf configurations and child-signature pairs, giving a total preprocessing time of at most $\poly(N)(\sum_{B\in\calB}2^{|B|}+4^\omega)$.
Since $\sum_{B\in\calB}2^{|B|}\leq N$ and $\omega\leq4r$, combining this cost with decomposition construction and the evaluation cost in \cref{eq:decoding-dp-given-tree} gives the claimed total time $\poly(N)2^{O(r)}$.
Finally, each of the $2(m+|\calB|)-1$ nodes stores at most $2^\omega$ numeric entries, so the numeric DP tables contain $O((m+|\calB|)2^{O(r)})$ entries in total.
\end{proof}

Once a rank-decomposition tree of width $\omega$ and its structural maps have been constructed, \cref{eq:decoding-dp-given-tree} bounds the arithmetic cost of each likelihood evaluation.
For blocks of bounded size, $\sum_{B\in\calB}2^{|B|}=O(|\calB|)$, so each likelihood evaluation uses $O((m+|\calB|)4^\omega)$ arithmetic operations.

For fixed $\vb{M}$, $\calB$, and $\mathcal T$, the graph and precomputed maps are independent of $\vb{t}$ and $\vb{w}$ and can be reused across likelihood evaluations.
After this one-time preprocessing, evaluating $K$ candidate logical labels costs at most $K$ times the bound in \cref{eq:decoding-dp-given-tree}, followed by $K-1$ comparisons to select an MLD label.
The candidate counts for the two noise models are given in \cref{subsec:decoding}.

\subsection{Reducing the Tanner graph by Gaussian elimination}
\label{subsec:systematic-sop}

We now use Gaussian elimination to evaluate the same partition function on a reduced Tanner graph with only $d$ vertices.
Choose a column basis of $\vb{M} \in \FF_2^{m\times d}$ with index set $B_0$, and let $D_0:=\{1,\ldots,d\}\setminus B_0$.
Then, columns of $\vb{M}_{D_0}$ are linear combinations of the columns of $\vb{M}_{B_0}$.
This allows us to uniquely define a matrix $\vb{A}\in\FF_2^{B_0\times D_0}$ such that $\vb{M}_{D_0}=\vb{M}_{B_0}\vb{A}$, where columns of $\vb{A}$ are the coefficients of the linear combinations.
The Tanner graph $T(\vb{A})$ has bipartition $(B_0,D_0)$ and is also known as the fundamental graph of the binary matroid represented by $\vb{M}$ with respect to the basis $B_0$~\cite{oum_2005_rankwidth}.

The corollary below shows that the partition function can be evaluated on the reduced Tanner graph $T(\vb{A})$.

\begin{corollary}[Check elimination]
\label{cor:systematic-sop}
Under the assumptions of \cref{thm:sop-dp}, suppose that $\vb{M}\vb{e}=\vb{t}$ has at least one solution.
Let $\calT$ be a given rank-decomposition tree of $T(\vb{M})$ over $\calC$ of width $\omega$.
Deleting the check leaves from $\calT$ and suppressing nodes with only one child yields a rank-decomposition tree of $T(\vb{A})$ over $\calB$ with width $\omega_A\leq\omega$.
On this tree, \cref{alg:sop-dp} evaluates $Z_{\vb{M}}(\vb{t};\vb{w})$ with time complexity $O\!\left(\sum_{B\in\calB}2^{|B|}+|\calB|4^{\omega_A}\right)$ after local weight transformation and structural preprocessing.
\end{corollary}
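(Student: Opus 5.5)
The proof has two parts: an algebraic identity rewriting $Z_{\vb{M}}(\vb{t};\vb{w})$ as a quadratic SOP on $T(\vb{A})$ with transformed local weights, and a width comparison showing that the induced tree has cut-ranks at most those of $\calT$. Once both are in place, the complexity claim follows by applying the counting in \cref{eq:sop-dp-cost} to $G=T(\vb{A})$ and $\calC=\calB$, with width $\omega_A$.

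\emph{Algebraic identity.} Fix a particular solution $\vb{e}^0$ of $\vb{M}\vb{e}=\vb{t}$. Then the solution set is $\vb{e}^0+\ker\vb{M}$. Because $\vb{M}_{D_0}=\vb{M}_{B_0}\vb{A}$ and the columns in $B_0$ are independent, $\ker\vb{M}$ consists exactly of the vectors $\vb{x}$ with $\vb{x}_{B_0}=\vb{A}\vb{x}_{D_0}$. Equivalently, $\ker\vb{M}$ is the orthogonal complement of the row space of $(\vb{I}_{B_0}\;\vb{A})$.

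We then apply the Fourier identity \cref{eq:character-identity} to the $|B_0|$ constraints $\vb{e}_{B_0}+\vb{A}\vb{e}_{D_0}=\vb{e}^0_{B_0}+\vb{A}\vb{e}^0_{D_0}$, introducing dual variables $\vb*{\lambda}\in\FF_2^{B_0}$. This produces the exponent $\vb*{\lambda}^{\sfT}\vb{e}_{B_0}+\vb*{\lambda}^{\sfT}\vb{A}\vb{e}_{D_0}$. The linear term $\vb*{\lambda}^{\sfT}\vb{e}_{B_0}$ is local to the coordinates of $B_0$, so we can sum out $\vb{e}_j$ for each $j\in B_0$ against its block weight. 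This is a partial Walsh--Hadamard transform of each $w_B$ in the $B\cap B_0$ coordinates, and it defines the transformed weight $\tilde g_B$ on the variables $(\vb*{\lambda}_{B\cap B_0},\vb{e}_{B\cap D_0})$. We index these variables by the same vertex set $B$, reading each $j\in B_0$ as carrying $\lambda_j$ and each $j\in D_0$ as carrying $e_j$. The target phase $(-1)^{\vb*{\lambda}^{\sfT}(\vb{e}^0_{B_0}+\vb{A}\vb{e}^0_{D_0})}$ is also local to the vertices of $B_0$ and is absorbed into $\tilde g_B$.

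The result is $Z_{\vb{M}}(\vb{t};\vb{w})=2^{-|B_0|}\calS_{\calB}(T(\vb{A});\tilde g)$, where the remaining bilinear exponent is exactly the edge polynomial of $T(\vb{A})$. Computing each $\tilde g_B$ costs $O(|B|2^{|B|})$, and this is the ``local weight transformation'' in the statement. The existence of a solution is used only to pick $\vb{e}^0$.

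\emph{Width comparison.} This is the main step. Let $Y_u$ be a cut displayed by $\calT$, and write $Y_u=X\cup K$ with $X\subseteq\{v_1,\ldots,v_d\}$ and $K$ a set of check vertices. The induced tree on $\calB$ displays $X\mid\overline X$, where $\overline X$ is taken among the variables. Suppressing degree-one nodes does not create new cuts, so every cut of the new tree is of this form.

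We need $\rho_{T(\vb{A})}(X)\le\rho_{T(\vb{M})}(X\cup K)$. Both quantities have matroid interpretations:
\begin{itemize}
\item For the fundamental graph, cut-rank equals matroid connectivity: $\rho_{T(\vb{A})}(X)=r(X)+r(\overline X)-r(\vb{M})$, where $r$ is the column rank of $\vb{M}$ (Oum). So it suffices to show this connectivity is at most the cut-rank in $T(\vb{M})$.
\item $\vb{A}_{T(\vb{M})}[X\cup K,\overline X\cup\overline K]$ has a block form with blocks $\vb{M}[\overline K,X]$ and $\vb{M}[K,\overline X]^{\sfT}$. Hence $\rho_{T(\vb{M})}(X\cup K)=\rank\vb{M}[\overline K,X]+\rank\vb{M}[K,\overline X]$.
\end{itemize}
The remaining inequality is $r(X)+r(\overline X)-r(\vb{M})\le\rank\vb{M}[\overline K,X]+\rank\vb{M}[K,\overline X]$. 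The left side is $\dim(\mathrm{colsp}\,\vb{M}_X\cap\mathrm{colsp}\,\vb{M}_{\overline X})$. Project this intersection onto the $K$ coordinates. The kernel of the projection consists of vectors supported on $\overline K$ that lie in $\mathrm{colsp}\,\vb{M}_X$, so its dimension is at most $\rank\vb{M}[\overline K,X]$. The image is contained in the projection of $\mathrm{colsp}\,\vb{M}_{\overline X}$ onto $K$, which has dimension $\rank\vb{M}[K,\overline X]$. Adding the two bounds gives the inequality, and taking the maximum over $u$ gives $\omega_A\le\omega$.

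This linear-algebra inequality is where I expect the difficulty. If the matroid-connectivity identity for fundamental graphs needs justification beyond citing Oum, I would prove it directly by computing $\rank\vb{A}[X\cap B_0,\overline X\cap D_0]+\rank\vb{A}[\overline X\cap B_0,X\cap D_0]$ through Schur-complement/pivoting arguments on $(\vb{I}\;\vb{A})$.
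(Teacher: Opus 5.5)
Your proposal is correct and follows the paper's proof. It uses the same Fourier expansion of the $|B_0|$ reduced constraints with blockwise partial Walsh--Hadamard transforms; your particular solution $\vb{e}^0$ plays the role of the paper's $\vb*{\tau}=\vb{e}^0_{B_0}+\vb{A}\vb{e}^0_{D_0}$. It also reduces the width claim in the same way, via the fundamental-graph identity $\rho_{T(\vb{A})}(X)=\rank\vb{M}_X+\rank\vb{M}_{\overline{X}}-r$ and the block anti-diagonal form of the Tanner cut matrix, to $\rank\vb{M}_X+\rank\vb{M}_{\overline{X}}-r\le\rank\vb{M}[\overline{K},X]+\rank\vb{M}[K,\overline{X}]$. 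The only difference is how you prove this last inequality: you project $\im\vb{M}_X\cap\im\vb{M}_{\overline{X}}$ onto the coordinates in $K$ and bound kernel and image, whereas the paper selects $r$ independent rows of $\vb{M}$ and splits each column-submatrix rank between the selected rows in $K$ and in $\overline{K}$; both arguments are valid.
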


\begin{proof}
Recall that $Z_{\vb{M}}(\vb{t};\vb{w})=\sum_{\vb{e}\in\FF_2^d} \mathbf{1}[\vb{M}\vb{e}=\vb{t}] \prod_{B\in\calB}w_B(\vb{e}_B)$.
Let $r:=|B_0|=\rank_{\FF_2}\vb{M}$.
Gaussian elimination gives an invertible matrix $\vb{U}\in\FF_2^{m\times m}$ such that
\begin{align}
    \vb{U}\begin{pmatrix}\vb{M}_{B_0}&\vb{M}_{D_0}\end{pmatrix}
        &=\begin{pmatrix}\vb{I}_r&\vb{A}\\\vb{0}&\vb{0}\end{pmatrix},
    &\vb{U}\vb{t}&=\begin{pmatrix}\vb*{\tau}\\\vb{0}\end{pmatrix}.
    \label{eq:systematic-row-reduction}
\end{align}
Here, $\vb*{\tau}\in\FF_2^{B_0}$, and the last $m-r$ entries of $\vb{U}\vb{t}$ vanish because $\vb{M} \vb{e} = \vb{t}$ has at least one solution.
Since $\vb{U}$ is invertible,
\begin{align}
    \vb{M}\vb{e}=\vb{t}
    \quad\Longleftrightarrow\quad
    \vb{U}\vb{M}\vb{e}=\vb{U}\vb{t}
    \quad\Longleftrightarrow\quad
    \vb{e}_{B_0}+\vb{A}\vb{e}_{D_0}=\vb*{\tau}.
    \label{eq:systematic-form}
\end{align}
Using \cref{eq:systematic-form} and the binary Fourier identity in \cref{eq:character-identity}, we obtain
\begin{align}
    Z_{\vb{M}}(\vb{t};\vb{w})
    &=\sum_{\vb{e}\in\FF_2^d}
        \mathbf{1}[\vb{e}_{B_0}+\vb{A}\vb{e}_{D_0}=\vb*{\tau}]
        \prod_{B\in\calB}w_B(\vb{e}_B)
        \label{eq:systematic-reduced-indicator}\\
    &=2^{-r}\sum_{\substack{\vb{q}_{B_0}\in\FF_2^{B_0}\\\vb{e}\in\FF_2^d}}
        (-1)^{\vb{q}_{B_0}^{\sfT}
            (\vb{e}_{B_0}+\vb{A}\vb{e}_{D_0}+\vb*{\tau})}
        \prod_{B\in\calB}w_B(\vb{e}_B).
    \label{eq:systematic-fourier-expansion}
\end{align}
Set $\vb{q}_{D_0}:=\vb{e}_{D_0}$ and separate the phase term $\vb{q}_{B_0}^{\sfT}\vb{A}\vb{q}_{D_0}$, which is independent of $\vb{e}_{B_0}$.
Then,
\begin{equation}
    \begin{aligned}
        Z_{\vb{M}}(\vb{t};\vb{w})
        &=2^{-r}\sum_{\vb{q}\in\FF_2^d}
            (-1)^{\vb{q}_{B_0}^{\sfT}\vb{A}\vb{q}_{D_0}}
            \sum_{\vb{e}_{B_0}\in\FF_2^{B_0}}
                (-1)^{\vb{q}_{B_0}^{\sfT}(\vb{e}_{B_0}+\vb*{\tau})}\\
        &\qquad\times\prod_{B\in\calB}
            w_B(\vb{e}_{B\cap B_0},\vb{q}_{B\cap D_0}).
    \end{aligned}
    \label{eq:systematic-partial-sum}
\end{equation}
Since the blocks partition the error coordinates, the inner sum factors over $B\in\calB$.
Renaming the original summation variable $\vb{e}_{B\cap B_0}$ as $\vb{a}$ within each block defines the transformed weight
\begin{align}
    g_B(\vb{q}_B)
    :=\sum_{\vb{a}\in\FF_2^{B\cap B_0}}
        (-1)^{\vb{q}_{B\cap B_0}^{\sfT}
            (\vb{a}+\vb*{\tau}_{B\cap B_0})}
        w_B(\vb{a},\vb{q}_{B\cap D_0}).
    \label{eq:systematic-sop-weights}
\end{align}
Thus, each $g_B$ is a partial Walsh--Hadamard transform of $w_B$, with the target-dependent sign included.
Substituting these local sums into the partition function yields
\begin{align}
    Z_{\vb{M}}(\vb{t};\vb{w})
    &=2^{-r}\sum_{\vb{q}\in\FF_2^d}
        (-1)^{\vb{q}_{B_0}^{\sfT}\vb{A}\vb{q}_{D_0}}
        \prod_{B\in\calB}g_B(\vb{q}_B)
        \label{eq:systematic-weighted-sop}\\
    &=2^{-r}\calS_{\calB}(T(\vb{A});g),
    \label{eq:block-fundamental-sop}
\end{align}
because $\vb{q}_{B_0}^{\sfT}\vb{A}\vb{q}_{D_0}$ is the edge polynomial of $T(\vb{A})$.

Each cut $Y$ of the reduced tree comes from a cut $X=Y\sqcup S$ of $\calT$, where $Y$ and $S$ consist of variable and check nodes, respectively.
The complements $\overline{Y}$ and $\overline{S}$ are taken within the variable and check node sets, respectively.
By the definition of cut-rank,
\begin{align}
    \rho_{T(\vb{A})}(Y)
    &=\rank\vb{A}[B_0\cap\overline{Y},D_0\cap Y]
      +\rank\vb{A}[B_0\cap Y,D_0\cap\overline{Y}]\nonumber\\
    &=\bigl(\rank\vb{M}_Y-|B_0\cap Y|\bigr)
      +\bigl(\rank\vb{M}_{\overline{Y}}-|B_0\cap\overline{Y}|\bigr)\nonumber\\
    &=\rank\vb{M}_Y+\rank\vb{M}_{\overline{Y}}-r,
    \label{eq:block-fundamental-rank}
\end{align}
where the second equality follows from \cref{eq:systematic-row-reduction} and the last uses $|B_0|=r$.
Select $r$ independent rows of $\vb{M}$, of which $s$ are indexed by $S$.
Deleting the other rows preserves all column-submatrix ranks, so $\rank\vb{M}_Y\leq s+\rank\vb{M}[\overline{S},Y]$ and $\rank\vb{M}_{\overline{Y}}\leq r-s+\rank\vb{M}[S,\overline{Y}]$.
Combining these inequalities with \cref{eq:block-fundamental-rank} gives
\begin{align}
    \rho_{T(\vb{A})}(Y)
    \leq\rank\vb{M}[\overline{S},Y]+\rank\vb{M}[S,\overline{Y}]
        =\rho_{T(\vb{M})}(X).
    \label{eq:systematic-cut-bound}
\end{align}
Taking maxima over cuts of the reduced tree gives $\omega_A\leq\omega$; in particular, $\rw(T(\vb{A}))\leq\rw(T(\vb{M}))$.
Applying \cref{eq:sop-dp-cost} with $\calC=\calB$ gives the claimed complexity bound.
The blockwise Walsh--Hadamard transforms cost an additional $O\!\left(\sum_{B\in\calB}(1+|B\cap B_0|)2^{|B|}\right)$ operations.
\end{proof}

A surface-code example illustrating the reduced Tanner graph and Rank DP is given in Supplemental \cref{app:worked-example}.

\subsection{Nonnegative realization}
\label{subsec:nonnegative-accuracy}

The transformed weights in \cref{eq:systematic-sop-weights} can have either sign, so cancellation in the Rank DP update \cref{eq:sop-dp-recurrence} can cause a loss of relative accuracy in floating-point arithmetic.
Here, we give a Rank DP algorithm that evaluates the partition function using only nonnegative sums and products.
The construction uses the state spaces of linear-code tree realizations~\cite{forney_2001_normal,kashyap_2009_tree} and the sum-product algorithm on the resulting tree~\cite{kschischang_2001_sumproduct}.
Its table sizes are also determined by the cut-ranks of the reduced Tanner graph.

Let $G:=T(\vb{A})$, and fix a rank-decomposition tree $\calT$ of $G$ over $\calB$ with width $\omega_A$.
Write $N:=|\calB|\geq1$.
We again suppose that the linear system $\vb{M} \vb{e} = \vb{t}$ has a solution; otherwise, the partition function will be zero.
Let $\vb{e}_0$ be a solution and write $\vb{e}=\vb{e}_0+\vb{x}$, where $\vb{x}\in\ker\vb{M}$.
Define $\widetilde{w}_B(\vb{x}_B):=w_B(\vb{e}_{0,B}+\vb{x}_B)$.
This shift permutes the outcomes within each block and preserves their nonnegative weights.

We now define the DP tables and the update rule for a nonnegative implementation.
For a node $u$ of $\calT$, let $Y_u\subseteq[d]$ be the union of the blocks assigned to the leaves of the subtree rooted at $u$.
Let $\overline{Y_u}:=[d]\setminus Y_u$ and define the state space
\begin{align}
    I_{Y_u}:=\im\vb{M}_{Y_u}\cap\im\vb{M}_{\overline{Y_u}}.
    \label{eq:block-interface}
\end{align}
Here, $\vb{M}_{Y_u}$ contains the columns indexed by $Y_u$.
From \cref{eq:block-fundamental-rank}, we have $\dim I_{Y_u}=\rank\vb{M}_{Y_u}+\rank\vb{M}_{\overline{Y_u}}-\rank\vb{M}=\rho_G(Y_u)$.
For an assignment $\vb{x}_{Y_u}$ to the variables in $Y_u$, let $\vb{h}:=\vb{M}_{Y_u}\vb{x}_{Y_u}$ be its partial syndrome.
To satisfy $\vb{M}\vb{x}=\vb{0}$, the remaining variables must satisfy $\vb{M}_{\overline{Y_u}}\vb{x}_{\overline{Y_u}}=\vb{h}$.
Since $\vb{h}\in\im\vb{M}_{Y_u}$ already, such an assignment exists exactly when $\vb{h}\in I_{Y_u}$.
At an internal node $u$, define the DP table by
\begin{align}
    F_{Y_u}(\vb{h})
        :=\sum_{\substack{\vb{x}_{Y_u}\in\FF_2^{Y_u}\\
                          \vb{M}_{Y_u}\vb{x}_{Y_u}=\vb{h}}}
            \prod_{\substack{B\in\calB\\B\subseteq Y_u}}
                \widetilde{w}_B(\vb{x}_B),
    \qquad \vb{h}\in I_{Y_u}.
    \label{eq:block-table}
\end{align}
This table has $2^{\rho_G(Y_u)}$ entries.

At a leaf $u$ with $Y_u=B$, the product in \cref{eq:block-table} consists of the single weight $\widetilde{w}_B(\vb{x}_B)$.
Thus, we initialize the leaf table by enumerating the local assignments and summing their weights for each partial syndrome in $I_B$.

Finally, at an internal node $u$ with children $u_1,u_2$, we have $Y_u=Y_{u_1}\sqcup Y_{u_2}$.
Once the two child tables have been computed, combine them using the update rule
\begin{align}
    F_{Y_u}(\vb{h})
        =\sum_{\substack{\vb{h}_1\in I_{Y_{u_1}},\,
                          \vb{h}_2\in I_{Y_{u_2}}\\
                          \vb{h}_1+\vb{h}_2=\vb{h}}}
            F_{Y_{u_1}}(\vb{h}_1)F_{Y_{u_2}}(\vb{h}_2),
    \qquad \vb{h}\in I_{Y_u}.
    \label{eq:block-join}
\end{align}

The following proposition shows that this dynamic program evaluates the same partition function as \cref{cor:systematic-sop}, with the same arithmetic complexity bound.

\begin{proposition}[Nonnegative realization]
\label{prop:nonnegative-realization}
Suppose that $\vb{M}\vb{e}=\vb{t}$ has a solution.
On the given rank-decomposition tree $\calT$, the leaf initialization and \cref{eq:block-join} compute the tables in \cref{eq:block-table} using only nonnegative sums and products.
Each entry at an internal node $u$ sums exactly $2^{\mu_{Y_u}}$ products, where $\mu_{Y_u}:=\dim(I_{Y_{u_1}}\cap I_{Y_{u_2}})$.
The root value is $F_{[d]}(\vb{0})=Z_{\vb{M}}(\vb{t};\vb{w})$.
The arithmetic complexity is $O\!\left(\sum_{B\in\calB}2^{|B|}+N4^{\omega_A}\right)$.
The numeric tables contain at most $(2N-1)2^{\omega_A}$ entries.
These counts exclude structural preprocessing, input storage, indices, and workspace.
\end{proposition}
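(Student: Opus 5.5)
We argue by induction over the tree $\calT$ in postorder, verifying that each computed table equals \cref{eq:block-table}. Nonnegativity is immediate, since every $\widetilde{w}_B$ is a nonnegative number and both the leaf initialization and \cref{eq:block-join} use only sums and products of previously computed nonnegative quantities.

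\textbf{Leaves.} The leaf initialization is \cref{eq:block-table} by definition. The only point to check is that every partial syndrome $\vb{M}_B\vb{x}_B$ whose weight we keep lies in $I_B$. Assignments with $\vb{M}_B\vb{x}_B\notin I_B$ admit no extension to some $\vb{x}\in\ker\vb{M}$, so we discard them, and we define the table \cref{eq:block-table} only on $I_{Y_u}$.

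\textbf{Internal nodes.} Let $u$ have children $u_1,u_2$. Every block below $u$ lies below exactly one child, so an assignment $\vb{x}_{Y_u}$ splits as $(\vb{x}_{Y_{u_1}},\vb{x}_{Y_{u_2}})$. The partial syndrome and the weight then decompose as $\vb{h}=\vb{h}_1+\vb{h}_2$ and as a product. Grouping the sum in \cref{eq:block-table} by $(\vb{h}_1,\vb{h}_2)$ gives \cref{eq:block-join}, except that we must show no contributing pair has some $\vb{h}_i\notin I_{Y_{u_i}}$.

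Suppose $\vb{h}\in I_{Y_u}$. Then $\vb{h}_2=\vb{h}+\vb{h}_1$. We have $\vb{h}\in\im\vb{M}_{\overline{Y_u}}\subseteq\im\vb{M}_{\overline{Y_{u_1}}}$ and $\vb{h}_2\in\im\vb{M}_{Y_{u_2}}\subseteq\im\vb{M}_{\overline{Y_{u_1}}}$, so $\vb{h}_1=\vb{h}+\vb{h}_2\in\im\vb{M}_{\overline{Y_{u_1}}}$. Together with $\vb{h}_1\in\im\vb{M}_{Y_{u_1}}$ this gives $\vb{h}_1\in I_{Y_{u_1}}$, and symmetrically $\vb{h}_2\in I_{Y_{u_2}}$. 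Hence restricting the children to their state spaces loses no terms, and the recurrence is exact.

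\textbf{Counting products per entry.} For fixed $\vb{h}\in I_{Y_u}$, the admissible pairs are the solutions of $\vb{h}_1+\vb{h}_2=\vb{h}$ with $\vb{h}_i\in I_{Y_{u_i}}$. This set is either empty or a coset of $\{(\vb{k},\vb{k}):\vb{k}\in I_{Y_{u_1}}\cap I_{Y_{u_2}}\}$. I expect the main obstacle to be showing that it is never empty, which amounts to the inclusion $I_{Y_u}\subseteq I_{Y_{u_1}}+I_{Y_{u_2}}$.

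My approach is as follows. Since $\vb{h}\in\im\vb{M}_{Y_u}$, write $\vb{h}=\vb{M}_{Y_{u_1}}\vb{y}_1+\vb{M}_{Y_{u_2}}\vb{y}_2$. Since $\vb{h}\in\im\vb{M}_{\overline{Y_u}}$, pick $\vb{z}$ supported on $\overline{Y_u}$ with $\vb{M}\vb{z}=\vb{h}$. Then $(\vb{y}_1,\vb{y}_2,\vb{z})\in\ker\vb{M}$, and the argument from the previous paragraph, applied to this kernel vector, shows $\vb{M}_{Y_{u_i}}\vb{y}_i\in I_{Y_{u_i}}$. This establishes exactly $2^{\mu_{Y_u}}$ products per entry.

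\textbf{Root and complexity.} At the root $Y=[d]$, the state space is $I_{[d]}=\{\vb{0}\}$ and $F_{[d]}(\vb{0})=\sum_{\vb{x}\in\ker\vb{M}}\prod_B\widetilde{w}_B(\vb{x}_B)$. The substitution $\vb{e}=\vb{e}_0+\vb{x}$ turns this into $Z_{\vb{M}}(\vb{t};\vb{w})$.

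For the cost, leaves take $O(2^{|B|})$ operations each. An internal node costs at most $|I_{Y_{u_1}}|\cdot|I_{Y_{u_2}}|\le4^{\omega_A}$, using $\dim I_Y=\rho_G(Y)\le\omega_A$ from \cref{eq:block-fundamental-rank}. There are $N-1$ internal nodes, giving $O(\sum_B2^{|B|}+N4^{\omega_A})$ in total. The tree has $2N-1$ nodes, each storing at most $2^{\omega_A}$ entries, which gives the stated table size.
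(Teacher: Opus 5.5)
Your proposal is correct and follows essentially the same route as the paper's proof. In both, the closure step $\vb{h}_1=\vb{h}+\vb{h}_2\in\im\vb{M}_{\overline{Y_{u_1}}}$ shows that contributing child parities lie in the child state spaces and that every parent state has a compatible pair; the fibers are cosets of the diagonal of $I_{Y_{u_1}}\cap I_{Y_{u_2}}$; and the cost follows by bounding the compatible pairs by $|I_{Y_{u_1}}|\,|I_{Y_{u_2}}|$. The auxiliary vector $\vb{z}$ in your nonemptiness step is unnecessary, since $\vb{h}\in I_{Y_u}$ together with $\vb{h}=\vb{M}_{Y_{u_1}}\vb{y}_1+\vb{M}_{Y_{u_2}}\vb{y}_2$ already suffices.
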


The proof is given in Supplemental \cref{proof:block-inference}.
For the shifted instance $\vb{M}\vb{x}=\vb{0}$, the two dynamic programs are related by an invertible Fourier change of basis between their tables at each node of $\calT$; see Supplemental \cref{proof:sop-positive-basis}.
Thus, the nonnegative realization preserves the information and table sizes of the original dynamic program.
The nonnegative DP uses the original shifted weights and requires no numerical Fourier transform.
Fourier duality of code partition functions on normal factor graphs provides the broader context for this correspondence~\cite{forney_2011_duality}.

The nonnegative realization avoids cancellation from signed intermediate values and admits a relative roundoff-error bound under explicit floating-point assumptions.
Supplemental \cref{app:numerical-accuracy} gives the analysis, its consequences for posterior probabilities and decoding decisions, and the implementation details.

\section{Exponential separation from tensor-network contraction}
\label{sec:tn-separation}

Tensor-network methods are a state-of-the-art approach to MLD~\cite{bravyi_2014_efficient,piveteau_2024_tensor,shutty_2024_ensembling}.
For a fixed tensor-network graph, pairwise contraction with dense intermediate tensors has a cost lower bound exponential in its treewidth~\cite{markov_2008_simulating}.
We first compare this lower bound with the arithmetic complexity of Rank DP, which is governed by rank-width.
We then give two code families for which Rank DP evaluates likelihoods in polynomial time after check elimination, while dense pairwise contraction of the corresponding tensor networks requires superpolynomial time.

\subsection{Rank-width versus treewidth}
\label{subsec:rank-width-vs-treewidth}

We first recall the tensor-network representation of the partition function in the detector picture, including the independent-DEM case~\cite{piveteau_2024_tensor,shutty_2024_ensembling}.

\begin{proposition}[Tensor-network for MLD]
\label{prop:decoding-tensor-network}
Let $\vb{M}\in\FF_2^{m\times d}$, $\vb{t}\in\FF_2^m$, and let $\calB$ be a partition of $[d]$, with local weights $w_B:\FF_2^B\to\RR_{\geq0}$.
The partition function $Z_{\vb{M}}(\vb{t};\vb{w})=\sum_{\vb{e}\in\FF_2^d:\,\vb{M}\vb{e}=\vb{t}}\prod_{B\in\calB}w_B(\vb{e}_B)$ has a binary tensor-network representation with a variable node $v_j$ carrying a COPY tensor for each coordinate, a check node $c_a$ carrying a parity tensor for each constraint, and a weight node $u_B$ for each block.
Its underlying graph $G$ has edge set
\begin{align}
    E(G)=\{(c_a,v_j):M_{aj}=1\}
        \cup\{(u_B,v_j):B\in\calB,\ j\in B\}.
    \label{eq:decoding-tn-graph}
\end{align}
\end{proposition}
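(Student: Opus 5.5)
We construct the tensor network explicitly and check that its full contraction reproduces $Z_{\vb{M}}(\vb{t};\vb{w})$. Every edge of $G$ carries one binary index. For each coordinate $j$, the variable node $v_j$ carries the COPY tensor $\delta_j$ with one leg to $u_B$ (where $B\ni j$) and one leg to each check $c_a$ with $M_{aj}=1$. This tensor equals $1$ exactly when all its legs take a common value $e_j\in\FF_2$, and $0$ otherwise. For each check $a$, the node $c_a$ carries the parity tensor $\pi_a$, whose legs go to the $v_j$ with $M_{aj}=1$. It equals $1$ exactly when its leg values sum to $t_a$ modulo $2$, and $0$ otherwise. For each block $B$, the weight node $u_B$ carries the tensor $w_B$, with one leg to each $v_j$, $j\in B$. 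Its value on leg assignment $\vb{a}\in\FF_2^B$ is $w_B(\vb{a})$. By construction, the edges are exactly those listed in \cref{eq:decoding-tn-graph}. Since $\calB$ partitions $[d]$, each $v_j$ is adjacent to exactly one $u_B$, so $G$ is simple and bipartite between the $v$-nodes and the $c,u$-nodes.

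Next we compute the contraction. The contraction value is the sum, over all assignments to edge indices, of the product of all node tensors. The COPY tensors vanish unless every leg at $v_j$ carries the same value $e_j$. The nonvanishing assignments are therefore in bijection with $\vb{e}\in\FF_2^d$. For such an assignment, $\pi_a$ evaluates to $\mathbf{1}[\sum_j M_{aj}e_j=t_a]$. The product over $a$ is therefore $\mathbf{1}[\vb{M}\vb{e}=\vb{t}]$. Each $u_B$ sees the values $\vb{e}_B$ and so contributes $w_B(\vb{e}_B)$. Summing over $\vb{e}$ gives exactly the right-hand side of \cref{def:partition-function}.

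Finally, we handle the degenerate cases so that the representation is well defined. If a check row of $\vb{M}$ is zero, $c_a$ is an isolated node holding the scalar $\mathbf{1}[t_a=0]$. If a column of $\vb{M}$ is zero, $v_j$ has only its leg to $u_B$. Both cases are consistent with the formula above. The one point needing care is this bookkeeping of degenerate nodes and legs, together with the claim that all tensor entries are in $\{0,1\}$ or given by $w_B$, so the network is binary. No substantive obstacle arises; the argument is the standard factor-graph identity.
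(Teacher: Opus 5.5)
Your proposal is correct and follows the paper's proof: place COPY tensors at the $v_j$, parity tensors $\mathbf{1}[\sum_i x_i = t_a]$ at the $c_a$, and $w_B$ at the $u_B$. Then use the bijection between COPY-consistent edge assignments and $\vb{e}\in\FF_2^d$ to recover $\mathbf{1}[\vb{M}\vb{e}=\vb{t}]\prod_B w_B(\vb{e}_B)$. Your remarks on zero rows and zero columns, and on every $v_j$ having a leg to its $u_B$, are correct bookkeeping that the paper leaves implicit.
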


\begin{proof}
Use the standard COPY and parity tensors~\cite{piveteau_2024_tensor}, with components
\begin{align}
    C_r(\vb{x})&:=\mathbf{1}[x_1=\cdots=x_r],
    &P_{r,t}(\vb{x})&:=\mathbf{1}\!\left[\sum_{i=1}^r x_i=t\right],
\end{align}
where $\vb{x}\in\FF_2^r$ and the sum is in $\FF_2$.
On $G$, place $C_{\deg_G(v_j)}$ at $v_j$, $P_{\deg_G(c_a),t_a}$ at $c_a$, and the weight tensor $W_B(\vb{e}_B):=w_B(\vb{e}_B)$ at $u_B$.
Contraction sums the product of all tensor entries over binary assignments to the edges.
The assignments satisfying all COPY constraints correspond bijectively to $\vb{e}\in\FF_2^d$, with every edge incident to $v_j$ assigned $e_j$.
For such an assignment, each COPY tensor equals $1$, and the parity tensor at $c_a$ equals $\mathbf{1}[\sum_{j:M_{aj}=1}e_j=t_a]=\mathbf{1}[(\vb{M}\vb{e})_a=t_a]$.
Hence, the contraction equals
\begin{align}
    \sum_{\vb{e}\in\FF_2^d}
        \prod_{a=1}^m\mathbf{1}[(\vb{M}\vb{e})_a=t_a]
        \prod_{B\in\calB}w_B(\vb{e}_B)
    =Z_{\vb{M}}(\vb{t};\vb{w}).
\end{align}
\end{proof}

Denote by $\tw(G)$ the treewidth of the graph $G$.
For TN contraction, we use the standard model of pairwise contractions with dense intermediate tensors~\cite{markov_2008_simulating}.

\begin{theorem}[Rank DP-TN separation]
\label{thm:sop-tn-separation}
Given a partition function $Z_{\vb{M}}(\vb{t};\vb{w})$ as in \cref{def:partition-function}, let $G$ be its tensor-network graph from \cref{prop:decoding-tensor-network}.
Let $\calC:=\calB\cup\{\{c_1\},\ldots,\{c_m\}\}$, $b:=\max_{B\in\calB}|B|$, and $N:=m+d+\sum_{B\in\calB}2^{|B|}$.
Given an optimal rank decomposition of $G$, Rank DP evaluates $Z_{\vb{M}}(\vb{t};\vb{w})$ using $\poly(N)4^{b\cdot\rw(G)}$ arithmetic operations.
Every TN contraction order of $G$ has cost $2^{\Omega(\tw(G))}$.
Consequently, for any family with bounded $b$ and $\rw(G)+\log N=o(\tw(G))$, Rank DP takes $2^{o(\tw(G))}$ time, whereas TN contraction requires $2^{\Omega(\tw(G))}$ time.
Moreover, $\rw(T(\vb{M});\calC)\leq b\cdot\rw(G)$ and $\rw(G)\leq\tw(G)+1$.
\end{theorem}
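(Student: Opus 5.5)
The plan is to establish the four claims separately, in the order: the two width inequalities, the Rank DP cost, and the TN lower bound. The separation conclusion then follows by arithmetic.

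\emph{Step 1: $\rw(G)\leq\tw(G)+1$ and the TN lower bound.} The inequality $\rw(G)\leq\tw(G)+1$ is Oum's classical bound relating rank-width and treewidth. I would cite it directly, or reprove it by turning a tree decomposition of width $k$ into a rank decomposition. In that construction each displayed cut is controlled by a bag of size at most $k+1$, so the cut-rank is at most $k+1$. For the TN lower bound I would invoke Markov--Shi~\cite{markov_2008_simulating}. Any pairwise contraction order with dense intermediates determines a tree decomposition of the line graph, so some intermediate tensor has $\Omega(\tw(G))$ open legs and hence $2^{\Omega(\tw(G))}$ entries. The paper already adopts this cost model, so the step is a citation plus a remark.

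\emph{Step 2: $\rw(T(\vb{M});\calC)\leq b\cdot\rw(G)$.} This is the step I expect to be the main obstacle. Start from an optimal rank decomposition $(\calT,\mu)$ of $G$ of width $r=\rw(G)$. Build a tree over $\calC$ by keeping the check leaves $c_a$, putting block $B$ at the leaf of $u_B$, deleting the leaves $v_j$, and suppressing degree-two nodes. A displayed cut of the new tree comes from a cut $X$ of $\calT$. It induces in $T(\vb{M})$ the set $Y'=S\cup V_Y$, where $S$ is the set of checks in $X$ and $V_Y$ is the union of the blocks $B$ with $u_B\in X$. Since $T(\vb{M})$ is bipartite, I would write
\begin{align*}
    \rho_{T(\vb{M})}(Y')=\rank\vb{M}[S,\overline{V_Y}]+\rank\vb{M}[\overline S,V_Y].
\end{align*}
Each term should then be bounded by $b$ times the cut-rank of $G$ at a suitable cut. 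The hard case is a variable $v_j$ sitting on the opposite side of $X$ from its weight node $u_B$. I would try to absorb such variables using the edge $(u_B,v_j)$ in $G$: the column of $u_B$ in $\vb{A}_G[X,\overline X]$ is the indicator of the split variables of $B$. With blocks of size at most $b$, relocating all of $B$ to $u_B$'s side should cost at most a factor $b$, each of the $b$ coordinates contributing at most the original cut-rank. If this direct argument fails, my fallback is to choose a different side for each block, namely the side holding the majority of its variables. I would then bound the rank change by $|B|\le b$ copies of the $G$-cut.

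\emph{Step 3: Rank DP cost.} Feed the tree from Step 2, of width $\omega\le b\,r$, into \cref{alg:sop-dp} via \cref{lem:partition-sop}. The given-tree bound \cref{eq:decoding-dp-given-tree} gives $O(\sum_B2^{|B|}+(m+|\calB|)4^{\omega})$ operations, and preprocessing costs $\poly(N)4^\omega$ as in the proof of \cref{thm:sop-dp}. Together this is $\poly(N)4^{b\cdot\rw(G)}$; no call to \cref{prop:partition-rank-decomposition} is needed, since the decomposition is given. Finally, for bounded $b$ and $\rw(G)+\log N=o(\tw(G))$, we have $\poly(N)4^{b\,\rw(G)}=2^{O(\log N+\rw(G))}=2^{o(\tw(G))}$. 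Set against the $2^{\Omega(\tw(G))}$ contraction bound from Step 1, this gives the stated separation.
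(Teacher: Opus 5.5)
Your Steps 1 and 3 match the paper, and so does the tree you build at the start of Step 2: check leaves kept, $B$ placed at the leaf of $u_B$, variable leaves and the subtrees they leave empty deleted, unary nodes suppressed. The gap is the inequality $\rho_{T(\vb{M})}(Y')\le b\,\rho_G(X)$ itself. It is the only non-cited content of the theorem, and you leave it as a heuristic. ``Each of the $b$ coordinates contributing at most the original cut-rank'' is not a valid principle. What is true is that moving one vertex across a cut changes cut-rank additively by at most one. Let $D_-$ be the variables in $X$ whose weight node lies outside $X$, and $D_+$ the variables outside $X$ whose weight node lies in $X$. Let $X'$ be the cut after moving them, in which no weight node has a crossing edge, so $\rho_G(X')=\rho_{T(\vb{M})}(Y')$. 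Relocation then gives only $\rho_{T(\vb{M})}(Y')\le\rho_G(X)+|D_+|+|D_-|$, and nothing in your proposal bounds $|D_+|+|D_-|$ by $\rho_G(X)$.

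The missing idea is to read the split blocks off $\vb{A}_G[X,\overline X]$ directly. Take the rows of weight nodes in $X$ that meet $D_+$ and the columns of weight nodes outside $X$ that meet $D_-$. They lie in a block-diagonal submatrix with pairwise disjoint supports, because blocks are disjoint. Hence the number $s$ of split blocks is at most $\rho_G(X)$, and $|D_+|+|D_-|\le bs$. Used naively this gives only $(b+1)\rho_G(X)$, which does not yield the stated $4^{b\cdot\rw(G)}$; you must avoid counting the split-block rank twice.

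The paper does this by ordering the rows of $\vb{A}_G[X,\overline X]$ as $(C_X,U_X,D_-,V_{\mathrm{in}})$ and the columns as $(O,D_+,U_{\overline X},C_{\overline X})$. Here $C$ and $U$ are the checks and weight nodes on the indicated side, and $V_{\mathrm{in}}$ and $O$ are the correctly placed variables inside and outside $X$. The matrix is then block upper-triangular, so
\[
\rho_G(X)\ge\rank\vb{M}[C_X,O]+\rank\vb{M}[C_{\overline X},V_{\mathrm{in}}]+\rank\vb{W}[U_X,D_+]+\rank\vb{W}[U_{\overline X},D_-],
\]
where $\vb{W}$ is the weight-variable incidence matrix. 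The Tanner cut matrix gives $\rho_{T(\vb{M})}(Y')\le\rank\vb{M}[C_X,O]+\rank\vb{M}[C_{\overline X},V_{\mathrm{in}}]+|D_-|+|D_+|$. Column and row counting give $|D_+|\le b\rank\vb{W}[U_X,D_+]$ and $|D_-|\le b\rank\vb{W}[U_{\overline X},D_-]$, and combining these yields $b\,\rho_G(X)$.

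Your relocation picture can also be repaired. Removing the rows $D_-$ already lowers the rank by at least the number of split blocks on that side, because their weight-node columns are supported only on $D_-$. The symmetric statement holds for $D_+$. So each split block costs at most $b-1$ net, giving $\rho_{T(\vb{M})}(Y')\le\rho_G(X)+(b-1)s\le b\,\rho_G(X)$.

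Finally, your majority-side fallback is not a decomposition at all. Whether $B\subseteq Y_u$ is fixed by the position of $B$'s single leaf, simultaneously for all nodes $u$. A side chosen separately at each cut therefore cannot in general be realized by any tree.
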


Independent single-qubit Pauli factors have $b=2$, and two-qubit Pauli factors have $b=4$.
In both cases, Rank DP uses $\poly(N)2^{O(\rw(G))}$ arithmetic operations.
Splitting each parity (XOR) tensor into a connected binary tensor network with the same external indices preserves $G$ as a minor of the resulting graph, so the contraction lower bound still applies.
Further algebraic preprocessing used in the numerical comparisons is discussed in \cref{sec:numerical-experiments}.

\begin{proof}
Let $\calR$ be the tree in the given rank decomposition of $G$, of width $\rw(G)$.
Delete its variable leaves and any resulting empty branches, and suppress nodes with only one child.
Relabel each remaining weight leaf $u_B$ by the entire block $B$, retaining every check leaf as a singleton.
This gives a rank-decomposition tree $\calT$ of $T(\vb{M})$ over $\calC$.

Fix a node $u$ of $\calR$, and let $X_u\subseteq V(G)$ be the vertices assigned to the leaves of the subtree rooted at $u$.
The node $u$ represents the cut $X_u\mid(V(G)\setminus X_u)$.
Let $Y_u\subseteq V(T(\vb{M}))$ consist of the check nodes in $X_u$ and all variable nodes in every block $B$ whose weight node $u_B$ belongs to $X_u$, i.e., $Y_u=\{c_a:c_a\in X_u\}\cup\bigcup_{B\in\calB:\,u_B\in X_u}\{v_j:j\in B\}$.
This defines the corresponding cut $Y_u\mid(V(T(\vb{M}))\setminus Y_u)$ in the Tanner graph.
Every cut displayed by $\calT$ is obtained from some node $u$ in this way.
It therefore suffices to prove $\rho_{T(\vb{M})}(Y_u)\leq b\rho_G(X_u)$ for every $u$.

Write $X=X_u$, $Y=Y_u$, $\overline X=V(G)\setminus X$, and $\overline Y=V(T(\vb{M}))\setminus Y$.
Partition the variable vertices $V_{\mathrm{var}}:=\{v_j:j\in[d]\}$ into
\begin{align}
    V_{\mathrm{in}}&:=V_{\mathrm{var}}\cap X\cap Y,
    &O&:=V_{\mathrm{var}}\setminus(X\cup Y),\\
    D_-&:=V_{\mathrm{var}}\cap(X\setminus Y),
    &D_+&:=V_{\mathrm{var}}\cap(Y\setminus X).
\end{align}
Let $C_X,U_X$ be the check and weight vertices in $X$, and let $C_{\overline X},U_{\overline X}$ be those in $\overline X$.
Identify row $a$ of $\vb{M}$ with $c_a$ and column $j$ with $v_j$ when specifying its submatrices.
Let $\vb{W}$ have rows indexed by weight vertices and columns by variable vertices, with $W_{u_B,v_j}:=\mathbf{1}[j\in B]$.
By the definition of $Y_u$, we have $v_j\in Y$ if and only if $u_B\in X$ for $j\in B$, which gives $\vb{W}[U_X, O] = \vb{0}$ and $\vb{W}[U_{\overline X},V_{\mathrm{in}}] = \vb{0}$.
Ordering the rows as $(C_X,U_X,D_-,V_{\mathrm{in}})$ and the columns as $(O,D_+,U_{\overline X},C_{\overline X})$ gives
\begin{align}
    \vb{A}_G[X,\overline X]&=
        \begin{pmatrix}
            \vb{M}[C_X,O]&\vb{M}[C_X,D_+]&\vb{0}&\vb{0}\\
            \vb{0}&\vb{W}[U_X,D_+]&\vb{0}&\vb{0}\\
            \vb{0}&\vb{0}&\vb{W}[U_{\overline X},D_-]^{\sfT}&\vb{M}[C_{\overline X},D_-]^{\sfT}\\
            \vb{0}&\vb{0}&\vb{0}&\vb{M}[C_{\overline X},V_{\mathrm{in}}]^{\sfT}
        \end{pmatrix}.
    \label{eq:tn-cut-block-matrices}
\end{align}
For the Tanner cut, ordering the rows as $(C_X,V_{\mathrm{in}},D_+)$ and the columns as $(O,C_{\overline X},D_-)$ gives
\begin{align}
    \vb{A}_{T(\vb{M})}[Y,\overline Y]&=
        \begin{pmatrix}
            \vb{M}[C_X,O]&\vb{0}&\vb{M}[C_X,D_-]\\
            \vb{0}&\vb{M}[C_{\overline X},V_{\mathrm{in}}]^{\sfT}&\vb{0}\\
            \vb{0}&\vb{M}[C_{\overline X},D_+]^{\sfT}&\vb{0}
        \end{pmatrix}.
    \label{eq:tanner-cut-block-matrix}
\end{align}

The block upper-triangular form in \cref{eq:tn-cut-block-matrices} gives
\begin{align}
    \rho_G(X)\geq\rank\vb{M}[C_X,O]+\rank\vb{M}[C_{\overline X},V_{\mathrm{in}}]
        +\rank\vb{W}[U_X,D_+]+\rank\vb{W}[U_{\overline X},D_-].
\end{align}
Every column of either $\vb{W}[U_X,D_+]$ or $\vb{W}[U_{\overline X},D_-]$ has exactly one $1$, since each variable belongs to exactly one block.
The row indexed by $u_B$ has at most $|B|\leq b$ ones.
No two rows have a $1$ in the same column, so the nonzero rows are linearly independent.
Thus, $|D_+|\leq b\cdot\rank\vb{W}[U_X,D_+]$ and $|D_-|\leq b\cdot\rank\vb{W}[U_{\overline X},D_-]$.

Since $\vb{M}[C_X,D_-]$ has $|D_-|$ columns and $\vb{M}[C_{\overline X},D_+]^{\sfT}$ has $|D_+|$ rows, \cref{eq:tanner-cut-block-matrix} yields
\begin{align}
    \rho_{T(\vb{M})}(Y)
    &\leq\rank\vb{M}[C_X,O]+\rank\vb{M}[C_{\overline X},V_{\mathrm{in}}]+|D_-|+|D_+|\\
    &\leq\rank\vb{M}[C_X,O]+\rank\vb{M}[C_{\overline X},V_{\mathrm{in}}]
        +b\bigl(\rank\vb{W}[U_{\overline X},D_-]+\rank\vb{W}[U_X,D_+]\bigr)\\
    &\leq b\bigl(\rank\vb{M}[C_X,O]+\rank\vb{M}[C_{\overline X},V_{\mathrm{in}}]
        +\rank\vb{W}[U_{\overline X},D_-]+\rank\vb{W}[U_X,D_+]\bigr)\\
    &\leq b\rho_G(X)
    \leq b\cdot\rw(G).
    \label{eq:tn-to-sop-cut-rank}
\end{align}
Taking the maximum over the cuts of $\calT$ proves that its width is at most $b\cdot\rw(G)$.
Then, applying \cref{eq:decoding-dp-given-tree} to this rank-decomposition tree gives the claimed arithmetic bound, with structural preprocessing absorbed in the polynomial factor.
The tensor-network lower bound follows from the standard contraction-complexity result~\cite{markov_2008_simulating}, and the stated separation follows by comparing the exponents.
Finally, $\rw(G)\leq\tw(G)+1$ is the standard rank-width bound~\cite{oum_2008_branchwidth}.
\end{proof}

\subsection{Reed-Muller code families}
\label{subsec:rm-code-families}

We consider code-capacity noise with independent single-qubit Pauli factors.
For a code on $n$ qubits, let $\calB=\{B_i:1\leq i\leq n\}$, where $B_i=\{i,n+i\}$ contains the two error coordinates of qubit $i$.
The local weight is $w_{B_i}(\vb{e}_{B_i})=p_i(x_i,z_i)$, the noise distribution for the $i$-th qubit.
We specify a full-row-rank matrix $\vb{P}$ with $\ker\vb{P}=S$, as in \cref{subsec:decoding}.
Let $G$ be its tensor-network graph from \cref{prop:decoding-tensor-network}, and let $T(\vb{A})$ be the reduced Tanner graph from \cref{subsec:systematic-sop}.
Thus, we compare the rank-width $\rw(T(\vb{A});\calB)$ of the reduced Tanner graph with respect to $\calB$ with the treewidth $\tw(G)$ of the TN graph.

\paragraph{Punctured Reed-Muller codes.}
Fix $m\geq3$ and let $n_m:=2^m-1$.
Index the physical qubits by $V_m:=\FF_2^m\setminus\{\vb{0}\}$.
For a nonempty set $J\subseteq[m]$, define the monomial evaluation vector $\vb{h}_J:=\left(\prod_{j\in J}v_j\right)_{\vb{v}\in V_m}\in\FF_2^{n_m}$.
Let $\calR_m$ be the CSS code whose $X$- and $Z$-stabilizer spaces are
\begin{align}
    C_X&:=\Span\{\vb{h}_J:|J|=1\},
    &C_Z&:=\Span\{\vb{h}_J:1\leq|J|\leq m-2\}.
    \label{eq:rm-stabilizer-spaces}
\end{align}
Here, a vector $\vb{h}$ specifies $X(\vb{h}):=\bigotimes_i X^{h_i}$ or $Z(\vb{h}):=\bigotimes_i Z^{h_i}$, respectively.
The monomial evaluation vectors $\vb{h}_J$ are linearly independent.
The spaces $C_X$ and $C_Z$ have dimensions $m$ and $n_m-m-1$, respectively, and satisfy $C_X\subseteq C_Z^\perp$.
This is the first-order punctured quantum Reed-Muller code with parameters $[[n_m,1,3]]$~\cite{anderson_2014_conversion,hastings_2018_distillation}.

We now construct the constraint matrix $\vb{P}_m$ required by the partition-function formulation in \cref{subsec:decoding}.
Let $\vb{H}_X$ and $\vb{H}_Z$ have rows $\vb{h}_J^{\sfT}$ indexed by $|J|=1$ and $1\leq|J|\leq m-2$, respectively.
The all-ones vector $\vb{1}$ gives logical representatives $\overline{X}=X(\vb{1})$ and $\overline{Z}=Z(\vb{1})$.
Appending their commutation constraints to the syndrome constraints gives
\begin{align}
    \vb{P}_m:=
    \begin{pmatrix}
        \vb{H}_Z&\vb{0}\\
        \vb{0}&\vb{H}_X\\
        \vb{1}^{\sfT}&\vb{0}^{\sfT}\\
        \vb{0}^{\sfT}&\vb{1}^{\sfT}
    \end{pmatrix}
    \in\FF_2^{(n_m+1)\times2n_m}.
    \label{eq:rm-sector-matrix}
\end{align}
The first two block rows form the syndrome matrix $\vb{M}_s=\vb{H}\vb{\Lambda}$, up to row ordering.
The last two rows record commutation with $\overline{Z}$ and $\overline{X}$, respectively.
Since $\vb{H}_X$ and $\vb{H}_Z$ have independent even-weight rows and $\vb{1}$ has odd weight, $\rank\vb{P}_m=n_m+1$.
Moreover, $\vb{P}_m$ annihilates the binary stabilizer space $S=(C_X\times\{\vb{0}\})\oplus(\{\vb{0}\}\times C_Z)$, which has dimension $n_m-1$.
Thus, $\ker\vb{P}_m=S$, as required.

Let $\calB_m:=\{\{i,n_m+i\}:1\leq i\leq n_m\}$ be the physical-qubit partition.
Define $G_m$ to be the TN graph for $\vb{P}_m$ and $\calB_m$ in \cref{prop:decoding-tensor-network}, including the weight nodes.
Let $\vb{A}_m$ be the reduced matrix obtained from $\vb{P}_m$ as in \cref{subsec:systematic-sop}.
\Cref{cor:rm-tn-separation} shows that $T(\vb{A}_m)$ has rank-width $O(\log n_m)$ with respect to $\calB_m$.
This gives polynomial-time MLD by Rank DP, while dense pairwise contraction of $G_m$ has superpolynomial cost.

\paragraph{Steane-Reed-Muller concatenation.}
To extend this separation to codes with growing distance, we use the Steane $[[7,1,3]]$ code~\cite{steane_1996_multiple} as an outer code.
For $L\geq1$, take $L$ levels of Steane concatenation and encode each of the resulting $7^L$ physical qubits in the punctured quantum Reed-Muller code $\calR_m$ defined above.
Writing $d_{\min}$ for the minimum distance, the resulting code $\calQ_{m,L}$ has parameters $[[n,1,d_{\min}]]=[[7^L n_m,1,3^{L+1}]]$.
The case $m=4,L=1$ is the $[[105,1,9]]$ Steane-Reed-Muller code~\cite{jochym_2014_concatenated,chamberland_2016_thresholds}.

For the width comparison, we specify the constraint matrix $\vb{P}_{m,L}$ using the monomial stabilizer generators in each inner block and the six encoded Steane generators at every outer concatenation node.
At each level, encode the Pauli factors using the fixed logical representatives $\overline{X}$ and $\overline{Z}$ of the child code.
Appending the two root logical commutation constraints gives a full-row-rank matrix $\vb{P}_{m,L}$ whose kernel is the concatenated stabilizer space.
Let $G_{m,L}$ be the corresponding TN graph, $\vb{A}_{m,L}$ the reduced matrix, and $\calB_{m,L}$ the partition into physical-qubit blocks.

\Cref{cor:rm-tn-separation} shows that $\rw(T(\vb{A}_{m,L});\calB_{m,L})=O(m)$, independently of $L$.
Choosing $L=\lfloor\log_7 n_m\rfloor$ gives a family whose distance grows polynomially with its length.
Rank DP still gives polynomial-time MLD, while dense pairwise contraction of $G_{m,L}$ has superpolynomial cost.

\begin{corollary}[Width bounds for Reed-Muller code families]
\label{cor:rm-tn-separation}
For the matrices and partitions defined above, there is a constant $c>0$, independent of $m$ and $L$, such that
\begin{align}
    \rw(T(\vb{A}_m);\calB_m)&\leq2m+1,
    &\frac{cn_m}{m^{3/2}}&\leq\tw(G_m)\leq n_m+3,
    \label{eq:rm-reduced-widths}\\
    \rw(T(\vb{A}_{m,L});\calB_{m,L})&\leq\max\{8,2m+3\},
    &\frac{cn_m}{m^{3/2}}&\leq\tw(G_{m,L})\leq4n_m+6L.
    \label{eq:steane-rm-reduced-widths}
\end{align}
Rank decompositions over the physical-qubit partitions with the stated width bounds can be constructed in polynomial time.
Under independent single-qubit Pauli noise, Rank DP gives polynomial-time likelihood evaluation and MLD for both families in the arithmetic model, including decomposition construction and structural preprocessing.
Every dense pairwise contraction order for $G_m$ or $G_{m,L}$ has cost $2^{\Omega(n_m/m^{3/2})}$.
In particular, choosing $L=\lfloor\log_7 n_m\rfloor$ gives $n=\Theta(n_m^2)$, $d_{\min}=\Theta\!\left(n^{(\log_7 3)/2}\right)$, and $\rw(T(\vb{A}_{m,L});\calB_{m,L})=O(\log n)$, while dense contraction of $G_{m,L}$ requires $2^{\Omega(\sqrt{n}/(\log n)^{3/2})}$ time.
\end{corollary}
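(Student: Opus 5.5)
The proof has four parts: upper bounds on the rank-width of the reduced Tanner graphs, upper bounds on treewidth, lower bounds on treewidth, and the resulting complexity claims.

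\textbf{Rank-width of the reduced graph.} I would not work with $\vb{A}_m$ directly. Instead I would use the identity \cref{eq:block-fundamental-rank}: for every cut $Y$ that is a union of qubit blocks,
\begin{align*}
    \rho_{T(\vb{A})}(Y)=\rank\vb{P}_Y+\rank\vb{P}_{\overline Y}-\rank\vb{P}.
\end{align*}
Since $\ker\vb{P}=S$, this quantity is the dimension of the intersection $\im\vb{P}_Y\cap\im\vb{P}_{\overline Y}$, and it can be rewritten as $|Y|$-type terms minus the dimensions of the stabilizer subspaces supported inside $Y$ and inside $\overline Y$. In other words, it is the connectivity function of the stabilizer space with respect to the partition.

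For $\calR_m$, I would order the qubits $\vb{v}\in\FF_2^m\setminus\{\vb 0\}$ by a balanced binary tree that recursively fixes the coordinates $v_1,v_2,\dots$. A subtree then corresponds to a subcube $\{\vb v: v_{1..k}=\vb a\}$, possibly with $\vb 0$ removed. On a subcube, the restriction of a monomial of degree at most $m-2$ is again a polynomial in the free variables. The classical Reed--Muller code $\mathrm{RM}(m-2,m)$ has the well-known property that the code spanned by its subcube restrictions, minus its shortened code, has dimension bounded by the number of monomials in the fixed variables. Using this kind of count, the $C_Z$ part contributes at most about $m$, the $C_X$ part (first-order) contributes at most about $m+1$, and the logical rows contribute $O(1)$. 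Together these give the bound $2m+1$.

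For $\calQ_{m,L}$, I would build the tree hierarchically: one subtree per inner block, following the Steane concatenation tree. I would split the 7 children of each Steane node by a fixed binary tree. A cut then separates at most one partial inner block from the rest, which costs $\le 2m+3$, or it separates a set of whole sub-codes at one Steane level. In the latter case the connectivity is bounded by a constant; for the Steane code it is at most $8$, counting the encoded generators crossing the cut together with the logical interface of two Paulis. Each tree is explicit, so its construction takes polynomial time.

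\textbf{Treewidth upper bounds.} I would exhibit explicit tree decompositions. For $G_m$, I would put all check nodes of $\vb H_Z$ together with a few others in a central bag. Since there are $n_m+1$ checks, this gives roughly $n_m+3$ after attaching each variable node and weight node as small pendant bags. For $G_{m,L}$, I would do the same per inner block, with a bag of size about $4n_m$, and route the $6$ Steane checks per level along the concatenation tree. This adds $6L$.

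\textbf{Treewidth lower bounds.} This is the main obstacle. My plan is to show that $G_m$, restricted to the $X$-part (checks $\vb H_Z$ and variables $x_i$), contains a well-linked set or an expander-like structure. Equivalently, I would show that every balanced separator of the Tanner graph of $\mathrm{RM}(m-2,m)^\perp=\mathrm{RM}(1,m)$-dual checks has size $\Omega(n_m/m^{3/2})$. I would first try to use the fact that the check matrix of $\mathrm{RM}(m-2,m)$ has a large minimum distance dual. Concretely, if a separator $S$ splits the variables into parts $A$ and $B$, each of size at least $n/3$, then since the checks generate a code with few low-weight structures, the cut rank of $\vb H_Z$ across $(A,B)$ is large, namely at least $\dim C_Z$ minus the dimensions of the subcodes supported on each side. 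The cut rank is bounded above by $|S|$ times a factor, which yields a lower bound on $|S|$. The $m^{3/2}$ loss would come from a binomial estimate $\binom{m}{m/2}\approx 2^m/\sqrt m$ combined with degree bounds. The same subgraph appears inside each inner block of $G_{m,L}$, which gives the same bound there.

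\textbf{Complexity consequences.} With rank-width $O(m)=O(\log n)$ and $b=2$, \cref{cor:systematic-sop} gives $\poly(n)$ arithmetic operations, with $4$ logical classes. The contraction lower bound follows from \cite{markov_2008_simulating} applied to the treewidth lower bound. For $L=\lfloor\log_7 n_m\rfloor$ we get $7^L=\Theta(n_m)$, so $n=\Theta(n_m^2)$ and $d_{\min}=3^{L+1}=\Theta(n_m^{\log_7 3})$. Substituting $n_m=\Theta(\sqrt n)$ gives the stated asymptotics.
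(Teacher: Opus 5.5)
\textbf{Main gap: the treewidth lower bound.} The mechanism you propose cannot work. You want the quantity $\dim C_Z-\dim(C_Z)_A-\dim(C_Z)_B$ to be large, where $C_U$ is the subcode supported in $U$. This quantity is the connectivity $\lambda_{C_Z}(A)$. Rank--nullity gives $\lambda_C(U)=\lambda_{C^\perp}(U)\leq\dim C^\perp$, and $C_Z^\perp=C_X\oplus\Span\{\vb{1}\}$ has dimension $m+1$. So $\lambda_{C_Z}(A)\leq m+1$ for every cut, and this smallness is exactly why $\rw(T(\vb{A}_m);\calB_m)=O(m)$.

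Consequently, no code-level, basis-independent argument (dual distance, scarcity of low-weight words) can force a separator of size $\Omega(n_m/m^{3/2})$. The lower bound must use the specific monomial check basis of $G_m$, and the missing idea is an explicit expanding induced subgraph. Let $q=2t+1$ be the largest odd integer at most $m$. Keep the $Z$-checks $\vb{h}_J$ with $J\subseteq[q]$, $|J|=t$, and the $x$-variables of qubits whose label support is a $(t+1)$-subset $K\subseteq[q]$. Adjacency is $J\subset K$, so this is the middle-levels graph $H_q$. It is $(t+1)$-regular on $M_q=2\binom{q}{t}$ vertices with adjacency eigenvalues $\pm1,\ldots,\pm(t+1)$. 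Hence its Laplacian gap is $1$, giving $|\delta(U)|\geq|U|(M_q-|U|)/M_q$.

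A graph of treewidth $a$ has a separator $W$ with $|W|\leq a+1$ whose removal leaves components of at most $M_q/2$ vertices. If $|W|<M_q/4$, group components into a set $U$ with $M_q/4\leq|U|\leq M_q/2$; all its boundary edges end in $W$, so $(t+1)|W|\geq3M_q/16$. Thus $\tw(G_m)+1\geq 3M_q/(16(t+1))=\Omega(2^q/q^{3/2})$, and the same subgraph sits inside one inner block of $G_{m,L}$. Your binomial-times-degree numerology is right, but only this subgraph and its spectral expansion turn it into a proof.

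\textbf{Rank-width bound for $\calR_m$.} The same duality is missing here. The counting property you invoke is false as stated. At the top split $v_1=1$ versus $v_1=0$, one computes $\lambda_{C_Z}(U)=\lambda_{C_Z^\perp}(U)=m-1$, while there are only two monomials in the single fixed variable; this fails for $m\geq4$. Your tally $m+(m+1)+O(1)$ also overshoots $2m+1$. The logical rows contribute nothing separately, because $\rho_{T(\vb{A})}(Y_U)=\dim S-\dim S_U-\dim S_{\overline{U}}$. For a CSS code this equals $\lambda_{C_X}(U)+\lambda_{C_Z^\perp}(U)\leq m+(m+1)$. This bound holds for every cut, so any binary tree on the qubit blocks has width at most $2m+1$, and no subcube analysis is needed. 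Your subcube tree does work (it has width $2m-2$), but only when computed this way.

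\textbf{Remaining parts.} Once this inner bound is in place, your concatenated-code sketch matches the paper's argument: six local generators plus at most two ancestor restrictions at a Steane node, and at most two outer generators meeting an inner block. Finally, $d_{\min}=\Theta(n^{(\log_7 3)/2})$ also needs an upper bound on the distance. For example, use the weight-three logical $Z$ on qubits $\vb{u},\vb{v},\vb{u}+\vb{v}$ of $\calR_m$, concatenated recursively.
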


The proof is given in Supplemental \cref{proof:rm-tn-separation}.

\begin{remark}
\label{rem:rm-original-width}
Let $\calC_m$ and $\calC_{m,L}$ extend the corresponding physical-qubit partitions by singleton check nodes.
Before elimination, the same matrices satisfy
\begin{align}
    \Omega(n_m/m^{5/2})&\leq\rw(T(\vb{P}_m);\calC_m)\leq n_m+1,
    \label{eq:rm-original-width}\\
    \Omega(n_m/m^{5/2})&\leq\rw(T(\vb{P}_{m,L});\calC_{m,L})\leq\max\{8,n_m+1\}.
    \label{eq:steane-rm-original-width}
\end{align}
The same respective bounds hold for $\rw(G_m)$ and $\rw(G_{m,L})$.
Their proofs are given in Supplemental \cref{app:rm-original-width}.
Thus, the polynomial-time guarantee for Rank DP uses check elimination.
\end{remark}

\section{Numerical experiments on decoding}
\label{sec:numerical-experiments}

We test the nonnegative realization of Rank DP on code-capacity and circuit-level decoding problems.
In this section, Rank DP denotes the recurrence in \cref{prop:nonnegative-realization}, which evaluates the same partition function as the reduced SOP representation in \cref{subsec:systematic-sop}.
The experiments examine how the rank structure controls evaluation cost
and how numerical accuracy affects the information obtained from a
likelihood. Exact inference evaluates the complete finite sum for the
specified noise model; numerical error bounds quantify its floating-point
evaluation.

For code capacity, the output contains all $4^k$ masses
$P^{\cc}(\vb{s},\vb*{\ell})$, so its largest entry determines the maximum-likelihood (ML) label.
For the memory circuits below, the output comprises
$P^{\cl}(\vb{s},L_0=0)$ and $P^{\cl}(\vb{s},L_0=1)$ for one measured logical
observable. Their maximizing label solves this measured-observable
decoding task.
Writing these masses as $v_\ell(s)=P(s,\ell)$, their normalization gives
the posterior $\pi_\ell(s)=v_\ell(s)/\sum_jv_j(s)$.
The capacity comparison controls the full-vector error. The circuit
comparison additionally resolves each sector individually and certifies
a maximizing label.

\paragraph{Comparison and timing conventions.}
All main performance measurements use single-thread central processing
unit (CPU) workers on a shared host. We write FP64 for standard double
precision, with $53$ significand bits, and MP$b$ for multiprecision
arithmetic with $b$ significand bits; MP96 and MP128 therefore use $96$
and $128$ significand bits, respectively.
Within a paired comparison, both methods use the same stored FP64 weights
and syndrome queries. Capacity evaluation uses FP64 unless an extended
arithmetic format is explicitly stated; circuit precision is given below.
For independent and identically distributed (IID) sampling, repeated
syndromes are retained.
\emph{Planning} searches for a tree or contraction path;
\emph{compilation} builds its reusable maps and kernels; and
\emph{evaluation} computes the requested probabilities.
For fixed constraints and block partition, structural compilation is
shared across queries and noise-weight updates. Capacity times measure
reused evaluation, with independent references and accuracy checks
recorded separately. Circuit batch times include compilation and the
methods' own numerical checks. Planning is recorded separately throughout.
Core pinning and repeated paired measurements reduce timing variability
on the shared host.

The TN controls include nonnegative coset and categorical representations,
alternative equivalent bases, and searched contraction orders
\cite{cotengra,opt_einsum}; the concatenated-code control also uses its
inner/outer hierarchy. These are comparisons of achieved implementations
and finite search portfolios.
The numerical controls may apply algebraic preprocessing beyond local XOR-tensor decomposition, including changes of constraint basis and elimination of variables.
Such reformulations can change the graph underlying \cref{thm:sop-tn-separation}, so its lower bound does not automatically apply to these controls.
Nevertheless, these comparisons complement the theorem by assessing the practical performance of Rank DP against algebraically preprocessed TN implementations under matched output and accuracy requirements.
Detailed models, search outcomes, and validation are given in
Supplemental \cref{app:protocols,app:capacity-revision,app:circuit-mld-revision}.

\subsection{Code-capacity decoding and structural scaling}
\label{subsec:capacity-scaling}

For code-capacity noise, both Rank DP and the TN controls sum
nonnegative weights. The capacity examples are
Calderbank--Shor--Steane (CSS) codes.
We use independent physical-qubit Pauli blocks,
retaining the joint X/Z distribution within each qubit. The main plots use
depolarizing noise with total Pauli probability $p=.03$; a correlated
channel $(p_I,p_X,p_Y,p_Z)=(.945,.005,.045,.005)$ supplies a separate
robustness check. Every complete output is accepted at relative
$\ell_1$ error $10^{-10}$ against a positive reference with a numerical
error bound. For the vector $\vb{v}$ of sector masses, this criterion is
\begin{equation}
 \frac{\|\widehat{\vb{v}}-\vb{v}\|_1}{\|\vb{v}\|_1}\leq10^{-10}.
 \label{eq:experiment-vector-accuracy}
\end{equation}
The denominator is the syndrome likelihood, so the criterion controls
aggregate absolute error relative to that likelihood.

\paragraph{The Reed--Muller separation family.}
We first instantiate the punctured quantum Reed--Muller codes
$\mathcal R_m=[[2^m-1,1,3]]$ of \cref{cor:rm-tn-separation}, retaining
the specified monomial checks and all-ones logical representatives.
The experiment distinguishes the reduced graph used by Rank DP from the
original graph used in the dense-TN lower bound.
Recursively bisecting the ordered nonzero binary labels gives an explicit
physical-qubit decomposition. We compute every displayed cut rank for
$m=3,\ldots,17$; its achieved width is $2m-2$ throughout.
The prefix-subspace identities in Supplemental \cref{app:rm-capacity}
independently verify these cut ranks.

For the original graph, we combine the theorem's middle-levels separator
bound with an explicit complete-bipartite subgraph bound.
At $m=17$, the reduced decomposition has width $32$, whereas the original
graph has treewidth at least $1012$.
\Cref{fig:reed-muller-capacity}(a) shows these structural bounds, computed
by binary rank calculations through $m=17$.

For full likelihoods, the CPU comparison covers $m=3,\ldots,10$
($n=7,\ldots,1023$), using $12$ IID physical-error draws per size and three
repetitions of every query. Repeated syndromes are retained.
The TN portfolio includes a positive conditional parity-chain
representation: after solving the X constraints, it sums over
X-stabilizer choices and contracts the remaining Z-parity chain.
This algebraic preprocessing changes the constraint representation, so the original-graph lower bound does not apply to this TN control.
Nevertheless, this comparison tests whether Rank DP offers a practical advantage when the TN method also exploits the code's algebraic structure.

At $n=1023$, the median Rank DP/TN times are $9.240/26.137$\,s;
Rank DP is about $2.83$ times faster in paired evaluation.
Both timed methods use FP64, with every four-sector vector meeting the
relative-$\ell_1$ target $10^{-10}$.
At $1023$ qubits, the reference intervals identify two possible maximizing
labels for each query. An exact binary-algebra check establishes a symmetry
between those sectors for all twelve queries, certifying the twofold ML
ties. The symmetry argument, structural checks, TN plans, and accuracy
protocol are given in Supplemental \cref{app:rm-capacity}.

\begin{figure}[!htbp]
\centering
\includegraphics[width=.98\linewidth]{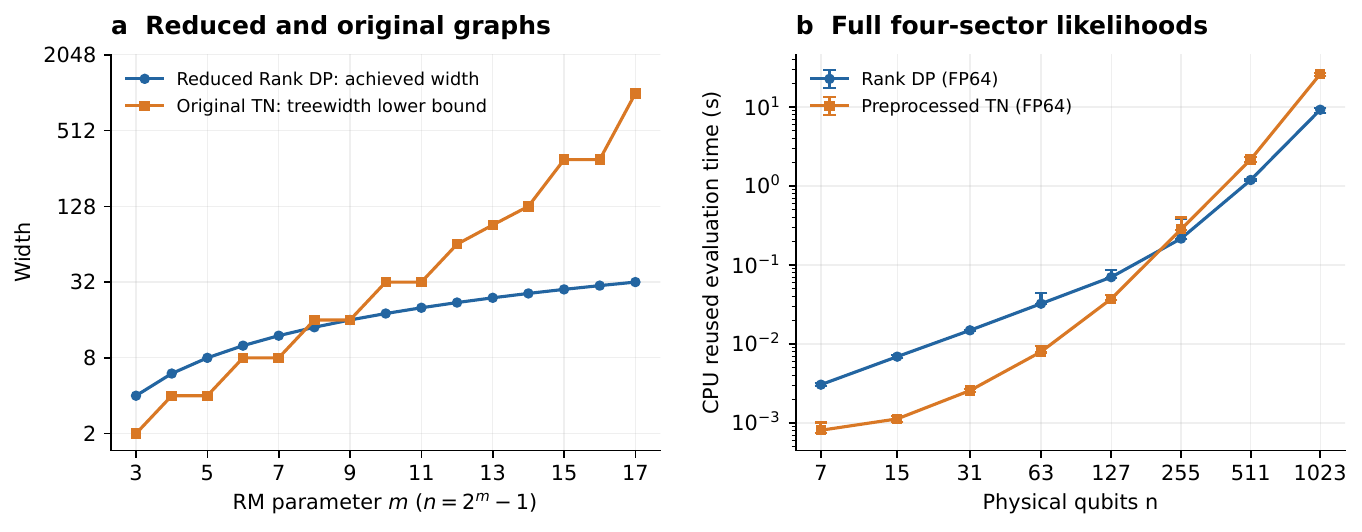}
\caption{Reed--Muller structural separation and practical inference.
(a) Width of the explicit physical-block decomposition after check
elimination and a proved treewidth lower bound for the specified original
graph. Every cut is checked through $m=17$.
(b) CPU FP64 evaluation of all four logical-sector likelihoods for
$7$--$1023$ qubits under depolarizing noise $p=.03$, with
relative-$\ell_1$ error at most $10^{-10}$.
The TN controls include algebraically preprocessed representations.
Each size uses $12$ IID queries with three repetitions; markers are
medians of per-query medians and bars show their observed range.
Times measure reused evaluation under the conventions stated above.
The code distance is three throughout this series.}
\label{fig:reed-muller-capacity}
\end{figure}

\paragraph{Rectangular three-dimensional surface codes.}
The Reed--Muller series connects the separation theorem to inference
costs at fixed distance three. To examine performance as both code size
and distance increase, we next consider a family of topological codes.
The open three-dimensional surface codes in \cref{fig:capacity-scaling} have
dimensions $3\times3\times L_z$, with $L_z=5,7,9$, and respectively
$93,135,177$ physical qubits. Their logical distances are
$(d_X,d_Z)=(9,L_z)$, so the minimum distance increases from five to nine.
These are surface codes with qubits on edges, vertex X checks, face Z
checks, and rough caps in the longitudinal direction. For cubic boxes,
rectifying this lattice gives the $\mathrm{SC}_g$ component of the
Vasmer--Browne construction~\cite{vasmer_browne_2019_surface}; the present
experiments evaluate individual-code likelihoods.
Supplemental \cref{app:cubic-capacity} extends the comparison to growing
cubic boxes: the $d=3$ confirmation favors TN, the $d=4$ paired comparison
remains unresolved within its time limit, and the $d=5$ case supplies
structural forecasts.
The cohorts draw IID physical errors after freezing the plans,
retaining repeated syndromes. TN evaluation uses opt\_einsum with the
selected contraction paths. Times are medians of per-query repeated
medians. The median paired TN/Rank DP ratios are $3.06$, $3.51$, and
$2.64$ at lengths five, seven, and nine, respectively. All vectors meet
the accuracy criterion.

The structural explanation uses the actual Rank DP decomposition.
All three selected rank decompositions have width $18$.
Their forecast scalar work for a four-sector vector grows from
$77.1$ to $144.2$ to $207.2$ million operations, while forecast live
numeric arrays remain about $18$\,MiB. The detailed bound sums the
compatible join work across the rank decomposition; layer-linear
trees of width $17$ have greater total work on these instances.
These are widths of the selected rank decompositions of the reduced Tanner graph $T(\vb{A})$ over $\calB$.
They give upper bounds on $\rw(T(\vb{A});\calB)$.
For this fixed $3\times3$ cross-section, the minimum distance is
$d=\min(9,L_z)$.

\begin{figure}[!htbp]
\centering
\includegraphics[width=.98\linewidth]{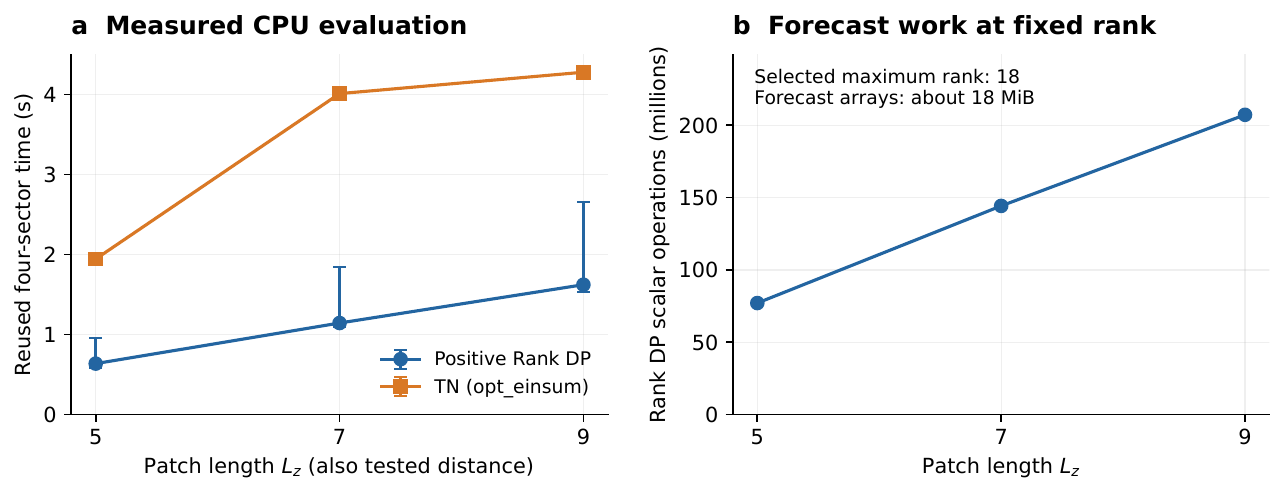}
\caption{CPU code-capacity inference on open $3\times3\times L_z$ surface codes
under depolarizing noise $p=.03$. (a) Rank DP and nonnegative TN both return
all four logical-sector masses in FP64; TN evaluation uses opt\_einsum.
Markers show median reused
evaluation time; bars show the observed range of per-query median times.
(b) Forecast Rank DP scalar work on the selected rank decompositions,
whose width remains $18$ and whose live numeric-array forecast stays
near $18$\,MiB. Work and memory forecasts exclude implementation overhead.
The tested distances are five, seven, and nine.}
\label{fig:capacity-scaling}
\end{figure}

\Cref{tab:capacity-complementary} gives complementary structures.
Golay23, the doubled CSS code, and the Golay--Steane concatenation favor
Rank DP under the stated protocol.
The concatenated TN contracts each inner Steane block exactly.
TB30 encodes four logical qubits, so both methods return all $256$ sectors.
Rank DP is about $18\%$ slower in paired evaluation, while the numeric-array
forecasts are $4.42$\,MB for Rank DP and $228.74$\,MB for TN.

\begin{table}[!htbp]
\centering
\caption{Complementary CPU code-capacity structures, depolarizing $p=.03$. Both methods compute all $4^k$ sectors in FP64. Times measure reused evaluation; TN/Rank DP is the median paired ratio. The first three rows use 12 queries each, and TB30 uses six. TN evaluation uses opt\_einsum, with nested contractions for Golay--Steane.}
\label{tab:capacity-complementary}
\begin{tabular}{llrrr}
\toprule
Code & $[[n,k,d]]$ & Rank DP (s) & TN (s) & TN/Rank DP \\
\midrule
Golay & $[[23,1,7]]$ & 0.0516 & 0.4552 & 8.828 \\
Doubled CSS & $[[49,1,5]]$ & 0.0346 & 0.1928 & 5.558 \\
Golay--Steane & $[[161,1,21]]$ & 0.1347 & 0.2404 & 1.781 \\
TB30 & $[[30,4,5]]$ & 3.7851 & 3.1988 & 0.850 \\
\bottomrule
\end{tabular}
\end{table}

Representation controls materially affect the conclusion. Equivalent
lower-weight and degree-balanced bases strengthen the doubled-code TN,
and exact contraction of each inner Steane block strengthens the
concatenated TN. The displayed ratios use these representations and
their selected decompositions.
The correlated-channel results, selection limits, and larger feasibility
pilots are detailed in Supplemental \cref{app:capacity-revision}.

\subsection{Circuit-level decoding and numerical precision}
\label{subsec:circuit-decoding}
\label{subsec:output-guarantees}

For noisy syndrome extraction, the requested numerical guarantee matters
as much as the contraction cost. A full vector can be accurate while a
small sector has poor relative accuracy. For example, replacing
$(0.1,10^{-13})$ by $(0.1,0)$ gives relative $\ell_1$ error about
$10^{-12}$ and preserves the largest entry while losing the rare mass.
A decoding decision and an accurate rare probability therefore require
different checks.

\paragraph{Nonnegative arithmetic and signed cancellation.}
The nonnegative Rank DP recurrence uses only nonnegative sums and products.
Under the arithmetic and range assumptions of Supplemental \cref{prop:roundoff-error}, its returned value satisfies $|\widehat Z-Z|\leq\gamma_\delta Z$, where $\gamma_\delta=\delta\varepsilon_{\mathrm{mach}}/(1-\delta\varepsilon_{\mathrm{mach}})$, $\varepsilon_{\mathrm{mach}}$ is unit roundoff, and $\delta$ is determined by the compiled reductions.
For $\gamma_\delta<1$ this gives the interval
$[\widehat Z/(1+\gamma_\delta),\widehat Z/(1-\gamma_\delta)]$.
The bound quantifies arithmetic error for the exact stored weights under
the stated assumptions.

The detector-Fourier TN enforces parity by a signed sum.
Cancellation can make its absolute error large relative to a small
likelihood, motivating a separate roundoff analysis and, where needed,
more precision (Supplemental \cref{app:fourier-certificate}).
Categorical and coset TNs sum nonnegative terms. The bounded searches of
these representations, including their planning outcomes, are described
in Supplemental \cref{app:circuit-ledger}.

\paragraph{From likelihood intervals to an ML label.}
Given intervals $[L_i,U_i]$ for the requested sector masses, a unique
winner $i$ is established when $L_i>\max_{j\ne i}U_j$.
When the intervals overlap, we retain the possible winners.
The same intervals bound posterior and excess failure probabilities
(Supplemental \cref{app:block-output-bounds}).
Decision certification depends on interval separation, while relative
accuracy depends on the size of each mass. We therefore report vector
accuracy, individual-sector accuracy, and the ML decision separately in
the precision diagnostic.

On four predeclared queries per color circuit, native FP64 Fourier TN
passes the reference-certified vector target on $8/8$ queries, while
$6/8$ provably fail the individual-sector target. Its own intervals
certify the ML label on $8/8$ queries. Fourier MP96, the reused MP128
values, and positive Rank DP pass the sector and ML checks on all eight
queries. These selected queries illustrate how precision affects the
three accuracy criteria. The reference comparison, each method's
standalone certificate, and sufficient-precision bounds for the full
cohorts are reported in Supplemental \cref{app:circuit-mld-revision}.

\begin{figure}[!htbp]
\centering
\includegraphics[width=.98\linewidth]{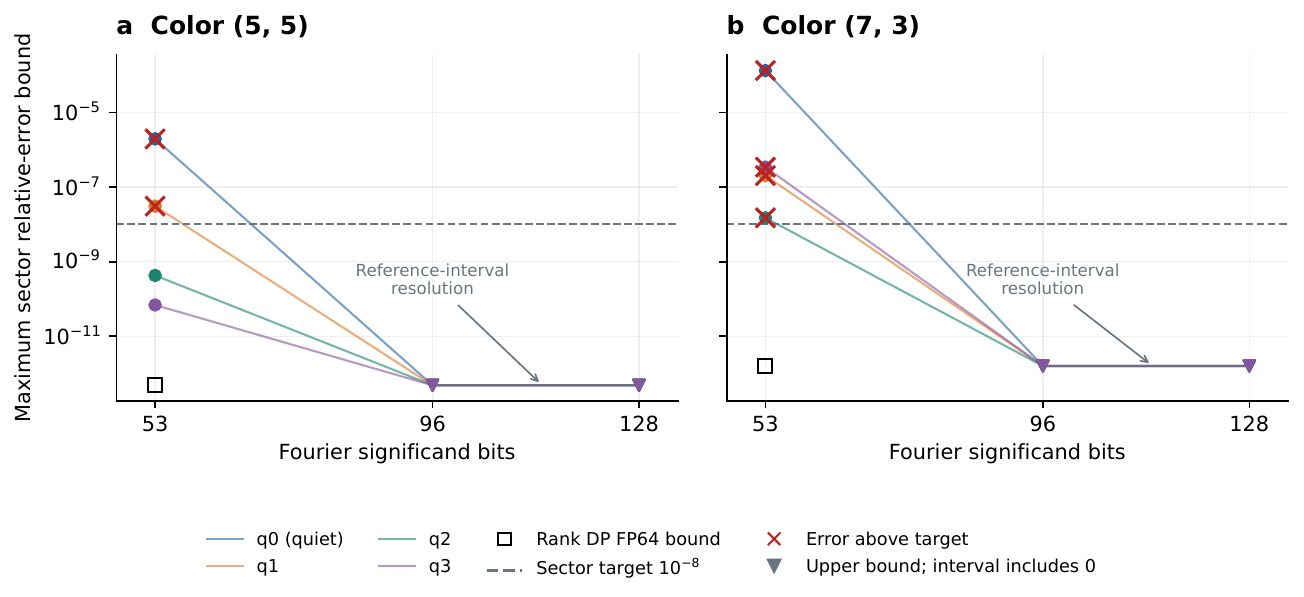}
\caption{Signed Fourier precision on four fixed queries per color circuit.
Values enclose the larger of the two sector relative errors, using independent
positive Rank DP reference intervals. Crosses establish error above $10^{-8}$;
downward triangles mark upper bounds with intervals that include zero.
Open squares show positive Rank DP FP64 bound resolution.
The MP96 and MP128 upper bounds reach the resolution of the reference intervals.
All methods use the same stored models; Fourier precisions share a saved TN
path and Rank DP uses its saved positive tree. MP128 values come from
the completed batch comparison.}
\label{fig:circuit-precision}
\end{figure}

\paragraph{Complete color-circuit batches.}
The performance comparison uses Stim color-code memory circuits with
$(d,r)=(5,5)$ and $(7,3)$, gate/reset/measurement rates $0.001$, and no
additional idle-noise channel \cite{stim}. Supported exact DEM conversion
retains complete event supports without disjoint-error approximation.
Each query returns both measured-observable sectors and a certified ML
label. In addition, each mass must meet relative error $10^{-8}$.
Rank DP uses FP64; Fourier TN uses fixed $128$-significand-bit arithmetic
with its own certificate. The comparison combines ML decision
certification with a relative-accuracy guarantee for each sector.
Lower sufficient precisions for the complete cohorts are bounded in
Supplemental \cref{app:circuit-mld-revision}; the timing comparison retains
its common fixed MP128 policy.

Each case uses the quiet syndrome and the first $31$ distinct sampled
detector patterns, selected by detector outcomes alone. These fixed cohorts
provide query diversity for the timing comparison. Three adjacent Rank DP/TN batch pairs
alternate execution order; every batch starts a fresh worker, compiles
once, and evaluates all $32$ queries. The table includes startup,
compilation, evaluation, certification, and recording, while planning and
the independent terminal audit are separate.
Every vector and component interval passes that audit.

\begin{table}[!htbp]
\centering
\caption{Certified color-circuit observable decoding: cold batches of 32 queries, including compilation. Rank DP uses FP64; Fourier TN uses 128 significand bits. Both sectors satisfy relative error $10^{-8}$ and yield a certified ML label. Times are separate medians of three batches; the ratio and bracket are the median and range of paired TN/Rank DP ratios. Planning and independent terminal validation are separate.}
\label{tab:circuit-mld}
\begin{tabular}{lrrr}
\toprule
Circuit $(d,r)$ & Rank DP batch (s) & TN batch (s) & Paired TN/Rank DP \\
\midrule
Color (5, 5) & 1662.0 & 2479.6 & 1.492 [1.080, 1.496] \\
Color (7, 3) & 63.1 & 489.8 & 7.758 [6.855, 8.414] \\
\bottomrule
\end{tabular}
\end{table}

The selected rank structure helps explain the different costs of these
circuits. Color $(5,5)$ has $1104$ DEM events and a selected maximum
interface rank of $22$; color $(7,3)$ has $782$ events and rank $16$.
Their two-sector work counts are $4.23\times10^9$ and
$1.80\times10^8$ terms, respectively. The shorter round count and smaller
interfaces make the latter computation cheaper despite its larger nominal
code distance. Supplemental \cref{app:circuit-mld-revision} gives the
structural and resource details. The timing ratios combine the selected
representations, plans, arithmetic precisions, and execution kernels.
\FloatBarrier

\section{Applications of likelihood evaluation}
\label{sec:applications}

Likelihood evaluation supports several tasks beyond maximum-likelihood decoding.
We demonstrate these applications of Rank DP by computing syndrome likelihoods for learning circuit noise parameters, event probabilities for evaluating postselection, and logical posterior probabilities for quantifying decoder optimality gaps.
All three applications use the partition function in \cref{eq:inference-primitive}.
The numerical error bounds in Supplemental \cref{app:numerical-accuracy} provide enclosures for the computed probabilities.

\subsection{Circuit noise learning from syndromes}
\label{subsec:noise-learning}

Noise-prior estimation from syndrome correlations includes
pairwise and higher-order detector-event models~\cite{spitz_2018_adaptive,takou_2025_estimating,remm_2025_correlations}.
Differentiable exact TN learning has also been developed for syndrome
likelihoods~\cite{cao_2026_differentiable} and conditional logical-label
cross-entropy~\cite{pancotti_2026_exact_learning}.
Here, Rank DP provides an alternative exact likelihood evaluator for the
established syndrome-only objective.
For physical rates $\theta$, marginalizing the logical outcomes gives
$P_\theta^{\cl}(s)=Z_{\vb{D}}(s;\theta)$ and the mean negative
log-likelihood (NLL)
\begin{equation}
 \calL_N(\theta)=-\frac1N\sum_{i=1}^N\log P_\theta^{\cl}(s_i).
 \label{eq:learning-objective}
\end{equation}
The detector constraints, state spaces, and transition maps remain fixed
as the local weights vary. Differentiating the joins in
\cref{eq:block-join} therefore reuses the structural compilation for both
likelihoods and gradients.

We use the color $(7,3)$ circuit from \cref{subsec:circuit-decoding},
with physical gate, reset, and measurement rates
$\theta=(g,r,m)$. A supported exact map converts these three rates into
the $782$ DEM event priors $p_j(\theta)$, retaining their complete
detector and observable supports. The generating rates are
$(.0012,.0007,.0016)$ and initialization is $(.0005,.0018,.0006)$.
Three forward sensitivities propagate alongside each likelihood using
the product rule. Supplemental \cref{app:circuit-learning-revision}
gives the rate map, derivative recurrence, and independent numerical
checks.

Three independent datasets each contain $4096$ training, $512$
validation, and $2048$ test circuit syndromes. Each fit uses $64$ Adam
updates, a minibatch of $64$ training positions per update, and learning
rate $.03$. Batches are sampled independently from the training pool;
repeated syndromes retain their multiplicities. Training and checkpoint
selection use syndrome likelihoods. Validation NLL selects among
initialization and every eighth update, and the selected fit is then
evaluated on the independent test split.

To examine recovery of the generating model, we plot
\begin{equation}
 E_{\mathrm{DEM}}(t)=\frac{1}{782}\sum_{j=1}^{782}
 \frac{|p_j(\theta_t)-p_j^\star|}{p_j^\star},
 \label{eq:learning-dem-error}
\end{equation}
where $p_j^\star>0$ is the generating prior of event $j$.
This diagnostic weights each event equally and uses the known rates of
the synthetic experiment. The optimized objective is syndrome NLL,
so the two quantities can attain their minima at different updates.
\Cref{fig:noise-learning}(a) shows all $65$ saved iterates and marks the
validation-selected updates $48$, $64$, and $64$.
Mean relative DEM-prior error falls from $54.2\%$ to
$5.2\%$--$7.0\%$ at those checkpoints. Held-out NLL improves by
$0.093$--$0.100$ nats per syndrome, with the paired intervals in
\cref{fig:noise-learning}(b).

\begin{figure}[!htbp]
\centering
\includegraphics[width=.98\linewidth]{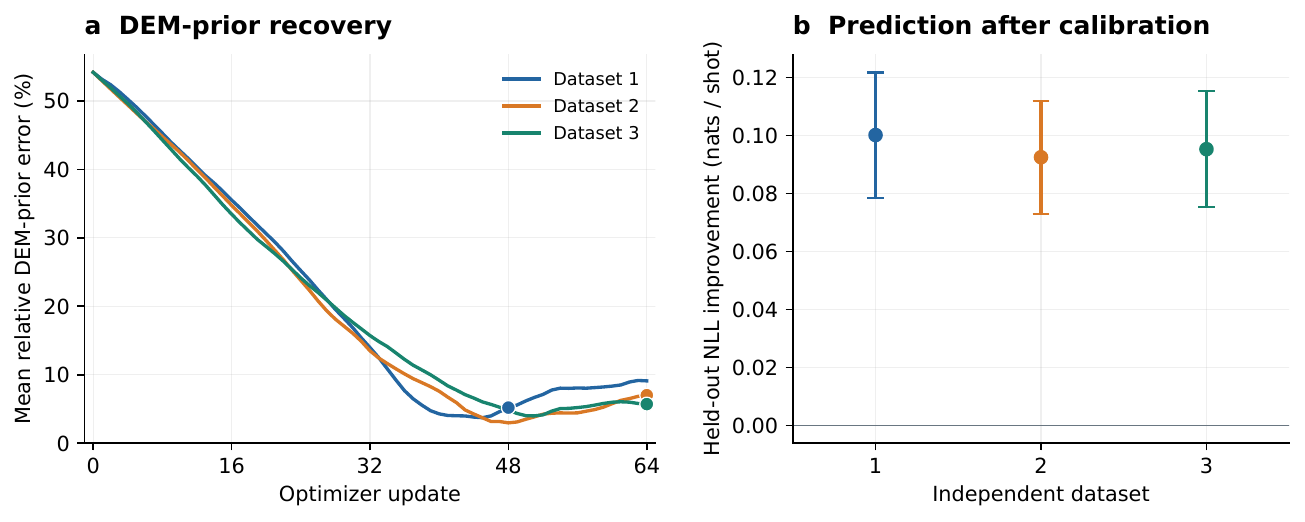}
\caption{Syndrome-only calibration of the color $(7,3)$ circuit.
(a) Mean absolute relative error of the $782$ DEM event priors against
the generating model, in percent. Curves show every optimizer update;
filled markers identify checkpoints selected by validation NLL.
(b) Held-out NLL improvement from initialization to each selected fit.
The approximate $95\%$ paired intervals quantify test-sample uncertainty
conditional on that fitted model. Each point uses its own independent
test set.}
\label{fig:noise-learning}
\end{figure}

The individual physical-rate estimates provide a complementary view:
the selected reset rates are $1.4$--$2.2$ times the generating value.
A detector-moment calculation at the generating rates finds full local
rank and lower statistical resolution for the reset parameter.
Supplemental \cref{app:circuit-learning-revision} gives the fitted rates,
local resolution calculation, and a paired downstream decoding check.
On $1024$ fresh circuit shots, the initial model, all three selected fits,
and the generating model yield identical certified ML decisions, with a
common sample mean conditional failure probability of approximately
$9.76\times10^{-6}$ under the generating-model posterior.
Independent validation checks the physical-rate map against fresh circuit
conversions, likelihoods against positive Rank DP and Fourier TN, and
the three signed derivatives against finite differences at two step sizes.

\subsection{Direct rare-event probabilities and postselection}
\label{subsec:rare-events}

For a noise model $\nu\in\{\cc,\cl\}$, let $A$ be the acceptance event
and let $B$ be a bad logical event after a declared recovery. Their joint
probability determines the conditional failure probability:
\begin{equation}
 P^\nu(B\mid A)=\frac{P^\nu(A\cap B)}{P^\nu(A)},
 \qquad P^\nu(A)>0.
 \label{eq:postselection-probability}
\end{equation}
Both $P^\nu(A)$ and $P^\nu(A\cap B)$ are evaluated using partition
functions of the form in \cref{eq:inference-primitive}, summing disjoint
logical sectors where needed. Thus the accepted-bad probability is evaluated as a sum of
nonnegative contributions. If the two masses lie in $[L_A,U_A]$ and
$[L_B,U_B]$, with $L_A>0$, then
\[
 P^\nu(B\mid A)\in[L_B/U_A,U_B/L_A]\cap[0,1].
\]
Separate bounds on numerator and denominator propagate to the ratio.

Static postselection accepts zero syndrome and applies identity recovery;
$B$ is the union of all nontrivial logical Pauli classes.
The smallest accepted-bad errors are logical operators of weight $d$.
Under independent depolarizing noise their probability therefore starts
at order $p^d$, while acceptance tends to one as $p$ tends to zero.
This explains the distance-dependent slopes in
\cref{fig:rare-events}(b).

At depolarizing $p=10^{-5}$, surface-$5$ has acceptance approximately
$0.99975003$ and conditional logical failure $6.58\times10^{-26}$.
\Cref{fig:rare-events} displays the suppression together with the
probability of retaining a run for surface-$3$, surface-$5$, and the
$[[17,1,5]]$ square-octagon color code~\cite{bravyi_2015_doubled}.
Positive Rank DP bounds both masses throughout the $126$-case grid,
including the biased and correlated channels defined in
Supplemental \cref{app:postselection-details}.
These finite sums resolve the extreme tail directly.

\begin{figure}[!htbp]
\centering
\includegraphics[width=.98\linewidth]{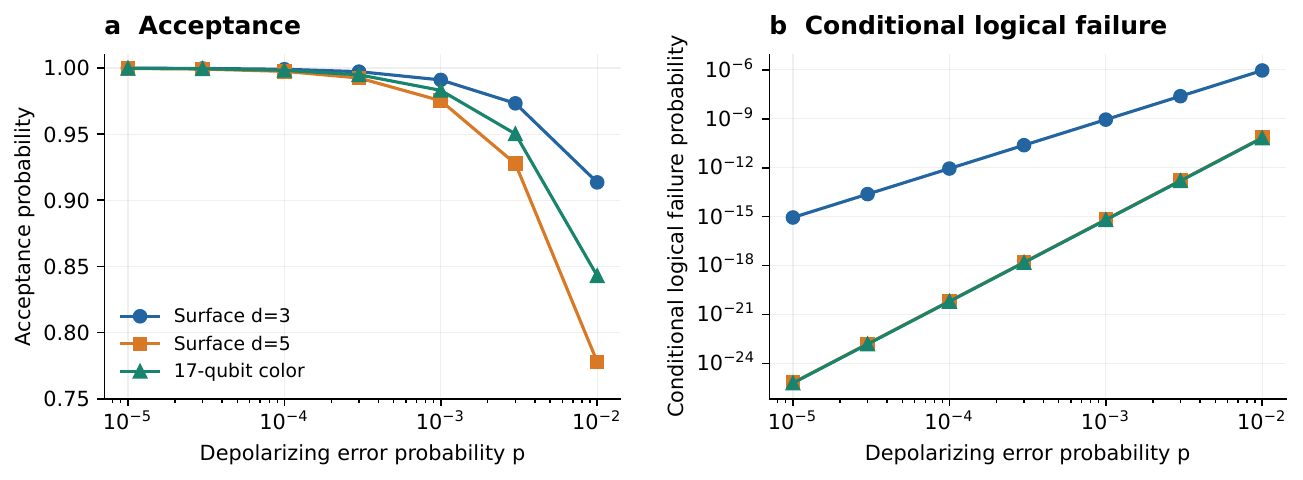}
\caption{Static postselection under code-capacity depolarizing noise.
(a) Zero-syndrome acceptance probability.
(b) Conditional probability of a nontrivial logical class under identity
recovery. Positive summation separately evaluates acceptance and
accepted-bad probability; panel (b) shows their ratio.
Supplemental \cref{app:postselection-details} gives the channel grid,
interval construction, and representative numerical enclosures.}
\label{fig:rare-events}
\end{figure}

For circuit postselection, specify the constrained detector set $J$ and
write $A_J=\{s_j=0\text{ for all }j\in J\}$.
The other detector outcomes are marginalized.
For rotated Z memory with identity recovery, we take $B=\{L_0=1\}$,
the final measured memory-observable flip.
Constraining all detectors gives stronger suppression and lower acceptance
than constraining only the first round, as shown for distance-$3$ memories
in Supplemental \cref{tab:partial-postselection}.
The circuit precision study in \cref{subsec:circuit-decoding}
motivates resolving each small mass individually before forming this ratio.

\subsection{Measuring decoder optimality gaps}
\label{subsec:decoder-gap}

The exact logical posterior also measures the quality of a practical
decoder. Writing $\pi_\ell(s)=P^{\cc}(s,\ell)/P^{\cc}(s)$, a decoder
$\delta$ has conditional failure probability and excess risk
\begin{align*}
 R_\delta(s)&=1-\pi_{\delta(s)}(s),&
 R_*(s)&=1-\max_\ell\pi_\ell(s),\\
 \Delta_\delta(s)&=R_\delta(s)-R_*(s)
                 =\max_\ell\pi_\ell(s)-\pi_{\delta(s)}(s).
\end{align*}
Averaging these quantities over \emph{i.i.d.} sampled syndromes estimates population
failure probabilities and their gaps. Conditioning on the syndrome
integrates over the remaining logical-outcome randomness.
Numerical enclosures and uncertainty from syndrome sampling are computed
separately.

We compare joint Bayes inference under the true Pauli model, exact
inference under its factorized X/Z approximation, marginal belief
propagation with ordered-statistics decoding (BP+OSD)~\cite{panteleev_osd}, and a conditional
BP+OSD heuristic based on the conditional X/Z decoding
principle~\cite{delfosse_2014_correlations}. The conditional heuristic alternates X and Z updates
using the joint channel's conditional probabilities. It retains
syndrome-valid corrections and ranks them by joint physical-error
probability; exact logical-sector sums then measure its remaining excess
risk. Supplemental \cref{app:decoder-details} gives the update rule and
fixed decoder settings.

The distance-$5$ surface-code study fixes both binary marginals at $.05$
and varies $y=p_Y$ over $.0025,.025,.045$:
\[
 (p_I,p_X,p_Y,p_Z)=(.9+y,.05-y,y,.05-y).
\]
The first point has independent X/Z components; the total Pauli error
probabilities are $.0975$, $.075$, and $.055$, respectively.
At each channel, all decoders use the same $8192$ IID physical draws,
retaining repeated syndromes.

At $p_Y=.045$, the estimated failure probabilities are $4.81\%$ for
marginal BP+OSD, $1.47\%$ for conditional-low, and $0.44\%$ for joint Bayes
inference. The paired reduction is $3.34$ percentage points, with a
conservative $95\%$ interval $[2.49,4.19]$ points. Conditional-low retains
a $1.03$-point gap to Bayes, with interval $[0.60,1.46]$ points.
Exact factorized inference and marginal BP+OSD have the same reported
failure probability at both correlated points. These comparisons quantify
the effect of the noise model and the benefit of using its local
correlations in a practical decoder.

\begin{figure}[!htbp]
\centering
\includegraphics[width=.72\linewidth]{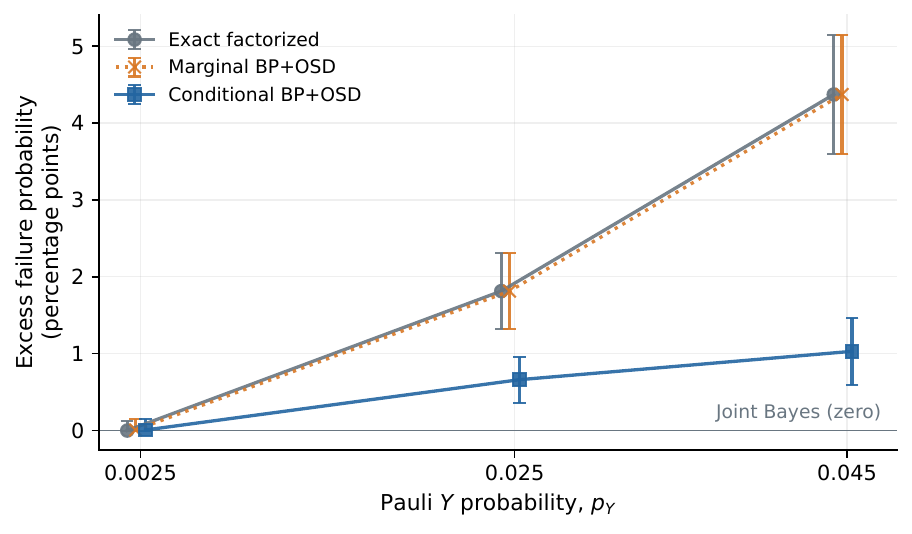}
\caption{Excess decoding failure probability above joint Bayes inference
for the distance-$5$ surface code.
Each channel uses $8192$ IID physical draws with perfect syndrome
measurements. The binary marginals are $.05$; the total Pauli error
probability is $.1-p_Y$.
Each interval has $95\%$ coverage for its stated estimand, with the
numerical radius added separately. Markers are offset horizontally for
visibility. The conditional curve uses the fixed low-effort heuristic.}
\label{fig:decoder-gap}
\end{figure}

All draws contribute to the estimates, and every posterior and decoding
decision is resolved. Supplemental \cref{app:decoder-details} gives the
distance-$3$ population calculation, additional heuristic settings,
realized failure counts, and independent enumeration checks.
\FloatBarrier

\section{Discussion}
\label{sec:discussion}

In this work, we formulated maximum-likelihood decoding as partition-function evaluation and applied Rank DP through a quadratic SOP representation.
The formulation covers code-capacity and circuit-level noise with independent local fault factors.
The arithmetic complexity is polynomial in the input size and exponential in the Tanner graph's rank-width with respect to the partition into fault blocks and singleton checks, including decomposition construction.
Gaussian elimination yields a reduced Tanner graph without increasing the decomposition width.
For punctured quantum Reed-Muller codes and a Steane-concatenated family with growing distance, this reduction gives polynomial-time MLD under independent single-qubit Pauli noise.
Dense pairwise contraction of the corresponding TN graphs has superpolynomial cost.

The nonnegative realization preserves the table sizes and arithmetic complexity while avoiding cancellation from signed weights.
Under explicit arithmetic assumptions, it gives relative floating-point error bounds.
Numerical comparisons under matched output and accuracy requirements show runtime advantages on selected code-capacity and circuit-level instances over the tested TN implementations, including algebraically preprocessed controls.
These comparisons assess the practical performance of Rank DP beyond the graph representations covered by the separation theorem.
We further used likelihood evaluation to learn circuit noise parameters from syndromes, evaluate rare-event probabilities under postselection, and quantify decoder optimality gaps.

Further work could identify structural conditions on codes and syndrome-extraction circuits that yield small rank-width after Gaussian elimination.
This would extend the range of instances for which Rank DP offers efficient exact likelihood evaluation.
A second direction is to develop practical rank-decomposition algorithms that jointly reduce total join work and memory use.
The width alone does not determine these costs.
Finally, likelihood-based noise learning could be extended to larger parameter spaces and experimental syndrome data.
Quantifying how parameter uncertainty affects predicted decoding failure and postselection probabilities would connect noise estimation more closely to error-correction performance.

\paragraph{Statement of AI use.}
We used GPT 5.6 Sol and GPT 6 Astra to assist with literature searches, proof development, mathematical checks, numerical experiments, and manuscript preparation.
The authors take full responsibility for every claim, proof, and citation in it.

\bibliographystyle{unsrt}
\bibliography{ref}
\clearpage
\appendix
\part*{Supplemental material}
\phantomsection\label{page:supplemental-material}
\setcounter{figure}{0}
\setcounter{table}{0}
\setcounter{equation}{0}
\renewcommand{\thefigure}{S\arabic{figure}}
\renewcommand{\thetable}{S\arabic{table}}
\renewcommand{\theequation}{S\arabic{equation}}
\renewcommand{\theHfigure}{supp.\arabic{figure}}
\renewcommand{\theHtable}{supp.\arabic{table}}
\renewcommand{\theHequation}{supp.\arabic{equation}}
\section{Proofs for Rank DP}
\label{app:proofs}

This section provides the supporting proofs for \cref{sec:sop-alg}.
We establish the rank-decomposition construction over a partition, prove the nonnegative realization, and derive its Fourier relation to the SOP dynamic program.

\subsection{Rank-decomposition construction over a partition}
\label{proof:partition-rank-decomposition}

Here, we prove \cref{prop:partition-rank-decomposition}.
\begin{proof}
Every part in the partition is represented by a leaf, so $c\leq r$.
If $c=0$, no edge joins distinct parts of $\calP$, and any rooted binary tree on $\calP$ has width zero.
We may therefore assume $c>0$, which also implies $|\calP|\geq2$.

Let $V:=V(G)$ and $Y_{\calS}:=\bigcup_{P\in\calS}P$ for $\calS\subseteq\calP$.
Define
\begin{align}
    f(\calS):=\rho_G(Y_{\calS}).
\end{align}
The function $f$ is nonnegative and integer-valued, with $f(\varnothing)=0$.
It is \emph{symmetric}, meaning that $f(\calS)=f(\calP\setminus\calS)$, because $Y_{\calP\setminus\calS}=V\setminus Y_{\calS}$ and cut-rank is symmetric.
A set function is \emph{submodular} if, for all $\calS,\calQ\subseteq\calP$,
\begin{align*}
    f(\calS)+f(\calQ)\geq f(\calS\cap\calQ)+f(\calS\cup\calQ).
\end{align*}
Since the parts are disjoint, $Y_{\calS\cap\calQ}=Y_{\calS}\cap Y_{\calQ}$ and $Y_{\calS\cup\calQ}=Y_{\calS}\cup Y_{\calQ}$.
Hence, submodularity of the cut-rank function~\cite[Corollary~6.2]{oum_seymour_2006_approximating} implies submodularity of $f$.
An integer-valued symmetric submodular function that vanishes on the empty set is called a \emph{connectivity function}.

A \emph{branch-decomposition} of a connectivity function $f$ is a pair $(\calT,\mu)$, where $\calT$ is a rooted binary tree in which every internal node has exactly two children, and $\mu:\leaf(\calT)\to\calP$ is a bijection.
Its width is $\max_{u\in V(\calT)} f(\calS_u)$, where $\calS_u\subseteq\calP$ is the set of labels assigned by $\mu$ to the leaves of the subtree rooted at $u$.
The minimum width over all such pairs is the \emph{branch-width} of $f$.
Since $f(\calS_u)=\rho_G(Y_{\calS_u})$, this minimum equals $r$.

We now apply the Oum-Seymour approximation algorithm~\cite{oum_seymour_2006_approximating} in the connectivity-function form~\cite[Section~5]{korhonen_oum_2026_connectivity}.
For an integer $k\geq c$, it either constructs a branch-decomposition of $f$ of width at most $3k+c$ or certifies that its branch-width exceeds $k$.
Here, $|\calP|\leq n$, $0\leq f(\calS)\leq n$, and each value of $f$ is computable by binary Gaussian elimination in polynomial time.
The running time in this instance is therefore $\poly(n)2^{O(k)}$.
Trying $k=c,c+1,\ldots$ until a decomposition is returned stops no later than $k=r$.
The total running time is $\poly(n)2^{O(r)}$, and the returned width is at most $3r+c\leq4r$.
\end{proof}

\subsection{Proof of the nonnegative realization}
\label{proof:block-inference}

Here, we prove \cref{prop:nonnegative-realization}.
\begin{proof}
At a leaf, enumerate its local binary assignments, retain those whose
parities lie in $I_B$, and sum their shifted weights by parity. This
gives \cref{eq:block-table} directly.

For a join, abbreviate $U_L:=\im\vb{M}_{Y_L}$, $U_R:=\im\vb{M}_{Y_R}$, and $U_O:=\im\vb{M}_{\overline Y}$.
Let $\vb{h}\in I_Y$.
Since
$\vb{h}\in(U_L+U_R)\cap U_O$, there are
$\vb{h}_L\in U_L$ and $\vb{h}_R\in U_R$ with
$\vb{h}_L+\vb{h}_R=\vb{h}$. Moreover,
$\vb{h}_L=\vb{h}+\vb{h}_R\in U_O+U_R$, so
$\vb{h}_L\in I_{Y_L}$; the symmetric argument gives
$\vb{h}_R\in I_{Y_R}$. Thus every parent state has a compatible pair.
The kernel of the addition map
\[
    I_{Y_L}\oplus I_{Y_R}\longrightarrow\FF_2^m,
    \qquad (\vb{a},\vb{b})\longmapsto\vb{a}+\vb{b},
\]
is $\{(\vb{a},\vb{a}):\vb{a}\in I_{Y_L}\cap I_{Y_R}\}$.
Every nonempty fiber therefore has $2^{\mu_Y}$ elements. These fibers
and a particular pair for each parent basis vector can be represented by
exact binary linear maps.

Every assignment on $Y$ appearing in \cref{eq:block-table} restricts
uniquely to assignments on the two children. The preceding argument
shows that their parities lie in the child state spaces.
Conversely,
two child assignments contributing to a compatible pair concatenate to
an assignment of parent parity $\vb{h}$. Since no block is divided by
the join, its weight is the product of the two child weights.
Expanding \cref{eq:block-join} therefore gives every summand in
\cref{eq:block-table} exactly once. Induction proves the table invariant.
At the root $I_{[d]}=\{0\}$, so its entry sums exactly over
$\ker\vb{M}$ and hence gives \cref{eq:inference-primitive} after the shift.

Leaf enumeration contributes at most a constant times $\sum_B2^{|B|}$ scalar operations.
A join has $2^{\rho_G(Y)}$ entries, each with $2^{\mu_Y}$ products and a nonnegative sum of those products.
The set of all compatible pairs used at that join is a subset of $I_{Y_L}\times I_{Y_R}$.
Consequently,
\begin{align*}
    2^{\rho_G(Y)+\mu_Y}
    \leq2^{\rho_G(Y_L)+\rho_G(Y_R)}
    \leq4^{\omega_A}.
\end{align*}
There are $N-1$ joins and $2N-1$ nodes.
These observations give the operation and table bounds.
\end{proof}

\subsection{Fourier relation between the two dynamic programs}
\label{proof:sop-positive-basis}

The nonnegative recurrence avoids cancellation from signed weights while retaining the table sizes of the SOP recurrence.
This comparison is related to Fourier duality of code partition functions on normal factor graphs~\cite{forney_2011_duality}; here we derive the explicit boundary transformation for the block-preserving tables.
To explain the relation between the two formulations, we compare their tables for the shifted instance $\vb{M}\vb{x}=\vb{0}$ on the same rank-decomposition tree.
The following lemma shows that the tables contain the same information in different bases.

\begin{lemma}[Fourier change of basis]
\label{lem:sop-positive-basis}
Use the assumptions and notation of \cref{subsec:nonnegative-accuracy}.
Construct the quadratic SOP for the shifted weights $\widetilde{w}_B$ and zero target as in \cref{cor:systematic-sop}.
At each node $u$ of $\calT$, the tables in \cref{eq:sop-dp-table,eq:block-table} are related by an invertible Fourier change of basis.
Both tables have $2^{\rho_G(Y_u)}$ entries, and the local weight transformations preserve the blocks in $\calB$.
\end{lemma}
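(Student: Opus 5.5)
We plan to prove the lemma by substituting the definition of the transformed weights into the SOP table and exchanging the order of summation. This should express $F_u$ as a partial Walsh--Hadamard transform of $F_{Y_u}$ in one set of boundary coordinates, combined with a relabelling in the remaining coordinates.

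\textbf{Coordinates.} We work in the row-reduced coordinates of \cref{eq:systematic-row-reduction}. Because $\vb{U}$ is invertible, the map $\vb{h}\mapsto\vb{U}\vb{h}$ identifies $\im\vb{M}$ with $\FF_2^{B_0}$. Under this identification, the partial syndrome of $\vb{x}_Y$ (with $Y=Y_u$) becomes
\begin{align*}
    \vb{h}'=\vb{x}_{B_0\cap Y}+\vb{A}[B_0,D_0\cap Y]\,\vb{x}_{D_0\cap Y}\in\FF_2^{B_0}.
\end{align*}
Split $\vb{h}'$ into its entries on $B_0\cap Y$ (the ``inner'' part) and on $B_0\cap\overline Y$ (the ``outer'' part). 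The outer part equals $\vb{A}[B_0\cap\overline Y,D_0\cap Y]\vb{x}_{D_0\cap Y}$. The computation in \cref{eq:block-fundamental-rank} shows that $\vb{h}'\in\vb{U}I_Y$ exactly when the inner part, modulo the column space of $\vb{A}[B_0\cap Y,D_0\cap Y]$ contributed by interior variables, lies in the column space $R:=\mathrm{col}\,\vb{A}[B_0\cap Y,D_0\cap\overline Y]$. Thus $\vb{U}I_Y$ is parametrized by a pair $(\vb{h}_1,\vb{h}_2)$. Here $\vb{h}_1$ ranges over a space of dimension $\rank\vb{A}[B_0\cap Y,D_0\cap\overline Y]$, and $\vb{h}_2\in\mathrm{col}\,\vb{A}[B_0\cap\overline Y,D_0\cap Y]$. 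The dimensions sum to $\rho_G(Y)$.

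\textbf{Substitution and summation over $\vb{q}_{B_0\cap Y}$.} With $\vb*{\tau}=\vb{0}$, we insert \cref{eq:systematic-sop-weights} into \cref{eq:sop-dp-table}. Writing $\vb{a}=\vb{x}_{B_0\cap Y}$ and $\vb{q}_{D_0\cap Y}=\vb{x}_{D_0\cap Y}$, the interior phase $\phi_Y$ and the Walsh phases combine into
\begin{align*}
    (-1)^{\vb{q}_{B_0\cap Y}^{\sfT}(\vb{a}+\vb{A}[B_0\cap Y,D_0\cap Y]\vb{x}_{D_0\cap Y})},
\end{align*}
which is $(-1)^{\vb{q}_{B_0\cap Y}^{\sfT}\vb{h}'_{\mathrm{in}}}$. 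The signature has two components:
\begin{align*}
    \vb*{\sigma}_1=\vb{A}[B_0\cap Y,D_0\cap\overline Y]^{\sfT}\vb{q}_{B_0\cap Y},
    \qquad
    \vb*{\sigma}_2=\vb{h}'_{\mathrm{out}}.
\end{align*}
So $\vb*{\sigma}_2$ simply fixes the outer coordinate. The sum over $\vb{q}_{B_0\cap Y}$ runs over a coset $\vb{q}_0+K$, where $K=\ker\vb{A}[B_0\cap Y,D_0\cap\overline Y]^{\sfT}$. Orthogonality of characters on $K$ gives the factor $|K|\,\mathbf{1}[\vb{h}'_{\mathrm{in}}\in K^{\perp}=R]\,(-1)^{\vb{q}_0^{\sfT}\vb{h}'_{\mathrm{in}}}$. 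The same reasoning should also eliminate the interior-column ambiguity. The outcome is
\begin{align*}
    F_u(\vb*{\sigma}_1,\vb*{\sigma}_2)=|K|\sum_{\vb{h}_1}(-1)^{\vb{q}_0(\vb*{\sigma}_1)^{\sfT}\vb{h}_1}\,F_{Y}(\vb{h}_1,\vb*{\sigma}_2).
\end{align*}
This is a Walsh--Hadamard transform between the dual spaces $\FF_2^{B_0\cap Y}/K$ and $R$, each of dimension $\rank\vb{A}[B_0\cap Y,D_0\cap\overline Y]$. It is invertible, with inverse given by the same transform scaled by $2^{-\dim}/|K|$. The coordinate $\vb*{\sigma}_2$ is carried through unchanged.

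\textbf{Main obstacle.} The step we expect to be hardest is the bookkeeping that makes this identification exact. We must check three things. First, the attainable signatures and $\vb{U}I_Y$ are both parametrized by the same pairs, with matching dimensions $\rho_G(Y)$ as in \cref{eq:block-fundamental-rank}. Second, the phase $\vb{q}_0(\vb*{\sigma}_1)^{\sfT}\vb{h}_1$ is independent of the chosen representative $\vb{q}_0$; this holds because $\vb{h}_1\in R=K^\perp$. Third, the contributions of interior columns $\vb{A}[B_0\cap Y,D_0\cap Y]$ cancel correctly. We would first settle this at a leaf, where $Y=B$ and the claim reduces to the blockwise transform \cref{eq:systematic-sop-weights}, and then carry out the general computation above. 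Block preservation is immediate, since each $g_B$ depends only on $w_B$.
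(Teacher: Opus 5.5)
Your proposal is correct and is essentially the paper's proof: in systematic coordinates you split $I_Y$ into the inner component $\mathcal V_Y=\im\vb{A}[B_0\cap Y,D_0\cap\overline Y]$ and the outer component $\mathcal U_Y=\im\vb{A}[B_0\cap\overline Y,D_0\cap Y]$, merge the Walsh and internal-edge signs into $(-1)^{\vb{q}_{B_0\cap Y}^{\sfT}\vb{h}'_{\mathrm{in}}}$, and use character orthogonality over a coset of $\ker\vb{A}[B_0\cap Y,D_0\cap\overline Y]^{\sfT}$ to obtain the factor $2^{|B_0\cap Y|-\dim\mathcal V_Y}$ and the invertible boundary transform. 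The ``interior-column ambiguity'' you list as an obstacle does not arise: the inner component of a state is exactly $\vb{h}'_{\mathrm{in}}$, with interior contributions included and no quotient taken, so the indicator $\mathbf 1[\vb{h}'_{\mathrm{in}}\in R]$ produced by the character sum already coincides with the state-space condition.
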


\begin{proof}
Use the column basis $B_0$ from \cref{subsec:systematic-sop} and write $D_0=[d]\setminus B_0$ and $r=|B_0|$.
Apply \cref{eq:systematic-row-reduction} and delete the zero rows to work in systematic coordinates $\vb{M}=(\vb{I}_{B_0}\mid\vb{A})$.
These row operations preserve the solution set and identify the parity labels in \cref{eq:block-table} bijectively with their transformed labels.
Use the solution $\vb{e}_0$ and shifted weights $\widetilde{w}_B$ from \cref{subsec:nonnegative-accuracy}, so the constraint is $\vb{M}\vb{x}=\vb{0}$.
For each physical block $B\in\calB$, Fourier transform only its basis coordinates:
\begin{align}
    \widehat{\widetilde w}_B
        (\vb*{\lambda}_{B\cap B_0},\vb{z}_{B\cap D_0})
    :=\sum_{\vb{x}_{B\cap B_0}\in\FF_2^{B\cap B_0}}
        (-1)^{\vb*{\lambda}_{B\cap B_0}^{\sfT}\vb{x}_{B\cap B_0}}
        \widetilde w_B(\vb{x}_{B\cap B_0},\vb{z}_{B\cap D_0}).
    \label{eq:block-partial-fourier}
\end{align}
Applying \cref{cor:systematic-sop} to the shifted weights and zero target gives \cref{eq:block-fundamental-sop} with $G=T(\vb{A})$ and $g_B=\widehat{\widetilde w}_B$.
The transform acts within each complete block, so no coordinate-wise factorization of $w_B$ is needed.

Fix a union $Y$ of blocks and abbreviate
\begin{align*}
    \vb{H}_Y&:=\vb{A}[B_0\cap Y,D_0\cap\overline Y],
    &\mathcal V_Y&:=\im\vb{H}_Y,\\
    \vb{K}_Y&:=\vb{A}[B_0\cap\overline Y,D_0\cap Y],
    &\mathcal U_Y&:=\im\vb{K}_Y.
\end{align*}
In systematic row coordinates, the state space $I_Y$ is $\mathcal V_Y\oplus\mathcal U_Y$.
Indeed, its component on $B_0\cap Y$ must be given by the complementary free coordinates through $\vb{H}_Y$, while its component on $B_0\cap\overline Y$ is given by the free coordinates in $Y$ through $\vb{K}_Y$; the basis columns allow these components to vary independently.
Write the nonnegative table as $F_Y(\vb{h},\vb{v})$, with $\vb{h}\in\mathcal V_Y$ and $\vb{v}\in\mathcal U_Y$.
The SOP signature instead records $\vb{v}=\vb{K}_Y\vb{z}_{D_0\cap Y}$ and $\vb{H}_Y^{\sfT}\vb*{\lambda}_{B_0\cap Y}$.
The latter is equivalently a linear functional $\xi\in\mathcal V_Y^*$, acting by $\xi(\vb{h})=\vb*{\lambda}^{\sfT}\vb{h}$.
In these coordinates, write the DP table in \cref{eq:sop-dp-table} as $\widehat F_Y(\xi,\vb{v})$.
Then
\begin{align}
    \widehat F_Y(\xi,\vb{v})
    =2^{|B_0\cap Y|-\dim\mathcal V_Y}
        \sum_{\vb{h}\in\mathcal V_Y}
            (-1)^{\xi(\vb{h})}F_Y(\vb{h},\vb{v}).
    \label{eq:sop-positive-basis}
\end{align}
To see this, expand \cref{eq:block-partial-fourier} in \cref{eq:sop-dp-table} and combine its Fourier sign with the internal-edge sign.
The resulting exponent is $\vb*{\lambda}^{\sfT}\vb{h}$, where $\vb{h}=\vb{x}_{B_0\cap Y}+\vb{A}[B_0\cap Y,D_0\cap Y]\vb{z}_{D_0\cap Y}$.
For a representative $\vb*{\lambda}_\xi$ of $\xi$, character orthogonality gives
\begin{align*}
    \sum_{\substack{\vb*{\lambda}\in\FF_2^{B_0\cap Y}\\
        \vb{H}_Y^{\sfT}\vb*{\lambda}=\vb{H}_Y^{\sfT}\vb*{\lambda}_\xi}}
        (-1)^{\vb*{\lambda}^{\sfT}\vb{h}}
    =2^{|B_0\cap Y|-\dim\mathcal V_Y}
        (-1)^{\vb*{\lambda}_\xi^{\sfT}\vb{h}}
        \mathbf 1[\vb{h}\in\mathcal V_Y].
\end{align*}
This leaves exactly the summands of the nonnegative table and proves \cref{eq:sop-positive-basis}.

The boundary transform is invertible, and both tables have $2^{\dim\mathcal V_Y+\dim\mathcal U_Y}=2^{\rho_G(Y)}$ entries.
At the root, \cref{eq:sop-positive-basis} reduces to $\widehat F_{[d]}=2^rF_{[d]}$, consistent with \cref{eq:block-fundamental-sop}.
The nonnegative recurrence in \cref{eq:block-join} considers a subset of the child-state pairs, each contributing to one parent state, so it satisfies the join bound in \cref{eq:sop-dp-cost} on the same rank decomposition over $\calB$.
Its leaf initialization directly sums the original shifted nonnegative weights and requires no numerical Fourier transform.
Thus, the two recurrences are equivalent under the displayed changes of boundary basis, while the nonnegative realization uses only sums and products of nonnegative values.
\end{proof}

\section{Width bounds for Reed-Muller code families}
\label{app:rm-width-bounds}

This section proves the width bounds underlying the computational separation in \cref{subsec:rm-code-families}.
We first prove \cref{cor:rm-tn-separation}, then establish the bounds before check elimination in \cref{rem:rm-original-width}.

\subsection[Proof of the code-family width bounds]{Proof of \cref{cor:rm-tn-separation}}
\label{proof:rm-tn-separation}

\begin{proof}
We first verify the claimed distance of $\calQ_{m,L}$.
Each constituent code has distance three, so concatenation gives $d_{\min}\geq3^{L+1}$.
For distinct $\vb{u},\vb{v}\in V_m$, the three qubits labelled by $\vb{u},\vb{v},\vb{u}+\vb{v}$ support a weight-three logical $Z$ of $\calR_m$.
Their labels sum to zero, so this operator commutes with every $X$ stabilizer.
Its odd weight excludes it from $C_Z$, whose vectors all have even weight.
The same argument applies to the Steane code, which is $\calR_3$.
Recursively encoding these logical operators attains weight $3^{L+1}$, proving the claimed distance.

We now establish the width bounds.
For a set $U$ of physical qubits, write $\overline{U}$ for the complementary set of qubits.
For a binary subspace $C$, let $C_U$ consist of the vectors in $C$ supported in $U$, and set
\begin{align}
    \lambda_C(U):=\dim C-\dim C_U-\dim C_{\overline{U}}.
\end{align}
Rank-nullity gives $\lambda_C(U)=\lambda_{C^\perp}(U)$ and $\lambda_C(U)\leq\dim C$.
For the stabilizer space $S=\ker\vb{P}$ and $Y_U:=\bigcup_{i\in U}B_i$, \cref{eq:block-fundamental-rank} gives
\begin{align}
    \rho_{T(\vb{A})}(Y_U)
    &=\rank\vb{P}_{Y_U}+\rank\vb{P}_{\overline{Y_U}}-\rank\vb{P}\nonumber\\
    &=\dim S-\dim S_U-\dim S_{\overline{U}}.
    \label{eq:stabilizer-cut-connectivity}
\end{align}
For a CSS code, the last expression is $\lambda_{C_X}(U)+\lambda_{C_Z}(U)$.
In \cref{eq:rm-stabilizer-spaces}, $C_Z^\perp=C_X\oplus\Span\{\vb{1}\}$ has dimension $m+1$.
Consequently, every physical-qubit cut of $\calR_m$ satisfies
\begin{align}
    \rho_{T(\vb{A}_m)}(Y_U)
    =\lambda_{C_X}(U)+\lambda_{C_Z^\perp}(U)
    \leq m+(m+1)=2m+1.
    \label{eq:rm-all-cut-bound}
\end{align}
Any rooted binary tree on the physical-qubit blocks is therefore a rank-decomposition tree of $T(\vb{A}_m)$ over $\calB_m$ with width at most $2m+1$.

For the concatenation, first take the outer Steane hierarchy and replace each seven-child node by a binary tree.
At a displayed cut within one such refinement, descendant stabilizers are supported on one side or the other.
The six stabilizer generators at that node and at most two independent restrictions of ancestor generators contain all remaining contributions to \cref{eq:stabilizer-cut-connectivity}.
Indeed, ancestor restrictions lie in the span of the two fixed logical Pauli vectors on the node's encoded qubit.
Changing the ancestor-generator basis leaves at most two generators nonzero on that node.
Thus, the reduced Tanner graph of the outer code has a rank decomposition over its physical-qubit partition of width at most $8$, independently of $L$.

Replace each outer leaf by a binary tree on its inner Reed-Muller block.
A cut through whole inner blocks has the same value in \cref{eq:stabilizer-cut-connectivity} as the corresponding outer-code cut, since the inner stabilizers are supported on individual blocks.
For a cut within one inner block, choose a basis of the encoded outer stabilizers in which at most two generators restrict nontrivially to that block.
All other generators outside the inner stabilizer space are supported in the complement.
Adding one generator increases the right-hand side of \cref{eq:stabilizer-cut-connectivity} by at most one.
Together with \cref{eq:rm-all-cut-bound}, this gives the bound $\max\{8,2m+3\}$.
These basis choices establish cut-rank bounds; the specified matrices $\vb{P}_m$ and $\vb{P}_{m,L}$ are unchanged.
The trees are constructed directly from the concatenation hierarchy.

We next bound the treewidth of the TN graphs $G_m$ and $G_{m,L}$.
Let $q$ be the largest odd integer at most $m$, and write $q=2t+1$.
In one Reed-Muller block, retain only the check nodes for $Z$-stabilizer generators indexed by $t$-subsets $J\subseteq[q]$.
Also retain the variable nodes for the $x$-coordinates of qubits whose labels have support a $(t+1)$-subset $K\subseteq[q]$.
By the definition of $\vb{h}_J$, their adjacency is $J\subset K$.
The induced subgraph is the middle-levels graph $H_q$ of the $q$-dimensional cube.
It is $\Delta_q$-regular on $M_q$ vertices, where
\begin{align}
    M_q=2\binom{q}{t}=\Theta(2^q/\sqrt{q}),
    \qquad \Delta_q=t+1.
    \label{eq:rm-middle-parameters}
\end{align}
Its adjacency eigenvalues are $\pm1,\ldots,\pm(t+1)$, so its combinatorial Laplacian has second-smallest eigenvalue $1$~\cite{qiu_2009_middle_cubes}.
For every $U\subseteq V(H_q)$, the Laplacian variational bound gives
\begin{align}
    |\delta_{H_q}(U)|\geq\frac{|U|(M_q-|U|)}{M_q},
    \label{eq:rm-middle-expansion}
\end{align}
where $\delta_{H_q}(U)$ is the set of edges crossing the cut.

A graph of treewidth $a$ has a vertex separator $W$ of size at most $a+1$ such that every component after removing $W$ has at most $M_q/2$ vertices.
If $|W|<M_q/4$, a union $U$ of these components can be chosen with $M_q/4\leq|U|\leq M_q/2$.
Every edge leaving $U$ ends in $W$, so \cref{eq:rm-middle-expansion} gives $\Delta_q|W|\geq3M_q/16$.
If $|W|\geq M_q/4$, the same lower bound on $|W|$ holds immediately.
Hence,
\begin{align}
    \tw(H_q)+1\geq\frac{3M_q}{16\Delta_q}
    =\Omega(n_m/m^{3/2}).
    \label{eq:rm-middle-treewidth}
\end{align}
Both $G_m$ and $G_{m,L}$ contain this induced subgraph, proving their treewidth lower bounds.

For the upper bound on $\tw(G_m)$, take one bag per physical qubit containing all $n_m+1$ checks, its two variable nodes, and its weight node.
Connecting these bags in any tree gives width $n_m+3$.
For $G_{m,L}$, follow the outer concatenation tree.
At every outer node, put its six checks, all ancestor checks, and the two root logical checks in the bag.
At an inner-block leaf, put its $4n_m-1$ local vertices, the $6L$ ancestor stabilizer checks, and the two root logical checks in the bag.
Every edge is covered, and the bags containing a given vertex are connected.
The maximum bag size is at most $4n_m+6L+1$, proving the remaining upper bound.

Finally, applying Rank DP to the reduced SOP representation in \cref{cor:systematic-sop} gives polynomial-time likelihood evaluation by \cref{eq:sop-dp-cost}, since $m=\log_2(n_m+1)$ and $n_m\leq n$.
Gaussian elimination, the local weight transforms, and the structural preprocessing also take polynomial time for these explicit trees.
Each code encodes one logical qubit, so four likelihood evaluations suffice for MLD.
The tensor-network lower bound follows from \cref{thm:sop-tn-separation}.
Substituting $L=\lfloor\log_7 n_m\rfloor$ gives the asymptotic bounds as claimed.
\end{proof}

\subsection{Width bounds before check elimination}
\label{app:rm-original-width}

We prove the bounds in \cref{rem:rm-original-width} for the matrices and physical-qubit partitions defined in \cref{subsec:rm-code-families}.
The check nodes remain singleton parts.
All cut-ranks below are computed over $\FF_2$.

\paragraph{Lower bounds.}
Let $q=2t+1$ be the largest odd integer at most $m$.
As in the proof of \cref{cor:rm-tn-separation}, restricting the monomial checks to $t$-subsets of $[q]$ and the variable nodes to $x$ coordinates of qubits supported on $(t+1)$-subsets gives an induced middle-levels graph $H_q$.
Its vertex count $M_q$ and maximum degree $\Delta_q$ are given in \cref{eq:rm-middle-parameters}.
The spectral bound \cref{eq:rm-middle-expansion} implies that every cut with $M_q/3\leq|U|\leq2M_q/3$ has at least $2M_q/9$ crossing edges.

Write $r=\rho_{H_q}(U)$ and choose $r$ basis rows of $\vb{A}_{H_q}[U,\overline U]$.
Their supports cover at most $r\Delta_q$ columns, because each row contains at most $\Delta_q$ nonzero entries.
Every nonzero column is covered: a column vanishing in every basis row vanishes in their entire span.
Each covered column contains at most $\Delta_q$ nonzero entries.
Therefore,
\begin{align}
    |\delta_{H_q}(U)|\leq\Delta_q^2\rho_{H_q}(U).
    \label{eq:middle-cut-rank-lower}
\end{align}
Every rooted binary tree with $M_q$ leaves displays a cut with between $M_q/3$ and $2M_q/3$ leaves on one side.
To see this, descend into a child with more than $2M_q/3$ leaves until neither child has that many, and then take the larger child.
Applying \cref{eq:middle-cut-rank-lower} to this cut in any rank decomposition yields
\begin{align}
    \rw(H_q)\geq\frac{2M_q}{9\Delta_q^2}
    =\Omega(2^q/q^{5/2})
    =\Omega(n_m/m^{5/2}).
    \label{eq:middle-rank-width-lower}
\end{align}

Now restrict any rank decomposition of $T(\vb{P}_m)$ over $\calC_m$ to the retained vertices of $H_q$.
Delete leaves with no retained vertex and suppress nodes with only one child.
Each retained check part remains a singleton.
Each retained physical-qubit part contains exactly one remaining $x$ coordinate and therefore also becomes a singleton.
Every displayed cut matrix in the resulting decomposition of $H_q$ is a submatrix of an original cut matrix.
Its rank cannot increase.
Thus, $\rw(T(\vb{P}_m);\calC_m)\geq\rw(H_q)$.
The same restriction inside any one inner block proves the lower bound for $\rw(T(\vb{P}_{m,L});\calC_{m,L})$.
Since $H_q$ is also an induced subgraph of $G_m$ and $G_{m,L}$, ordinary rank-width satisfies the same lower bounds.
The argument uses only inner stabilizer checks and does not depend on the two root logical checks.

\paragraph{Upper bounds.}
For $T(\vb{P}_m)$, place all $n_m+1$ check leaves in one root subtree and all physical-qubit leaves in the other.
Refine each subtree into any rooted binary tree.
Every nonroot node then contains only check parts or only physical-qubit parts.
Its cut matrix is a submatrix of $\vb{P}_m$ or its transpose, up to zero rows and columns.
Hence, its cut-rank is at most $\rank\vb{P}_m=n_m+1$.

For $T(\vb{P}_{m,L})$, follow the concatenation hierarchy.
At each outer node, combine its seven child clusters and six local check nodes using a binary tree.
At the root, also include the two logical check nodes.
Consider a cut within a nonroot refinement that includes $s$ of the six local checks.
Descendant checks in the selected child clusters have no neighbors outside those clusters.
The block of the cut matrix from selected checks to outside coordinates therefore has rank at most $s$.
In the reverse block, the remaining local checks contribute rank at most $6-s$.
All ancestor checks restrict to the span of the two fixed logical Pauli vectors on the current cluster and increase the rank of this block by at most two.
These two rectangular blocks occupy disjoint rows and columns.
Thus, the cut-rank is at most $s+(6-s)+2=8$.
At the root, there are eight local checks and no ancestor checks, giving the same bound.

Within an inner block, use the preceding separated-subtree construction for its $n_m-1$ stabilizer checks and its physical-qubit parts.
The internal cut-rank is at most $n_m-1$.
The restrictions of all outer checks to that block span at most two dimensions, so they increase this rank by at most two.
Consequently, the full decomposition has width at most $\max\{8,n_m+1\}$.

Finally, these decompositions give ordinary rank decompositions of $G_m$ and $G_{m,L}$.
Put each weight node into its physical-qubit part.
Its two incident edges stay within that part, so all displayed cut-ranks are unchanged.
Refine each resulting three-vertex part into singleton leaves.
Every new cut separates at most two vertices from the rest of the graph and has rank at most two.
The stated upper bounds are therefore unchanged.

\section{Numerical accuracy and implementation}
\label{app:numerical-accuracy}

This section develops numerical error bounds for the nonnegative DP in \cref{subsec:nonnegative-accuracy} and describes the implementation and checks used in the experiments.

\subsection{Roundoff-error analysis}
\label{app:roundoff-error}

We now quantify the roundoff error accumulated by the nonnegative DP in \cref{prop:nonnegative-realization}.
Since all summands are nonnegative, relative-error bounds for individual products of input weights also bound the relative error of their sum.

Let $\varepsilon_{\mathrm{mach}}$ be the unit roundoff, and define $\gamma_j:=j\varepsilon_{\mathrm{mach}}/(1-j\varepsilon_{\mathrm{mach}})$ for $j\varepsilon_{\mathrm{mach}}<1$.
We use the standard round-to-nearest model, in which each computed addition or multiplication equals its exact result times $1+\varepsilon$, with $|\varepsilon|\leq\varepsilon_{\mathrm{mach}}$.
Assume that each sum is evaluated by sequential accumulation or by a binary addition tree of no greater depth.

To track error accumulation through the tree, let $\delta_Y$ bound the number of rounding factors multiplying any product of input weights in an expanded entry of $F_Y$.
At a leaf block $B$, each entry sums $g_B$ weights, so we set
\begin{align}
    g_B&:=2^{|B|-\rank\vb{M}_B},
    &\delta_B&:=g_B.
    \label{eq:block-leaf-depth}
\end{align}
This bound allows one rounding factor per addition, including an initial addition into zero.

At an internal node, write $Y_1,Y_2$ for the child sets and $Y=Y_1\sqcup Y_2$.
By \cref{prop:nonnegative-realization}, each parent entry sums $T_Y$ products, and we propagate the error count by
\begin{align}
    T_Y&:=2^{\mu_Y},
    &\delta_Y&:=\delta_{Y_1}+\delta_{Y_2}+T_Y+1.
    \label{eq:block-error-depth}
\end{align}
Each product inherits the rounding factors from both child entries and acquires one additional factor from multiplication.
The sum contributes at most $T_Y$ further factors.
Let $\delta:=\delta_{[d]}$ be the count at the root, obtained from \cref{eq:block-leaf-depth,eq:block-error-depth}.
The following proposition converts this count into an error bound for the computed partition function.

\begin{samepage}
\begin{proposition}[Relative roundoff-error bound]
\label{prop:roundoff-error}
Evaluate the nonnegative DP in \cref{prop:nonnegative-realization} under the floating-point model and summation rule above.
Assume that the input weights are represented exactly and that binary linear algebra and indexing are exact.
Assume also that power-of-two rescaling and exponent bookkeeping are exact, and that no overflow, underflow, or discarded positive terms invalidate this model.
Let $Z:=Z_{\vb{M}}(\vb{t};\vb{w})$, and let $\widehat{Z}$ denote the exact real value represented by the computed output, including any recorded power-of-two scale.
If $\delta\varepsilon_{\mathrm{mach}}<1$, then
\begin{align}
    |\widehat{Z}-Z|\leq\gamma_\delta Z.
    \label{eq:block-forward-bound}
\end{align}
If, in addition, $\gamma_\delta<1$, the exact partition-function value lies in the interval
\begin{align}
    Z\in[L,U],
    \qquad L:=\frac{\widehat{Z}}{1+\gamma_\delta},
    \qquad U:=\frac{\widehat{Z}}{1-\gamma_\delta}.
    \label{eq:block-scalar-interval}
\end{align}
\end{proposition}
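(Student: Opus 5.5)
We prove the statement by expanding the computation exactly and tracking a rounding-factor count for each individual product of input weights, in the standard style of Higham. The claim to establish by induction over the postorder traversal of $\calT$ is the following. For every node with vertex set $Y$ and every state $\vb{h}\in I_Y$, the computed entry $\widehat F_Y(\vb{h})$ equals
\[
\sum_{\vb{x}_Y}\Bigl(\prod_{B\subseteq Y}\widetilde w_B(\vb{x}_B)\Bigr)\prod_{i=1}^{k(\vb{x}_Y)}(1+\varepsilon_i)^{\pm1},
\]
where the outer sum runs over exactly the summands of \cref{eq:block-table}, each $|\varepsilon_i|\le\varepsilon_{\mathrm{mach}}$, and $k(\vb{x}_Y)\le\delta_Y$. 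In fact only exponents $+1$ arise under round-to-nearest; we allow $\pm1$ only so that exact power-of-two rescalings are harmless.

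For the base case at a leaf $B$, each entry sums at most $g_B=2^{|B|-\rank\vb{M}_B}$ shifted weights. This is the fiber size of $\vb{x}_B\mapsto\vb{M}_B\vb{x}_B$, and all weights are represented exactly. Sequential accumulation starting from zero charges every term at most $g_B$ rounding factors, and a binary tree of no greater depth charges no more. This gives $\delta_B=g_B$.

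For the inductive step at a join, \cref{prop:nonnegative-realization} says each parent entry is a sum of exactly $T_Y=2^{\mu_Y}$ products $F_{Y_1}(\vb{h}_1)F_{Y_2}(\vb{h}_2)$. Substituting the inductive expansions of the two child entries and distributing the product, each expanded term is a product of input weights for a unique assignment on $Y$, by the bijection in the proof of \cref{prop:nonnegative-realization}. That term carries at most $\delta_{Y_1}+\delta_{Y_2}$ factors inherited from the children. The multiplication adds one more factor and the $T_Y$-term summation adds at most $T_Y$ more. This is exactly the recurrence \cref{eq:block-error-depth}. Exact rescaling by powers of two and the exact recording of exponents contribute no factors. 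By the no-overflow and no-underflow hypothesis, the model $\mathrm{fl}(a\circ b)=(a\circ b)(1+\varepsilon)$ applies at every operation.

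At the root, $I_{[d]}=\{0\}$, and the exact value of the expanded sum is $Z$, by \cref{prop:nonnegative-realization} together with the shift $\vb{e}=\vb{e}_0+\vb{x}$. Next we apply the standard lemma: if $\prod_{i\le k}(1+\varepsilon_i)^{\pm1}=1+\theta_k$, then $|\theta_k|\le\gamma_k$ whenever $k\varepsilon_{\mathrm{mach}}<1$. Since $\gamma_k$ is monotone in $k$, every term satisfies $|\theta|\le\gamma_\delta$.

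The key step, and the only place where nonnegativity matters, is summing these perturbations. We write $\widehat Z-Z=\sum_\tau c_\tau\theta_\tau$, where each $c_\tau\ge0$ is a product of input weights. Because the $c_\tau$ are nonnegative, $|\widehat Z-Z|\le\gamma_\delta\sum_\tau c_\tau=\gamma_\delta Z$, which is \cref{eq:block-forward-bound}. With signed terms this step would fail, because $\sum_\tau|c_\tau|$ could far exceed $Z$. For the interval, the bound gives $(1-\gamma_\delta)Z\le\widehat Z\le(1+\gamma_\delta)Z$, and dividing, which is valid since $\gamma_\delta<1$, yields \cref{eq:block-scalar-interval}.

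The main obstacle is the bookkeeping in the inductive step. We must make sure every expanded term, not merely the worst one, inherits the child counts additively. We must also confirm that the summation depth assumption bounds the factors on each term by the number of summands rather than by anything larger. We handle this by stating the invariant per expanded term, as above, instead of per table entry.
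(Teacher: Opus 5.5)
Your proposal is correct and follows essentially the same argument as the paper. Both expand each computed table entry into nonnegative monomials in the stored weights, each carrying its own rounding factors, and propagate the count $\delta_Y=\delta_{Y_1}+\delta_{Y_2}+T_Y+1$ through the joins by induction. Both then apply the $\gamma_k$ product lemma and use nonnegativity to pass from per-monomial relative errors to a relative error for the sum. (Your allowance for exponents $\pm1$ is unnecessary, since exact power-of-two rescaling contributes no factors at all, but it is harmless because the $\gamma_k$ lemma covers it.)
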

\end{samepage}

\begin{proof}
\label{proof:block-certificate}
A product of at most $j$ rounding factors $1+\varepsilon_i$, with $|\varepsilon_i|\leq\varepsilon_{\mathrm{mach}}$ and $j\varepsilon_{\mathrm{mach}}<1$, differs from one by at most $\gamma_j$~\cite{higham_2002_accuracy}.
Expand each computed entry into monomials in the exact input weights and attach the rounding factors introduced by the additions and multiplications.
A leaf monomial receives at most $\delta_B$ factors.
At a join, each product inherits the factors from both child monomials and one factor from multiplication.
The subsequent summation contributes at most $T_Y$ factors, which gives \cref{eq:block-error-depth}.
Exact power-of-two rescaling adds no rounding error.
Since all exact monomials are nonnegative, their relative-error bounds imply the same bound for their sum.
Induction gives \cref{eq:block-forward-bound}, including $Z=0$, for which $\widehat{Z}=0$.
Rearranging $(1-\gamma_\delta)Z\leq\widehat{Z}\leq(1+\gamma_\delta)Z$ gives \cref{eq:block-scalar-interval}.
\end{proof}

The bound concerns the exact real values of the stored weights and excludes error in forming those weights or uncertainty in the noise model.
The interval endpoints must be evaluated exactly or rounded outwards to preserve containment.
These are numerical error bounds, not statistical confidence intervals.
The consequences for posterior probabilities and decoding decisions are given in \cref{app:block-output-bounds}, followed by implementation details in \cref{app:block-arithmetic-policy}.

\subsection{Posterior normalization and excess failure probability bounds}
\label{app:block-output-bounds}

Let $Z_i:=Z_{\vb{M}}(\vb{t}_i;\vb{w})$, $i=1,\ldots,K$, be evaluated with a common rounding-factor bound $\delta$ under the assumptions of \cref{prop:roundoff-error}.
If $S:=\sum_i Z_i>0$, define $\pi_i:=Z_i/S$.
When the queries cover all logical labels at a fixed syndrome or detector vector, these are the corresponding logical posterior probabilities.

The implemented normalization sums the computed joint values and divides each value by that sum.
Set $G:=\gamma_{\delta+K}$ and assume $G<1$, with the same normal-arithmetic conditions also holding for the sum and divisions.
The denominator has relative error at most $G$: expanding its positive sum adds at most $K$ rounding factors to the evaluation of each summand, whose rounding-factor bound is $\delta$.
Each division adds at most $\varepsilon_{\mathrm{mach}}$.
Thus, for every positive $\pi_i$, its computed ratio lies between $(1-G)(1-\varepsilon_{\mathrm{mach}})/(1+G)$ and $(1+G)(1+\varepsilon_{\mathrm{mach}})/(1-G)$ times $\pi_i$.
Zero entries remain zero, giving the conservative bound
\begin{align}
    \|\widehat{\boldsymbol\pi}-\boldsymbol\pi\|_1
       \leq \frac{(1+G)(1+\varepsilon_{\mathrm{mach}})}{1-G}-1.
    \label{eq:block-posterior-bound}
\end{align}
A zero total mass has no conditional posterior and is reported separately.

For the decoding interpretation, assume that the queries include all competing logical labels at the fixed observed vector.
Let $[L_i,U_i]$ be the joint intervals from \cref{eq:block-scalar-interval}.
For a chosen label $a$ and $K\geq2$, its excess conditional failure probability is $\max_i\pi_i-\pi_a$ and satisfies
\begin{align}
    \max_i\pi_i-\pi_a
       \leq\min\!\left\{1,
       \frac{\max\{0,\max_{j\ne a}U_j-L_a\}}{\sum_i L_i}\right\},
    \label{eq:block-regret-bound}
\end{align}
provided $\sum_iL_i>0$.
Indeed, $Z_j-Z_a\leq U_j-L_a$ and $S\geq\sum_iL_i$; the additional bound by one follows because the excess conditional failure probability is a difference of probabilities.
With only one label, the excess conditional failure probability is zero.
Exact rational evaluation of endpoints and outward rounding of displayed bounds avoid an additional, unaccounted interval-rounding error.

\subsection{Implementation and numerical checks}
\label{app:block-arithmetic-policy}

We now describe the guarded CPU evaluator that supplies likelihood intervals under the arithmetic and scaling assumptions of \cref{prop:roundoff-error}.
The FP64 timing policy for the Reed-Muller experiments is specified in \cref{app:rm-capacity}.

\paragraph{Reductions and storage.}
At joins, the CPU implementation either accumulates products into each parent entry or reduces a vector of its products.
Independent parent entries may be processed in chunks, changing workspace without changing their individual sums.
The depth bound permits any binary addition tree with at most the depth of a sequential reduction of the same terms.
The leaf bound $\delta_B=g_B$ deliberately allows one more rounding factor than needed by a sequential sum of exact inputs.
A zero-column block is evaluated by summing its stored categorical entries, rather than replacing their sum by a presumed exact normalization.
Integer-map caches trade storage for reuse; their construction and data movement, like chunk workspace, are additional to the scalar-operation bound.

\paragraph{Arithmetic guards and range.}
This evaluator guards floating-point operations against underflow, overflow, invalid values, and division by zero, and explicitly checks array-level nonnegativity and positive subnormals.
It does not inspect every internal node of a library reduction.
Small rounding and subnormal probes reject some unsupported arithmetic modes; these probes are guards, not a proof that arbitrary hardware or libraries satisfy the proposition's assumptions.
A failed bound, range check, or resource limit causes a recorded rejection or an allowed wider-precision attempt.
The actual significand and exponent range of the platform's format are recorded; in particular, \texttt{longdouble} is not presumed to be binary128.
Exhaustion returns an unresolved result rather than an unchecked likelihood.
A mantissa and integer exponent can represent a value below the format's unscaled range, although a common-exponent vector can still fail if its entries have excessive dynamic range.

\paragraph{Differentiation.}
The earlier capacity-learning implementation applies floating-point reverse-mode differentiation to the positive sums and products.
The chosen integer exponents are detached from differentiation, with their power-of-two scales held fixed and compensated at the output.
For any fixed such scales, the underlying algebraic expression is the same polynomial.
Optional checkpointing recomputes joins during the backward pass and trades work for saved intermediates.
These implementation choices do not extend the likelihood error bound to floating-point derivatives or automatically to graphics processing unit (GPU) arithmetic; gradient checks remain separate validation evidence.
The three-parameter circuit study instead propagates forward sensitivities through the same sums and products, sharing the primal power-of-two scales; its independent checks and resource policy are described in \cref{app:circuit-learning-revision}.

\section{A surface-code example}
\label{app:worked-example}

We illustrate the SOP representation and Rank DP for the $[[9,1,3]]$ rotated surface code in \cref{fig:surface-code} under independent $X$ errors~\cite{tomita_2014_lowdistance}.
Let $\vb{H}_X,\vb{H}_Z\in\FF_2^{4\times9}$ have the supports of its four $X$-type and four $Z$-type generators as rows, in the order shown in the figure.
Both matrices have rank four and satisfy $\vb{H}_X\vb{H}_Z^{\sfT}=\vb{0}$, so the stabilizer matrix $\vb{H}:=\left(\begin{smallmatrix}\vb{H}_X&\vb{0}\\\vb{0}&\vb{H}_Z\end{smallmatrix}\right)$ defines a code with one logical qubit.
We choose logical representatives $\overline{X}:=X_1X_4X_7$ and $\overline{Z}:=Z_1Z_2Z_3$, which commute with every stabilizer and anticommute with each other.
Write $C_X:=\im\vb{H}_X^{\sfT}$ and $C_Z:=\im\vb{H}_Z^{\sfT}$, and let $\overline{\vb{x}},\overline{\vb{z}}\in\FF_2^9$ be the binary supports of $\overline{X},\overline{Z}$.
Both logical cosets $\overline{\vb{x}}+C_X$ and $\overline{\vb{z}}+C_Z$ have minimum Hamming weight three, which implies that the code has distance three.

\begin{figure}[htbp]
\centering
\begin{minipage}[c]{0.40\textwidth}
\centering
\begin{tikzpicture}[x=0.9cm,y=0.9cm,
 q/.style={circle,draw,fill=white,minimum size=5mm,inner sep=0pt},
 xp/.style={draw=blue!55!black,fill=blue!14},
 zp/.style={draw=orange!70!black,fill=orange!18}]
\path[xp] (0,1.2) rectangle (1.2,2.4);
\path[xp] (1.2,0) rectangle (2.4,1.2);
\path[zp] (1.2,1.2) rectangle (2.4,2.4);
\path[zp] (0,0) rectangle (1.2,1.2);
\path[xp] (1.2,2.4)--(1.8,3.05)--(2.4,2.4)--cycle;
\path[xp] (0,0)--(0.6,-0.65)--(1.2,0)--cycle;
\path[zp] (0,2.4)--(-0.65,1.8)--(0,1.2)--cycle;
\path[zp] (2.4,1.2)--(3.05,0.6)--(2.4,0)--cycle;
\foreach \lab/\xx/\yy in {1/0/2.4,2/1.2/2.4,3/2.4/2.4,4/0/1.2,5/1.2/1.2,6/2.4/1.2,7/0/0,8/1.2/0,9/2.4/0}
    \node[q] at (\xx,\yy) {\lab};
\node[blue!55!black] at (0.6,1.8) {$X$};
\node[blue!55!black] at (1.8,0.6) {$X$};
\node[orange!70!black] at (1.8,1.8) {$Z$};
\node[orange!70!black] at (0.6,0.6) {$Z$};
\end{tikzpicture}
\end{minipage}\hfill
\begin{minipage}[c]{0.57\textwidth}
\centering
\begin{tabular}{ccc}
\toprule
Row & $X$-type generator & $Z$-type generator\\
\midrule
1 & $X_2X_3$ & $Z_1Z_4$\\
2 & $X_1X_2X_4X_5$ & $Z_2Z_3Z_5Z_6$\\
3 & $X_5X_6X_8X_9$ & $Z_4Z_5Z_7Z_8$\\
4 & $X_7X_8$ & $Z_6Z_9$\\
\bottomrule
\end{tabular}
\end{minipage}
\caption{The $[[9,1,3]]$ rotated surface code. Data qubits are numbered row by row. Blue and orange regions support $X$- and $Z$-type stabilizers; boundary triangles support weight-two generators. The table fixes the row order of $\vb{H}_X$ and $\vb{H}_Z$.}
\label{fig:surface-code}
\end{figure}

\subsection{Reduced Tanner graph and rank decomposition}

\begin{figure}[tb]
\centering
\begin{minipage}[c]{0.46\textwidth}
\centering
\begin{tikzpicture}[x=0.82cm,y=0.85cm,font=\small,
 b/.style={rectangle,draw,fill=blue!9,minimum size=7mm,inner sep=1pt},
 d/.style={circle,draw,fill=orange!12,minimum size=7mm,inner sep=1pt}]
\node[b] (b1) at (0,1.8) {$\lambda_1$};
\node[b] (b2) at (3.6,1.8) {$\lambda_2$};
\node[b] (b4) at (1.8,1.8) {$\lambda_4$};
\node[b] (b6) at (0.5,-0.8) {$\lambda_6$};
\node[b] (b7) at (3.1,-0.8) {$\lambda_7$};
\node[d] (d3) at (4.9,1.8) {$e_3$};
\node[d] (d5) at (1.8,3.1) {$e_5$};
\node[d] (d8) at (4.5,-0.8) {$e_8$};
\node[d] (d9) at (1.8,0.5) {$e_9$};
\draw (b1)--(d5) (b1)--(d9)
            (b2)--(d3) (b2)--(d5) (b2)--(d9)
            (b4)--(d5) (b4)--(d9)
            (b6)--(d9) (b7)--(d8) (b7)--(d9);
\end{tikzpicture}
\end{minipage}\hfill
\begin{minipage}[c]{0.51\textwidth}
\centering
\begin{tikzpicture}[x=0.57cm,y=0.85cm,font=\small,
 inner/.style={circle,fill,inner sep=1.5pt}]
\foreach \lab/\xx in {1/0,4/1,2/2,3/3,5/4,6/5,9/6,7/7,8/8}
    \node (q\lab) at (\xx,0) {\lab};
\node[inner] (n14) at (0.5,0.9) {};
\node[inner] (n23) at (2.5,0.9) {};
\node[inner] (n69) at (5.5,0.9) {};
\node[inner] (n78) at (7.5,0.9) {};
\node[inner] (nl) at (1.5,1.8) {};
\node[inner] (n6789) at (6.5,1.8) {};
\node[inner] (nr) at (5.25,2.7) {};
\node[inner,label=above:root] (root) at (3.375,3.6) {};
\draw (n14)--(q1) (n14)--(q4) (n23)--(q2) (n23)--(q3)
            (n69)--(q6) (n69)--(q9) (n78)--(q7) (n78)--(q8)
            (nl)--(n14) (nl)--(n23)
            (n6789)--(n69) (n6789)--(n78)
            (nr)--(q5) (nr)--(n6789)
            (root)--(nl) (root)--(nr);
\end{tikzpicture}
\end{minipage}
\caption{Left: the reduced Tanner graph $G=T(\vb{A})$ from \cref{eq:surface-systematic}.
Squares carry Fourier variables $q_b=\lambda_b$ for $b\in B_0$, and circles carry free error bits $q_j=e_j$ for $j\in D_0$.
Right: a rooted rank-decomposition tree of width one, whose nonroot cuts each have a one-bit boundary signature.
Both logical classes use this graph and tree; only the signs of the transformed singleton weights differ.}
\label{fig:surface-graphs}
\end{figure}

\begin{example}[Logical-class likelihoods on the reduced Tanner graph]
\label{ex:surface-sop}
For the surface code in \cref{fig:surface-code}, take $p_I=0.9$, $p_X=0.1$, and $p_Y=p_Z=0$ independently on each qubit.
The $Z$-error coordinates are fixed to zero, so we retain only $\vb{e}=\vb{x}\in\FF_2^9$ and use the singleton partition $\calB=\{\{i\}:i\in[9]\}$.
The local weight tables are $w_i:=w_{\{i\}}=(0.9,0.1)$.
Consider the syndrome $\vb{s}=(0,0,0,0,0,1,1,0)^{\sfT}$, with the four $X$ checks followed by the four $Z$ checks in the order of \cref{fig:surface-code}.
We may choose the compatible representative $t:=X_5$; its binary $X$ component $\vb{r}_X\in\FF_2^9$ is the unit vector at qubit five.
Only the classes represented by $X_5\overline{X}^{a}$, $a\in\{0,1\}$, have nonzero probability.
Let $\vb*{\ell}_a$ be the binary vector of $\overline{X}^{a}$ and write $P_a^{\cc}:=P^{\cc}(\vb{s},\vb*{\ell}_a)$ for the corresponding joint likelihood.
The four $Z$ checks constrain the syndrome, while the parity against $\overline{\vb{z}}$ distinguishes the two possible $X$ logical classes.
Together they give
\begin{align}
    \vb{K}_X:=\begin{pmatrix}\vb{H}_Z\\\overline{\vb{z}}^{\sfT}\end{pmatrix}
    =\begin{pmatrix}
        1&0&0&1&0&0&0&0&0\\
        0&1&1&0&1&1&0&0&0\\
        0&0&0&1&1&0&1&1&0\\
        0&0&0&0&0&1&0&0&1\\
        1&1&1&0&0&0&0&0&0
    \end{pmatrix},
    \qquad \vb{t}_a:=\begin{pmatrix}0\\1\\1\\0\\a\end{pmatrix}.
    \label{eq:surface-K}
\end{align}
The matrix has rank five and kernel $C_X$.
Its last target bit is $a$ because $\overline{\vb{z}}^{\sfT}\vb{r}_X=0$ and $\overline{\vb{z}}^{\sfT}\overline{\vb{x}}=1$.
Consequently, $\vb{K}_X\vb{e}=\vb{t}_a$ selects exactly the class of $X_5\overline{X}^{a}$, giving $P_a^{\cc}=Z_{\vb{K}_X}(\vb{t}_a;\vb{w})$ as in \cref{eq:inference-primitive}.

Choose the column basis $B_0=\{1,2,4,6,7\}$ and its complement $D_0=\{3,5,8,9\}$.
As in \cref{subsec:systematic-sop}, the constraint becomes $\vb{e}_{B_0}+\vb{A}\vb{e}_{D_0}=\vb*{\tau}_a$, where
\begin{align}
    \vb{A}=\begin{pmatrix}
        0&1&0&1\\
        1&1&0&1\\
        0&1&0&1\\
        0&0&0&1\\
        0&0&1&1
    \end{pmatrix},\qquad
    \vb*{\tau}_0=\begin{pmatrix}1\\1\\1\\0\\0\end{pmatrix},\qquad
    \vb*{\tau}_1=\begin{pmatrix}0\\1\\0\\0\\1\end{pmatrix},
    \label{eq:surface-systematic}
\end{align}
with rows and columns ordered increasingly by $B_0$ and $D_0$, respectively.
Let $G:=T(\vb{A})$, with bipartition $(B_0,D_0)$, as in \cref{subsec:systematic-sop}.
There are four free error bits, so each target admits $2^4=16$ errors.
The graph in \cref{fig:surface-graphs} shows the stabilizer changes relating them.
For example, vertex $3$ has the single neighbor $2$: flipping $e_3$ forces $e_2$ to flip, giving the stabilizer $X_2X_3$.
Vertex $5$ has neighbors $1,2,4$, giving $X_1X_2X_4X_5$.
The analogous changes at vertices $8$ and $9$ give $X_7X_8$ and $X_1X_2X_4X_6X_7X_9$, respectively.
These four independent changes generate $C_X$; the last is a product of three generators listed in \cref{fig:surface-code}.
Thus, the graph describes the assignments summed in each logical-class likelihood.

For $w_i=(0.9,0.1)$, the singleton weights in \cref{eq:systematic-sop-weights} are
\begin{align}
    g_b^{(a)}(0)=1,\qquad
    g_b^{(a)}(1)=(-1)^{(\tau_a)_b}\frac{4}{5}\quad(b\in B_0),
    \qquad g_j^{(a)}=w_j\quad(j\in D_0),
    \label{eq:surface-sop-weights}
\end{align}
and
\begin{align}
    P_a^{\cc}=\frac{1}{32}\calS(G;g^{(a)}).
    \label{eq:surface-SOP}
\end{align}
Changing $a$ flips the unary signs at vertices $1,4,7$, the support of $\overline{X}$, while leaving the graph unchanged.
The two SOP evaluations give $32P_0^{\cc}=1.480591872$ and $32P_1^{\cc}=0.621808128$.
Their comparison selects the class $a=0$, giving the MLD recovery $X_5$.
The next example explains which terms in these sums can be combined during their evaluation.
\end{example}

\begin{example}[Rank DP on a width-one decomposition]
\label{ex:surface-width}
For the surface-code graph, take the cut $Y=\{1,2,3,4\}$.
Its crossing-edge sign is
\begin{align}
    (-1)^{(\lambda_1+\lambda_2+\lambda_4)(e_5+e_9)}.
\end{align}
Thus, the DP table in \cref{eq:sop-dp-table} groups assignments to $\lambda_1,\lambda_2,\lambda_4,e_3$ into two entries, indexed by the parity $\lambda_1+\lambda_2+\lambda_4$.
This remains true for both logical classes, since changing the class changes the unary weights but not the crossing edges.

The rooted binary tree
\begin{align}
    \calT=\bigl(((1,4),(2,3)),\ (5,((6,9),(7,8)))\bigr)
    \label{eq:surface-tree}
\end{align}
has nontrivial node cuts, up to complementation, with sides $\{1,4\}$, $\{2,3\}$, $\{1,2,3,4\}$, $\{6,9\}$, $\{7,8\}$, and $\{6,7,8,9\}$.
For each such side, the nonzero rows of the cut matrix are identical, so the cut-rank is one.
Every singleton cut also has rank one, while the root cut has rank zero.
Thus, $\calT$ is a rank-decomposition tree of $G$ over $\calB$ with width one, in the rooted convention of \cref{subsec:rank-width}.
Consequently, Rank DP uses at most two entries per table and combines at most four child-signature pairs at each join.
The root entry equals $32P_a^{\cc}$ by \cref{eq:surface-SOP}.
The same tree supports the nonnegative realization in \cref{subsec:nonnegative-accuracy}, with the same table sizes.
\end{example}

\section{Additional experimental protocols and evidence}
\label{sec:supplemental-experiments}

Reproducing the numerical comparisons in \cref{sec:numerical-experiments} requires the noise models, requested outputs, and computational limits to be specified for each method.
This appendix provides these details, together with the numerical error bounds and independent checks used to validate the results.
The protocols for circuit noise learning, postselection, and decoder assessment in \cref{sec:applications} follow the decoding experiments.
For these applications, we distinguish errors in likelihood evaluation from statistical uncertainty due to finite sampling.

\subsection{Models, outputs, and accounting conventions}
\label{app:protocols}

\paragraph{Noise models and outputs.}
Code-capacity models use independent physical-qubit Pauli blocks, with input probabilities ordered as $(I,X,Y,Z)$.
Within a block, coordinates $(x_i,z_i)$ use the internal order $(00,10,01,11)=(I,X,Z,Y)$; the input routines apply this permutation explicitly.
In each encoded syndrome or logical label, the first constraint row determines the most significant bit.
For a code encoding $k$ qubits, a full output contains $4^k$ logical-class likelihoods $v_\ell(s)=P(s,\ell)$; their sum is $P(s)$, and their normalization gives the logical posterior distribution.
Construction checks verify CSS commutation, check ranks, and logical operators.
Code distances follow from the cited constructions and, for the geometric examples, their string and sheet representatives.

The surface circuits use rotated Z memory, and the color circuits use Stim's \texttt{memory\_xyz} construction.
The pair $(d,r)$ specifies the nominal code distance and number of syndrome-extraction rounds.
Gate, reset, and measurement rates are $.001$ unless a rate sweep or noise-learning setting is stated; the additional idle-noise rate is zero.
The circuit output is the pair $P(s,L_0=0),P(s,L_0=1)$ for the final measured memory observable.
For these circuits, conversion to a detector error model (DEM) is exact and retains each event's complete detector and observable support.
The conversion flattens loops and disables both error decomposition and the disjoint-error approximation.
Thus, an event affecting multiple detectors remains a single joint event.
Rank DP also accepts larger prescribed fault blocks; a $16$-outcome two-qubit block is included in the small-instance checks.

\paragraph{Sampling and arithmetic inputs.}
Each paired performance comparison uses the same stored FP64 weights and queries.
The roundoff bounds treat those stored weights as exact inputs.
The circuit timing cohorts consist of the all-zero detector outcome, called the quiet syndrome, followed by the first $31$ distinct nonzero detector outcomes from seeded sampling.
This selection uses detector outcomes alone.
The IID capacity and application cohorts retain every physical draw,
including repeated syndromes and their multiplicities.

\begin{table}[!htbp]
\centering\small
\caption{Outputs, numerical requirements, and timed phases. The
precision diagnostic reports all three accuracy criteria separately.
Independent reference calculations and checks are recorded separately
from the timing measurements in the first and third rows.}
\label{tab:experiment-conventions}
\begin{tabular}{>{\raggedright\arraybackslash}p{.18\linewidth}>{\raggedright\arraybackslash}p{.20\linewidth}>{\raggedright\arraybackslash}p{.27\linewidth}>{\raggedright\arraybackslash}p{.23\linewidth}}
\toprule
Study & Output & Accuracy requirement & Timed phases \\
\midrule
Code capacity & All $4^k$ masses & Vector relative $\ell_1$ error
$\leq10^{-10}$ & Reused evaluation \\
Circuit precision & Two observable masses & Vector error, sector
errors, and ML separation & Accuracy diagnostic \\
Circuit batches & Two observable masses and ML label & Each sector's
relative error $\leq10^{-8}$ and separated winner intervals & Startup,
compilation, evaluation, own checks, recording \\
\bottomrule
\end{tabular}
\end{table}

\paragraph{Resource accounting.}
Planning searches for a rank decomposition or a TN contraction path.
Compilation builds reusable maps and numerical routines, which are then applied to the weights and queries during evaluation.
Returned values are accepted after their numerical checks.
Failed checks, timeouts, and unattempted queries are recorded separately.
Numeric-array forecasts count live array storage.
Process resident set
size (RSS) also includes interpreter, allocator, and compiled-map costs;
address-space limits constrain virtual memory.
One GiB is $2^{30}$ bytes, one MiB is $2^{20}$ bytes, and one MB is $10^6$ bytes.
Operation counts use each representation's
stated convention and guide its plan selection.
Workers run on pinned cores of a shared host, with one numerical thread
and repeated paired measurements.
The software environment is given in
\cref{app:evidence-index}.

\subsection{A signed-arithmetic cancellation diagnostic}
\label{app:binary-stability}

A fixed independent-X-noise example on the $[[162,2,9]]$ toric code
shows how cancellation can affect an ML decision.
To specify the example,
index the two edge families by $h_{rc}=9r+c$ and $v_{rc}=81+9r+c$,
where $r,c\in\{0,\ldots,8\}$ and coordinate arithmetic is modulo nine.
The binary syndrome check at $(r,c)$ has support
$\{h_{rc},h_{r,c+1},v_{rc},v_{r-1,c}\}$; retain all checks except
$(8,8)$.
The two logical-parity rows have supports
$\{h_{r0}:r=0,\ldots,8\}$ and $\{v_{0c}:c=0,\ldots,8\}$, in that order.
The query syndrome is generated by X errors on the following edges:
\[
    \begin{split}
    \{&10,19,20,38,41,42,44,46,49,60,61,79,89,93,94,\\
    &95,96,102,107,109,123,126,127,128,146,152,158,159,161\}.
    \end{split}
\]
This $29$-error pattern was drawn at $p=.12$ and reevaluated at
$p=10^{-4}$ to probe small likelihoods.
In logical-label order
$00,01,10,11$, the likelihoods computed by nonnegative Rank DP are approximately
\[
    (1.30615\!\times\!10^{-99},\;1.21129\!\times\!10^{-96},\;
    3.29534\!\times\!10^{-96},\;1.03663\!\times\!10^{-99}).
\]
Their sum is $4.50898\times10^{-96}$, and the corresponding numerical intervals certify $10$ as the unique maximizing label.
Evaluating the signed SOP representation with Rank DP instead returns
\[
    (-2.21366\!\times\!10^{-94},\;-1.84342\!\times\!10^{-92},\;
    -3.94904\!\times\!10^{-93},\;1.65716\!\times\!10^{-93}),
\]
which has three negative entries and selects $11$.
The nonnegative calculation treats the stored FP64 value of $p$ as exact and forms $1-p$ in its evaluation precision.
Its relative vector error bound is $6.27\times10^{-14}$; an independently evaluated nonnegative TN agrees to relative $\ell_1$ difference $1.90\times10^{-15}$.

At $p=.5$, the diagnostic also includes exactly tied sectors.
At $p=.49$, some validated intervals overlap; the output then records
the possible maximizing labels and a bound on the excess conditional failure probability
as in \cref{subsec:output-guarantees}.

\subsection{Numerical error in multiprecision Fourier evaluation}
\label{app:fourier-certificate}

The signed Fourier TN uses an absolute error bound obtained from the
sum of absolute input-weight monomials.
For a fixed query, write
$Z:=Z_{\vb{M}}(\vb{t};\vb{w})$ and let $\widehat Z$ denote its
computed value.
For the singleton Bernoulli DEM models, let
$w_{i0},w_{i1}\geq0$ be the two exact stored FP64 weights of block $i$,
and define
\begin{equation}
    B=\prod_{i=1}^{N}(w_{i0}+w_{i1}),
    \label{eq:fourier-expansion-norm}
\end{equation}
where $N$ is the number of blocks.
For each fixed detector and logical
character, expanding the product of factors $w_{i0}\pm w_{i1}$ gives
signed input-weight monomials whose absolute sum is
$\sum_{\vb{b}\in\FF_2^N}\prod_iw_{i b_i}=B$.
The detector-character normalization $2^{-m}$ averages these expansions
over $2^m$ characters, where $m$ is the number of detector constraints.
Any normalized inverse logical Walsh transform is
another signed average.
The triangle inequality therefore bounds the
absolute expansion sum of each final scalar by $B$.
Keeping the stored block sums in $B$ accounts for their exact
normalization.

The fixed pairwise contraction tree combines disjoint sets of original factors.
Expanding the floating-point calculation therefore perturbs each signed monomial by a product of elementary rounding factors.
Let $\varepsilon_{\mathrm{mach}}=2^{-b}$ be the unit roundoff for arithmetic with $b$ significand bits.
If $D$ bounds the number of rounding factors for every monomial, correctly rounded scalar addition, subtraction, and multiplication give
\begin{equation}
    |\widehat Z-Z|\leq\gamma_D B,
    \qquad \gamma_D=\frac{D2^{-b}}{1-D2^{-b}},
    \qquad D2^{-b}<1.
    \label{eq:fourier-absolute-bound}
\end{equation}
Indeed, the standard product bound limits each monomial's coefficient
perturbation by $\gamma_D$; summing absolute errors gives
\cref{eq:fourier-absolute-bound}.
This argument covers cancellation and
explains why a tiny result can require high precision even when the
absolute error bound is small.

For an unsliced contraction, let $U_j$ be the union of the index sets of the two input tensors at step $j$, and let $R_j$ be the index set of the resulting tensor.
The operation count for the fixed path includes
$2^{|U_j|}$ scalar multiplications and
$2^{|U_j|}-2^{|R_j|}$ scalar additions, so that
\[
    A=\sum_j\bigl(2^{|U_j|+1}-2^{|R_j|}\bigr),
    \qquad D=2A+2N+2q+10,
\]
where $q$ is the number of logical output bits.
The term $2A$ bounds the number of rounding factors from the contraction routines, including summation over indices present in only one operand and possible accumulation from zero.
The count $A$ already includes multiplications and additions separately.
The term $2N$ covers constructing both $w_{i0}+w_{i1}$ and $w_{i0}-w_{i1}$ for every block; $2q$ allows a sum or difference and a halving at each inverse Walsh--Hadamard stage along a monomial's evaluation.
The final $10$ provides an additional margin in the bound.

The sliced scalar executor uses the counts attached to its compiled
kernel list.
Within one slice, it charges one addition per input element
of each one-operand reduction and one multiplication plus one addition
per matrix-product term, including possible accumulation from zero.
Let $C$ be the sum of these kernel counts and $S=2^s$ the number of
assignments to the $s$ sliced binary indices.
A bound on the number of rounding factors is
\[
    D=S(C+1)+2N+10.
\]
The term $SC$ counts all slice contractions, and the additional $S$
counts their sequential outer sum from zero.
Here, the requested logical
label is included in the affine constraints, so no final logical Walsh
transform is needed.
Every unary factor associated with an original constraint retains its exact $\pm1/2$ normalization even when its index is sliced, and every binary slice assignment is evaluated once.
The numerical enclosure is formed after the complete slice sum.
Slicing changes
the execution cost and rounding count without changing the normalized
mathematical sum or $B$.

The implementation calculates $B$ rationally from the stored weights
and exports a checked power-of-two upper bound $B_+\geq B$ (with zero
permitted when $B=0$).
It forms the enclosure
\[
    [\max(0,\widehat Z-\gamma_DB_+),\;
    \widehat Z+\gamma_DB_+]
\]
with exact rational endpoint arithmetic.
A scalar is accepted at a requested relative tolerance only when the entire enclosure supports that tolerance.
The main color-code comparisons use fixed MP128 and require its error bound to establish the requested tolerance.
Earlier adaptive calculations try $53,80,112,128,160$ bits and include every attempt in the reported cost.

The arithmetic assumptions are exact conversion of finite FP64
inputs at $b\geq53$, round-to-nearest correctly rounded \texttt{mpmath}
scalar operations, and object-array kernels that use those operations.
Arbitrary exponents avoid finite-range underflow and overflow in this
model.
The bound in \cref{eq:fourier-absolute-bound} relies on these
arithmetic assumptions.
The distance-$3$, seven- and nine-round
postselection examples additionally use $160$-bit evaluations to check
agreement with the likelihood intervals from nonnegative Rank DP.

\subsection{Reed-Muller structural and code-capacity experiments}
\label{app:rm-capacity}

We use the construction in \cref{eq:rm-stabilizer-spaces,eq:rm-sector-matrix},
with qubits ordered by their nonzero binary labels $1,\ldots,2^m-1$
and monomial rows ordered by degree and then lexicographically.
Both logical representatives are all-ones vectors, retained at each Steane concatenation level.
For an error $(\vb{x},\vb{z})$, the implementation labels the logical sectors by the two commutation parities $\lambda=(\vb{1}^{\sfT}\vb{x},\vb{1}^{\sfT}\vb{z})$.
For a syndrome representative $(\vb{x}_s,\vb{z}_s)$, the residual logical label has coordinates $\ell=\lambda+(\vb{1}^{\sfT}\vb{x}_s,\vb{1}^{\sfT}\vb{z}_s)$ over $\FF_2$.
This syndrome-dependent relabelling preserves the likelihood values and maps their maximizing labels and ties accordingly.

\paragraph{Cut-ranks of the selected rank decomposition.}
The matrix $\vb{H}_X$ in \cref{eq:rm-sector-matrix} has the nonzero $m$-bit labels as columns.
Let $\vb{H}^+:=[\vb{H}_X^{\sfT},\vb{1}]^{\sfT}$ append an all-ones row to $\vb{H}_X$, so that $C_Z^\perp=\rowsp(\vb{H}^+)$.
For a set $U$ of physical qubits, let $\overline{U}$ be its complement and set $Y_U:=\bigcup_{i\in U}B_i$, where $B_i=\{i,n_m+i\}\in\calB_m$.
By \cref{eq:stabilizer-cut-connectivity}, the cut-rank of the reduced Tanner graph is
\begin{equation}
    \rho_{T(\vb{A}_m)}(Y_U)
    =\rank\vb{H}_{X,U}+\rank\vb{H}_{X,\overline{U}}
     +\rank\vb{H}^+_U+\rank\vb{H}^+_{\overline{U}}-(2m+1).
    \label{eq:rm-numerical-cut-rank}
\end{equation}
Thus, the structural sweep reduces to binary rank calculations in
dimensions $m$ and $m+1$.

The rank-decomposition tree recursively partitions the labels by successive binary coordinates.
Each nonroot node therefore represents an affine subspace with a fixed binary prefix, with zero omitted when present.
For a nonzero affine coset of dimension $r\leq m-2$,
the two local ranks in \cref{eq:rm-numerical-cut-rank} are $r+1,r+1$
and the complementary ranks are $m,m+1$, giving $2r+2$.
For a prefix subspace with zero removed and $2\leq r\leq m-2$,
the local ranks are $r,r+1$, giving $2r+1$.
A singleton has cut rank two, and both sides of the top hyperplane split
have cut-rank $2m-2$.
Hence, this tree has width exactly $2m-2$, giving $\rw(T(\vb{A}_m);\calB_m)\leq2m-2$.
These identities independently check every node cut for $m=3,\ldots,17$;
the sweep contains $524{,}227$ cuts.
Direct full-matrix ranks provide a
further check through $m=7$.

The concatenated codes $\calQ_{4,1}=[[105,1,9]]$ and $\calQ_{4,2}=[[735,1,27]]$ provide additional structural checks.
The selected hierarchical rank decompositions of $T(\vb{A}_{4,1})$ and $T(\vb{A}_{4,2})$ over their physical-qubit partitions both have width eight.

\paragraph{Finite-size treewidth bounds for the original TN graph.}
Let $q$ be the largest odd integer at most $m$,
$M_q=2\binom{q}{(q-1)/2}$, and $\Delta_q=(q+1)/2$.
The separator argument in \cref{eq:rm-middle-treewidth} gives the integer
lower bound $\lceil3M_q/(16\Delta_q)\rceil-1$.
For a complementary finite-size bound, fix
$a=\lfloor m/2\rfloor$ label coordinates.
Retain the $2^a-1$ nonempty
monomial checks on these coordinates, the all-ones logical check in the
same Pauli component, and the $2^{m-a}$ variable labels containing all
these coordinates.
The induced graph is $K_{2^a,2^{m-a}}$.
All selected monomials have degree at most $m-2$, so they are present in
the specified matrix.
Therefore, the plotted bound is
\begin{equation}
    \tw(G_m)\geq
    \max\left\{2^{\lfloor m/2\rfloor},
       \left\lceil\frac{3M_q}{16\Delta_q}\right\rceil-1\right\}.
    \label{eq:rm-numerical-treewidth-bound}
\end{equation}
The same biclique occurs inside one inner block of $G_{m,L}$, because
the root logical row restricts to all ones on that block.
These bounds concern the original graph with its specified check basis
and logical rows.
The timing controls below use algebraically
preprocessed representations.

\paragraph{Nonnegative TN controls.}
The candidate representations are categorical XOR chains and stabilizer-coset networks in the monomial and reduced row-echelon form (RREF) bases.
Each receives one greedy and two seeded random-greedy searches, capped at five seconds each.
For each representation, a feasible plan with the smallest estimated operation count is tested on the zero syndrome and on IID sampled syndromes.
The representation with the lowest median evaluation time is selected for the comparison.

A conditional parity-chain control first solves $\vb{H}_Z\vb{x}=\vb{s}_Z$ and $\vb{1}^{\sfT}\vb{x}=\lambda_x$.
Given one solution $\vb{x}_0$, all solutions are $\vb{x}=\vb{x}_0+\vb{H}_X^{\sfT}\alpha$ for $\alpha\in\FF_2^m$.
For each $\alpha$, a parity chain with $m+1$ state bits sums the joint weights $p_i(x_i,z_i)$ over $\vb{z}$ subject to $\vb{H}_X\vb{z}=\vb{s}_X$ and $\vb{1}^{\sfT}\vb{z}=\lambda_z$.
Summing over $\alpha$ for each logical label gives the four-sector vector in $O(n_m2^{2m+1})$ arithmetic operations.
The selected control is the RREF coset TN at $m=3$ and the conditional parity chain at $m=4,\ldots,10$.
The confirmations reuse these saved paths and implementations.

\paragraph{FP64 evaluation and accuracy.}

Both timed methods use FP64, allowing negligible sector masses to round to zero.
Independent reference intervals obtained from nonnegative Rank DP in native extended-precision arithmetic certify relative-$\ell_1$ error at most $10^{-10}$ for every complete four-sector vector.
Comparisons with the reference intervals use exact rational arithmetic.
The reference intervals also determine the possible ML winners.
\Cref{tab:rm-extension} records the number of queries whose intervals
separate a unique winner.

The confirmation covers $m=3,\ldots,10$ under depolarizing noise $p=.03$.
For each size, the seed $20260925000+100m$ generates $12$ IID
physical-error draws, retaining repeated syndromes, and each query
receives three evaluations per method.
Both methods reuse the same
stored FP64 weights.
Method order alternates by query on one pinned CPU
core with one numerical thread on the shared host.
Timings measure reused evaluation; planning, compilation, warmup,
reference generation, and external checks are accounted for separately.

The limits are $2$\,GiB of forecast numeric arrays and $4$\,GiB of process
address space.
Calls through $511$ qubits have a $60$-second limit;
the five $7$--$127$-qubit confirmations share a $300$-second worker limit,
and the $255$- and $511$-qubit confirmations each have an
$1800$-second worker limit.
At $1023$ qubits the limits are
$120$ seconds per call and $4500$ seconds for the worker.
All $12$ references, two warmups, and $72$ retained evaluations complete
at that size.

\begin{table}[!htbp]
\centering\small
\caption{CPU FP64 four-sector evaluations for the Reed-Muller family.
Each size uses $12$ IID queries and three repetitions. Times are medians
of per-query medians; TN/Rank DP is the median paired ratio.
The final column counts queries whose reference intervals certify one
strictly largest sector. At $n=1023$, all twelve queries have exactly two
tied maximizing sectors, as established by the symmetry argument below;
either label is an ML choice. All likelihood vectors meet the relative-$\ell_1$
target $10^{-10}$.
Reference generation and checking are accounted for separately from both
evaluation times.}
\label{tab:rm-extension}
\begin{tabular}{rrrrr}
\toprule
$n$ & Rank DP (s) & TN (s) & TN/Rank DP & Unique ML winners \\
\midrule
7 & 0.003055 & 0.0008087 & 0.268 & 12/12 \\
15 & 0.006916 & 0.001127 & 0.161 & 12/12 \\
31 & 0.01489 & 0.002572 & 0.172 & 10/12 \\
63 & 0.03237 & 0.007961 & 0.245 & 9/12 \\
127 & 0.07015 & 0.03726 & 0.531 & 8/12 \\
255 & 0.216 & 0.2832 & 1.305 & 2/12 \\
511 & 1.192 & 2.168 & 1.819 & 5/12 \\
1023 & 9.24 & 26.14 & 2.825 & 0/12 \\
\bottomrule
\end{tabular}
\end{table}

\paragraph{Exact twofold ties at $1023$ qubits.}
The reference intervals retain one pair of possible winners, either
$\{00,01\}$ or $\{10,11\}$, for each of the twelve $m=10$ queries.
An exact symmetry resolves these overlaps.
Fix any X pattern $\vb{x}$ compatible with the syndrome, and write $U=\supp\vb{x}$.
If a vector $\vb{v}$ satisfies
\begin{equation}
    \supp\vb{v}\subseteq U,\qquad \vb{H}_X\vb{v}=0,\qquad
    \vb{1}^{\sfT}\vb{v}=1,
    \label{eq:rm-sector-involution}
\end{equation}
then $\vb{z}\mapsto\vb{z}+\vb{v}$ preserves the syndrome and flips the second logical bit.
On the support of $\vb{v}$, the Pauli error changes between X and Y.
The depolarizing input has exactly equal stored X and Y weights, so this
map preserves the probability of every error configuration.
It is an
involution and hence pairs equal total masses in the two sectors.

For this family, $\ker\vb{H}_Z=C_X\oplus\Span\{\vb{1}\}$ has dimension $m+1$.
Thus, each syndrome has $2^{m+1}$ compatible X patterns, obtained from one representative by adding an X stabilizer and optionally $\vb{1}$.
For every one of the $12\times2^{11}=24{,}576$ patterns, exact binary
elimination produces an odd-weight null vector supported on $U$.
Equivalently,
$\rank[\vb{H}_{X,U};\vb{1}_U^{\sfT}]>\rank\vb{H}_{X,U}$, where the semicolon denotes vertical concatenation.
Independent checks verify the full kernel dimension, every witness's
support, and both parity conditions in \cref{eq:rm-sector-involution}.
Consequently,
\[
    P(s,\lambda_x,0)=P(s,\lambda_x,1)\qquad(\lambda_x=0,1)
\]
holds exactly for all twelve queries.
Together with the reference
intervals separating the two pairs, this certifies a twofold ML tie in
each case.
The zero unique-winner count in \cref{tab:rm-extension}
therefore has an algebraic explanation.

\paragraph{Symbolic contraction profiles of the original graph.}
The symbolic profiles retain one binary index per original-graph edge,
with unsplit COPY/parity tensors and the specified constraint bases.
One greedy and two seeded random-greedy searches receive five seconds
each.
The table records the symbolic operation and entry counts of
completed plans.
\begin{table}[!htbp]
\centering\small
\caption{Symbolic costs of retained pairwise contraction plans for the
original Reed-Muller TN graph with the specified constraint basis.
The columns give base-two logarithms of the
largest input tensor, largest generated intermediate, and scalar
operations for one sector, evaluated on the selected plans.}
\label{tab:rm-original-tn}
\begin{tabular}{rrrrr}
\toprule
$m$ & $n$ & Input entries & Intermediate entries & Scalar operations \\
\midrule
3 & 7 & 7 & 12 & 17.97 \\
4 & 15 & 15 & 32 & 39.71 \\
5 & 31 & 31 & 102 & 123.05 \\
6 & 63 & 63 & 343 & 373.01 \\
7 & 127 & 127 & 935 & 1397.00 \\
8 & 255 & 255 & --- & --- \\
\bottomrule
\end{tabular}
\end{table}
At $m=8$, the search reaches its time limit with the path unresolved;
the input tensor size remains known exactly.
Unsplit logical-parity
tensors require $2^n$ entries, motivating the algebraic preprocessing
used by the timed TN controls.

\subsection{Code-capacity comparisons and resource estimates}
\label{app:capacity-revision}

\paragraph{Code constructions.}
The main capacity comparison uses complete $4^k$ logical-sector vectors.
The open rectangular three-dimensional construction has
\begin{equation}
    n=abc+[(a-1)b+a(b-1)](c-1),\qquad d_X=ab,\quad d_Z=c.
\end{equation}
Take a cubic cell complex with vertices
$(i,j,k)\in\{0,\ldots,a-1\}\times\{0,\ldots,b-1\}\times\{0,\ldots,c\}$
and remove the two end caps $k=0,c$ as a relative boundary.
Qubits are the surviving edges: all longitudinal edges and transverse
edges at $k=1,\ldots,c-1$.
X checks are incidences at surviving vertices;
Z checks are boundaries of surviving faces, restricted to these edges.
A longitudinal string gives logical Z and a transverse sheet of
longitudinal edges gives logical X.
The $ab$ disjoint equivalent strings
and $c$ disjoint equivalent sheets give the matching distance bounds.
This construction agrees with the open-boundary ATNdecoding
implementation; its cubic geometry is also described by the rectification
below.
The tested rectangles set $a=b=3$ and $c=5,7,9$.

The algebraic comparisons use the quantum Golay $[[23,1,7]]$ code and a
doubled CSS $[[49,1,5]]$ code \cite{bravyi_2015_doubled,jain_albert_2024_transversal}.
The former uses cyclic shifts of the nonzero quadratic residues together
with zero; the latter applies the doubling construction first to Steane-$7$
and a trivial qubit, then to the $17$-qubit color code and the resulting
$15$-qubit code.
The $[[161,1,21]]$ concatenation uses Golay as the outer
code and Steane as each inner code.
The trivariate bicycle example TB30
is the $[[30,4,5]]$ construction of \cite{voss_2024_multivariate}, with
$l=3,m=5$, commuting cyclic shifts $x,y$, $z=xy$, and
$A=x+z^4$, $B=x+y^2+z^2$.
It requires $256$ logical outputs.
Construction
checks verify commutation, ranks, and logical operators, with independent enumeration where feasible.
The quoted code-family distances follow from the cited constructions.
The additional planar comparison uses a rotated distance-$13$ surface code.

\paragraph{Cohorts and timing.}
The three-dimensional and surface-$13$ comparisons fix the
selected rank decompositions and contraction paths before drawing $16$ fresh physical errors per
channel.
Each complete vector is timed three times per method, alternating
method order.
Every physical draw is retained, including repeated syndromes.
The Golay, doubled, and concatenated comparisons use $12$ draws per channel
and three repetitions; TB30 uses six draws per channel and two repetitions.
Channels are depolarizing
$p=.03$ and $(p_I,p_X,p_Y,p_Z)=(.945,.005,.045,.005)$.

Within each query, the reported method time is its median over repetitions.
The reported method time is the median of these query medians; the ratio
is the median of the ratios formed within each query.
Whiskers show the observed range over queries.
Sampling is IID within each channel; the two channels reuse the same
seeded random stream.
Each channel is analyzed as its own timing cohort.

Both methods use CPU FP64 and the same stored weights.
An independently checked nonnegative Rank DP calculation bounds the full likelihood vector, and every accepted result meets relative $\ell_1$ error $10^{-10}$.
Unique maximizing labels are compared when certified.
The vector criterion divides the sum of absolute sector errors by the total syndrome likelihood, as in \cref{eq:experiment-vector-accuracy}.
Timing follows the reused-evaluation convention of \cref{app:protocols}.

\begin{table}[!htbp]
\centering\small
\caption{Code-capacity timings for the correlated channel $(p_I,p_X,p_Y,p_Z)=(.945,.005,.045,.005)$.
Each code uses the same plans, implementations, and query counts as its depolarizing cohort.
The additional surface-$13$ comparison uses 16 queries and a cotengra TN implementation.
Brackets give observed ranges of paired query ratios.}
\label{tab:capacity-correlated}
\begin{tabular}{lrrr}
\toprule
Code & Rank DP (s) & TN (s) & Paired TN/Rank DP [range] \\
\midrule
$3\times3\times5$ patch & 0.6235 & 1.9400 & 3.105 [3.012, 3.187] \\
$3\times3\times7$ patch & 1.2295 & 4.0350 & 3.307 [3.115, 3.540] \\
$3\times3\times9$ patch & 1.6266 & 4.2821 & 2.629 [2.508, 2.741] \\
Surface & 0.3193 & 0.0320 & 0.099 [0.098, 0.112] \\
Golay & 0.0515 & 0.4549 & 8.831 [8.659, 8.938] \\
Doubled CSS & 0.0345 & 0.1943 & 5.625 [5.542, 5.675] \\
Golay--Steane & 0.1339 & 0.2321 & 1.754 [1.699, 1.799] \\
TB30 & 3.7759 & 3.2498 & 0.862 [0.850, 0.880] \\
\bottomrule
\end{tabular}
\end{table}

\paragraph{Tree construction and plan selection.}
The candidate rank decompositions include balanced and linear trees in the
constructor's physical-qubit order and its reverse, layer groupings for
the rectangles, and inner/outer groupings for concatenated codes.
Additional trees come from categorical and coset TN contraction paths:
retain one leaf for each physical-qubit block, remove auxiliary factor
leaves, and suppress nodes with a single remaining child.
Both the path's
child order and its reversal are considered.
For each tree, Rank DP computes the state spaces, scalar operation count, and live numeric-array storage.
For the rectangles, selection minimizes scalar work among candidates
within the $8$\,GiB array cap, breaking ties by memory and then candidate
label.
All three selected trees arise from coset contraction paths;
the plans are fixed before the IID timing samples are drawn.

\paragraph{TN representations and structural accounting.}
The main capacity paths are unsliced.
The TN search includes
categorical and coset TNs, equivalent stabilizer bases, greedy and seeded
random-greedy paths, and runtime checks of selected alternatives.
For the
doubled code, lower-weight and degree-balanced bases reduce the TN forecast
from roughly ten billion to forty million operations.
For the
concatenated code, exact inner-Steane contractions supply conditional
four-state weights to the outer-Golay TN.
Nonuniform-noise tests check all
sectors and their X/Z ordering.

The rectangle, Golay, doubled, concatenated, and TB30 comparisons use
opt\_einsum.
The surface comparison uses a rank decomposition of width $15$ and an
unsliced cotengra coset executor.

The selected rank decompositions of the rectangles' reduced Tanner graphs have width $18$ with respect to the physical-qubit partitions.
Their predicted scalar operation counts are $77{,}089{,}612$, $144{,}236{,}940$, and $207{,}208{,}780$ for the four-sector output.
Their live numeric-array forecasts are $18{,}877{,}760$, $18{,}879{,}104$, and $18{,}880{,}448$ bytes.
Linear trees ordered by layer have width $17$ but require more total join work.
TB30's
Rank DP/TN array forecasts are $4{,}424{,}640/228{,}741{,}008$ bytes.
These forecasts count live numeric arrays.
Process RSS additionally
includes interpreter, map, allocator, and backend storage.

\paragraph{Cases without completed timing comparisons.}
Quantum Hamming $[[15,7,3]]$ and Reed-Muller $[[16,6,4]]$ were screened
with all $16{,}384$ and $4096$ sectors.
The retained Rank DP/TN work ratios were
about $13.5$ and $4.67$; the current Rank DP implementation evaluates each logical sector separately.
These cases were excluded from timing comparisons on the basis of their resource estimates.
The $[[16,6,4]]$ code is the symmetric Reed-Muller construction;
\cref{subsec:rm-code-families} specifies the punctured family used above.
Other planar and concatenated screens, unconfirmed QR31 pilots, and all
stopped attempts retain their construction, forecast, or timeout status
in the experiment records.

\paragraph{Cubic geometry and the Vasmer--Browne construction.}
\label{app:cubic-capacity}
For $a=b=c=d$, the open surface code has
$n=3d^3-4d^2+2d$, with $d_X=d^2$ and $d_Z=d$.
Under rectification, edge qubits become vertices, X-star checks become
green octahedra, and Z-face checks become red--blue square faces.
This identifies the individual code with the $\mathrm{SC}_g$ component
of the Vasmer--Browne construction~\cite{vasmer_browne_2019_surface}.
Constructor checks verify commutation, check ranks, logical operators,
and the geometric string/sheet distances.
The cubic follow-up uses CPU FP64 depolarizing weights at $p=.03$ and the same full-vector target $10^{-10}$.
At $d=3$, a fresh 12-draw IID cohort with three repetitions gives median Rank DP/TN times $0.04815/0.00926$\,s and a median paired TN/Rank DP ratio $0.191$.
Both methods return all four sectors.
The TN evaluation uses cotengra with the saved basis and path.

The $d=4$ case, $[[136,1,4]]$, is a two-query, one-repeat IID pilot.
One accepted Rank DP evaluation completes in $89.56$\,s.
The first TN warmup exceeds its 600-second limit; the 1400-second overall limit interrupts the second reference evaluation and terminates the remaining worker processes.
Thus, one Rank DP output is accepted, but the paired runtime comparison remains incomplete.

At $d=5$, $[[285,1,5]]$, only binary ranks and tensor contraction dimensions are calculated.
The selected rank decomposition has width $37$ and a live numeric-array forecast of $9216$\,GiB.
The selected TN plans require $2^{70}$, $2^{62}$, $2^{58}$ slices under the 8, 24, and 64\,GiB caps, respectively, all beyond the limit of $65{,}536$ slices.
These estimates exceed the resource limits, so no likelihood evaluation is attempted.

The cubic studies use the same timing convention.
Independent checks
verify the constructors, query draws, output enclosures, and timing
aggregation.

\subsection{Certified color-circuit decoding and precision checks}
\label{app:circuit-mld-revision}

The main circuit table uses the complete stored models for color
$(5,5)$ and $(7,3)$, with gate, reset, and measurement rates all $.001$ and without an
additional idle channel.
The quiet syndrome and the first $31$ distinct nonzero sampled syndromes form the fixed query set described in \cref{app:protocols}.
Each method returns the two likelihoods $P^{\cl}(s,L_0=0)$ and $P^{\cl}(s,L_0=1)$, certifies each to relative error $10^{-8}$, and certifies a unique maximizing label.
Rank DP uses FP64; the signed detector-Fourier TN uses $128$ significand
bits in mpmath.
This fixes the requested two-sector output and arithmetic
policy throughout the comparison.

Each method and case receives three independent planning attempts capped
at $300$\,s each.
Previously constructed candidates, including the color
$(7,3)$ plans, are imported and reprofiled for the two-sector task.
Scalar Fourier paths are converted by removing the fixed logical-query
unary and retaining the logical-character index.
After the imported candidates, each attempt tries Fourier paths from
detector min-fill, chronological, and reverse-chronological elimination
orders, then seeded random-greedy searches, within its planning time limit.
Rows are adjacent when they share an event.
Min-fill eliminates the row
that adds the fewest edges among
its current neighbors, breaking ties by degree and then row index.
For Rank DP, a Fourier path supplies a candidate: retain its DEM-event leaves,
remove character-unary and auxiliary leaves, and suppress nodes with a
single remaining child.
Before these paths, the Rank DP search also
tries min-fill on the combined detector and observable rows.
Each event
is attached to its earliest eliminated incident row; binary joins combine
the event leaves along the resulting elimination tree.
Candidate profiles
use the requested two-sector output.
Selection minimizes feasible structural
work, then forecast memory and structural hash.

Three adjacent cold batch pairs alternate method order.
A worker starts,
loads the stored plan, compiles, evaluates all $32$ queries, performs its
own checks, and records the outputs.
These phases are included in batch
time.
Search and independent validation are excluded and recorded
separately.
The completed archive contains $384$ vectors and $768$
component intervals; an independent audit checks every one.
Earlier scalar-query studies are described in \cref{app:circuit-ledger}.

The precision diagnostic fixes query indices $0$--$3$ before new evaluation:
the quiet pattern and the first three stored sampled patterns in each case.
Native FP64 and $96$-bit Fourier TN use the identical saved path,
model, and slicing plan as the $128$-bit comparison.
Fresh Rank DP calls use
the saved rank decomposition.
The completed MP128 batch values provide the third precision point.
Reference comparisons are performed after the timed workers finish.

The independent accuracy check uses stored Rank DP likelihood intervals to enclose candidate
error relative to the exact stored-weight result.
It separately reports
whole-vector relative $\ell_1$ error at $10^{-10}$, each sector's relative
error at $10^{-8}$, and unique ML certification.
An error upper bound below
the target proves success; a lower bound above it proves failure; an
overlapping interval is unresolved.
Each method's own conservative
roundoff-error bound is recorded separately.
A conservative bound above
the target leaves that standalone accuracy check unresolved.
In particular, a small losing sector can fail component accuracy while
the full vector and the ML decision remain accurate.
A nonnegative TN
sums nonnegative weights; the available search outcomes for these
representations are given in \cref{app:circuit-ledger}.

\begin{table}[!htbp]
\centering\small
\caption{Four fixed precision-diagnostic queries per circuit. The first two count reference-certified whole-vector and both-sector accuracy. The last three count the method's own conservative certificates, for vector, both sectors, and unique ML respectively. Each count is out of four. The reference and standalone checks use their respective error bounds.}
\label{tab:circuit-precision-checks}
\begin{tabular}{llccccc}
\toprule
Circuit & Method & Ref. vector & Ref. sectors & Own vector & Own sectors & Own ML \\
\midrule
Color (5,5) & Fourier FP64 & 4 & 2 & 0 & 0 & 4 \\
Color (5,5) & Fourier MP96 & 4 & 4 & 4 & 4 & 4 \\
Color (5,5) & Fourier MP128 & 4 & 4 & 4 & 4 & 4 \\
Color (5,5) & Rank DP FP64 & 4 & 4 & 4 & 4 & 4 \\
Color (7,3) & Fourier FP64 & 4 & 0 & 0 & 0 & 4 \\
Color (7,3) & Fourier MP96 & 4 & 4 & 4 & 4 & 4 \\
Color (7,3) & Fourier MP128 & 4 & 4 & 4 & 4 & 4 \\
Color (7,3) & Rank DP FP64 & 4 & 4 & 4 & 4 & 4 \\
\bottomrule
\end{tabular}
\end{table}

\paragraph{Rank decompositions and resource limits.}
\Cref{tab:circuit-structure} gives the Rank DP structure for the two
circuits.
The work count sums leaf outcomes and compatible child-state
pairs over the two requested sectors.
Each child-state pair contributes
a product to one parent entry.
The width of the selected rank decomposition determines the largest DP table, whereas the total work depends on the compatible state pairs at every join.

\begin{table}[!htbp]
\centering\small
\caption{Circuit models and selected Rank DP plans. Array forecasts
count live numeric storage for FP64 evaluation.}
\label{tab:circuit-structure}
\begin{tabular}{lrrrrr}
\toprule
Circuit & Events & Detectors & Width & Work terms & Array bytes \\
\midrule
Color (5,5) & 1104 & 45 & 22 & $4{,}226{,}848{,}012$ & $86{,}001{,}104$ \\
Color (7,3) & 782 & 54 & 16 & $180{,}083{,}080$ & $5{,}026{,}208$ \\
\bottomrule
\end{tabular}
\end{table}

Each batch has a $7200$-second wall-clock limit.
Rank DP's numeric
array, address-space, and RSS limits are $10$, $12$, and $13$\,GiB;
the corresponding Fourier limits are $18$, $24$, and $25$\,GiB.
The plans allow at most $4096$ slices and $10^{12}$ operations under each method's counting convention.
These limits determine whether a plan is accepted for evaluation and when a worker is terminated.

\paragraph{Sufficient precision for the fixed query set.}
The stored reference intervals also give sufficient precision bounds without further contractions.
Let $L$ be the smaller of the two certified lower bounds on sector likelihoods, and let $G$ be a certified lower bound on the difference between the winning likelihood and the other likelihood.
At precision $b$, write $E_b=\gamma_D B_+$ for the Fourier absolute error bound, using the values of $D$ and $B_+$ for the stored contraction plan.
If
\begin{equation}
    L>2E_b,\qquad \frac{E_b}{L-2E_b}\leq10^{-8},\qquad G>4E_b,
\end{equation}
then its own likelihood intervals must meet the sector target and certify
the winner under the stated arithmetic assumptions.
Indeed,
$\widehat Z\geq L-E_b$, so the relative enclosure radius is at most
$E_b/(L-2E_b)$; the returned winner gap is at least $G-4E_b$.
For all $32$ queries these sufficient tests hold at $95$ and $104$ significand bits for color $(5,5)$ and $(7,3)$, respectively.
These operation-count bounds supply conservative sufficient precisions for the saved plans.
The timing table retains its common fixed MP128 policy.

\subsection{Circuit representation searches and outcomes}
\label{app:circuit-ledger}

The circuit comparisons consider categorical, nonnegative coset, and signed Fourier representations.
The searches below use the stated outputs and planning time limits.

Earlier scalar screens cover surface $(5,2)$, $(5,4)$, $(7,2)$ and color
$(3,7)$, $(5,4)$, $(7,3)$.
Their requested output is one all-quiet
observable-sector mass.
The current two-sector timing comparison instead
selects Rank DP and Fourier plans with three $300$-second attempts per
method and case, including imported candidates converted to the requested
output as described in \cref{app:circuit-mld-revision}.

For the two color circuits, a separate search over nonnegative TNs tries one
greedy categorical path and one greedy coset path, each with a
$10$-second planning limit.
The numeric-array forecast caps are $8$, $24$, and $64$\,GiB; planning workers have a $10$\,GiB RSS limit.
\Cref{tab:circuit-positive-search} records the actual planning outcomes.

\begin{table}[!htbp]
\centering\small
\caption{Nonnegative TN search outcomes for the task returning both sectors
of one measured observable.
Each row has a $10$-second planning limit and one greedy candidate.}
\label{tab:circuit-positive-search}
\begin{tabular}{lll}
\toprule
Circuit & Representation & Planning outcome \\
\midrule
Color (5,5) & Categorical & Time limit during path profiling \\
Color (5,5) & Nonnegative coset & Time limit during path profiling \\
Color (7,3) & Categorical & Recursion limit in path profiling \\
Color (7,3) & Nonnegative coset & Time limit during path profiling \\
\bottomrule
\end{tabular}
\end{table}

None of these four searches produces a usable contraction plan within its limits.
The completed runtime comparison
uses the Rank DP and Fourier plans and accuracy requirements in
\cref{app:circuit-mld-revision}.
Broader representation searches and
execution optimization of both methods remain directions for future work.

\subsection{Circuit noise learning, derivatives, and statistical uncertainty}
\label{app:circuit-learning-revision}

\paragraph{Physical rates and the DEM map.}
The color $(7,3)$ experiment demonstrates Rank DP as an exact
evaluator for likelihood-based calibration~\cite{cao_2026_differentiable}:
it fits the three physical gate, reset, and measurement rates
$\theta=(g,r,m)$ from syndrome likelihoods through their fixed map to DEM priors.
The circuit has $581$ physical noise-channel instances, of which $563$
have nonzero detector/observable signatures, and $782$ DEM events.
A signature is the binary vector of detector and observable flips
produced by a physical fault.
Independent events with the same signature combine into one DEM event, whose factor $1-2p_j$ is the product of the corresponding factors for the individual events.

For these circuits, the exact map from physical rates to DEM event probabilities is
\begin{equation}
    p_j(\theta)=\tfrac12\left[1-\exp\!\left(\sum_{a=1}^4
    C_{ja}f_a(\theta)\right)\right],\qquad
    \vb{f}(\theta)=
    \begin{pmatrix}
    \log(1-4g/3)\\ \log(1-16g/15)\\ \log(1-2r)\\ \log(1-2m)
    \end{pmatrix}.
    \label{eq:calibration-rate-map}
\end{equation}
The four columns of $\vb{C}$ correspond to one-qubit depolarization,
two-qubit depolarization, reset flips, and measurement flips.
For a depolarizing channel with a binary signature image of dimension
$h$, each nonzero signature contributes $2^{1-h}$ to its column.
To see this, a nontrivial character of that image changes sign on $2^{h-1}$ signatures.
Assigning the factor $(1-4g/3)^{2^{1-h}}$ to each nonzero signature, or $(1-16g/15)^{2^{1-h}}$ for two qubits, reproduces the channel's character expectation.
Reset and measurement flips contribute coefficient one.
Multiplying these factors for independent channels gives \cref{eq:calibration-rate-map}.
The implementation checks that each signature image is a binary subspace and that its nonzero signatures have equal multiplicities among the physical fault outcomes.
The event supports remain fixed as the rates vary.
The formula applies on $0<g<3/4$ and $0<r,m<1/2$.

\paragraph{Forward sensitivities.}
At an internal node $u$ with children $u_1,u_2$, differentiating the Rank DP update in \cref{eq:block-join} gives
\begin{equation}
    \begin{aligned}
        \partial_{\theta_j}F_{Y_u}(\vb{h})
        =\sum_{\substack{\vb{h}_1\in I_{Y_{u_1}},\,\vb{h}_2\in I_{Y_{u_2}}\\
                        \vb{h}_1+\vb{h}_2=\vb{h}}}
        \Bigl[&\bigl(\partial_{\theta_j}F_{Y_{u_1}}(\vb{h}_1)\bigr)F_{Y_{u_2}}(\vb{h}_2)\\
              &+F_{Y_{u_1}}(\vb{h}_1)\bigl(\partial_{\theta_j}F_{Y_{u_2}}(\vb{h}_2)\bigr)\Bigr].
    \end{aligned}
    \label{eq:calibration-sensitivity}
\end{equation}
Here, $j=1,2,3$ indexes the gate, reset, and measurement rates, and leaf derivatives follow from \cref{eq:calibration-rate-map}.
The likelihood and derivative arrays use the same exact power-of-two scaling.
The score is $\partial_{\theta_j}\log Z=(\partial_{\theta_j}Z)/Z$.
The mean NLL in \cref{eq:learning-objective} therefore has gradient
\[
    \partial_{\theta_j}\calL_N(\theta)
    =-\frac1N\sum_{i=1}^N
    \frac{\partial_{\theta_j}Z_{\vb{D}}(s_i;\theta)}
      {Z_{\vb{D}}(s_i;\theta)}.
\]
The likelihood is evaluated by nonnegative Rank DP with the roundoff-error bound in \cref{prop:roundoff-error}.
This bound does not apply to the signed derivatives, which are checked by the independent comparisons below.
The compiled kernels disable fast-math and fused multiply-add.

\paragraph{Training and held-out evaluation.}
The generating rates are $(.0012,.0007,.0016)$ and initialization is
$(.0005,.0018,.0006)$.
Each of three independent datasets contains
$4096$ training, $512$ validation, and $2048$ test circuit shots.
The optimizer variables are logits $\eta_a$, with
$\theta_a=.25\,\sigmoid(\eta_a)$.
Adam uses $64$ updates, batch size $64$, learning rate $.03$,
$(\beta_1,\beta_2)=(.9,.999)$, and $\epsilon=10^{-8}$.
Independent streams generate the three splits and minibatch schedules.
Each batch samples training positions without replacement; successive
batches are independent samples from the fixed training pool.

Validation NLL selects among updates $0,8,16,\ldots,64$, including
initialization.
Training and selection use detector outcomes.
The main plot's mean relative DEM-prior error is a diagnostic against
the known generating model; it weights all $782$ events equally and
shows all $65$ iterates.
The distinct objectives explain why its minimum
can occur at a different update from the selected validation checkpoint.
Test likelihoods are evaluated after selection.
For the paired per-shot
improvement $D_i=\mathrm{NLL}_{\rm init}(s_i)-\mathrm{NLL}_{\rm fit}(s_i)$,
the approximate $95\%$ interval is
$\overline{D}\pm1.96s_D/\sqrt{2048}$, where $s_D$ is the sample standard
deviation.
Each interval quantifies test-sample uncertainty conditional
on its selected fitted model.

\begin{table}[!htbp]
\centering\small
\caption{Validation-selected physical rates and held-out NLL improvement
for color $(7,3)$. Rates are in units of $10^{-3}$; brackets give
conditional approximate $95\%$ paired test intervals in nats per syndrome.}
\label{tab:color-circuit-learning}
\begin{tabular}{crrrrr}
\toprule
Dataset & Update & Gate & Reset & Measurement & NLL improvement [interval] \\
\midrule
1 & 48 & 1.171 & 1.553 & 1.191 & 0.1002 [0.0786, 0.1218] \\
2 & 64 & 1.126 & 1.188 & 1.248 & 0.0925 [0.0731, 0.1120] \\
3 & 64 & 1.133 & 0.995 & 1.409 & 0.0953 [0.0753, 0.1153] \\
\bottomrule
\end{tabular}
\end{table}

\paragraph{Independent validation.}
Fresh exact Stim conversions check the rate map at four interior points,
with maximum absolute event-prior difference $6.73\times10^{-16}$.
They also check all $195$ probability vectors from the saved iterates.
Small test instances are checked by complete DEM-event enumeration and independent TN differentiation.
On the color $(7,3)$ circuit, likelihoods agree with an independent calculation of the full likelihood vector and with Fourier TN evaluation.
Central finite differences at physical-rate steps $10^{-7}$ and $3\times10^{-8}$ check all three derivatives on three fixed syndromes, using new Stim conversions and Fourier TN evaluations at each perturbed rate.
The largest recorded Euclidean gradient difference is $4.97\times10^{-6}$.
The acceptance tolerance is
$3\times10^{-7}\max(1,\|\nabla_\theta\log P_\theta(s)\|_2)$.
The experiment audit replays circuit sampling, minibatch generation,
optimizer arithmetic, selection, and paired test statistics.
Each fit uses a pinned CPU core and a $2$\,GiB address-space limit.
Including held-out evaluation, the three runs take $69.9$, $67.3$, and
$67.2$ minutes.

\paragraph{Local identifiability and parameter uncertainty.}
The selected reset rates are $1.4$--$2.2$ times their generating value.
To examine sensitivity, let $\vb{D}$ be the detector-event incidence matrix
and $p_j$ the generating DEM priors.
The detector parity characters and
means are
\[
    t_a=\prod_j(1-2p_j)^{D_{aj}},\qquad \mu_a=(1-t_a)/2.
\]
Define $t_{a\oplus b}$ using the XOR of detector rows $a$ and $b$.
The single-shot covariance of their binary outcomes is then
\begin{equation}
    \Sigma_{ab}=\frac{t_{a\oplus b}-t_at_b}{4}.
    \label{eq:calibration-moment-covariance}
\end{equation}
The $54\times3$ Jacobian $J=\partial\mu/\partial\log\theta$ is evaluated
at the generating rates.
Its singular values are $.16289$, $.03611$, and
$.01046$, giving condition number $15.57$.
Its full column rank gives local identifiability through detector means.
The positive-definite $\Sigma$ also accounts for correlations between
detectors.
An optimally weighted linearized moment estimator based on
$N$ independent shots has local log-rate covariance
\begin{equation}
    V_{\rm mom}=\frac1N(J^{\sf T}\Sigma^{-1}J)^{-1}.
    \label{eq:calibration-local-resolution}
\end{equation}
For $N=4096$, its gate/reset/measurement standard errors are
approximately $4.7\%$, $29.3\%$, and $7.5\%$ in relative-rate units.
These values describe the local resolution of the specified moment
estimator at the generating model.
They quantify the unequal sensitivity
of the three physical rates, with reset noise the least resolved.
Fresh circuit conversions and finite differences independently check
$J$, $\Sigma$, and the resulting resolution calculation.

\paragraph{Decoding with the fitted noise models.}
After selecting the three fitted models by validation NLL, we draw $1024$ fresh IID circuit shots at the generating rates, with seed $2026092707$.
All multiplicities are retained, giving $346$ distinct syndromes.
For each syndrome, the initial, generating, and three fitted models
return both observable-sector masses and a uniquely certified ML label.
The compiled nonnegative Rank DP evaluates each sector with the observable row appended to the detector constraints.
Its FP64 rounding-factor bound is $\delta=14397$, and interval endpoints are computed with exact rational arithmetic.
Four fixed syndromes per model are also checked by an independent calculation of the full likelihood vector.
For each fitted model, the primary estimand is the reduction in failure probability relative to the initial model, estimated using paired conditional failure probabilities under the generating-model posterior.
The sample size and $3600$-second limit per model are fixed before sampling.

All five models choose the same label on every sampled syndrome and
have zero realized failures in the $1024$ shots.
Their common sample
mean conditional failure probability is $9.76314\times10^{-6}$.
Each paired difference is exactly zero on this cohort; applying
\cref{eq:decoder-bernstein} gives a per-comparison $95\%$ interval
contained in $[-.0200,.0200]$, or approximately $\pm2.00$ percentage
points.
This interval bounds the difference in overall failure probabilities under the generating model.
Conditional on the sampled syndromes, the common mean conditional failure probability corresponds to approximately $.0100$ expected failures in $1024$ shots.
Thus, the improvement in held-out likelihood does not change the ML decisions on this sample.
The independent audit replays all shots, verifies every winner
inequality, and recomputes the paired statistics.

\subsection{Postselection events and numerical intervals}
\label{app:postselection-details}

Postselection requires separate bounds on the acceptance probability and the probability of failure with acceptance.
For a model $\nu\in\{\cc,\cl\}$, suppose $[L_A,U_A]$ encloses $P^\nu(A)$ and $[L_B,U_B]$ encloses $P^\nu(A\cap B)$, with $L_A>0$ and $L_B\geq0$.
Then,
\begin{equation}
    \frac{L_B}{U_A}\leq P^\nu(B\mid A)\leq\frac{U_B}{L_A}.
    \label{eq:postselection-interval}
\end{equation}

\paragraph{Static grid and sampling checks.}
Static postselection uses all-zero syndrome, identity recovery, and the
disjoint union of all nontrivial logical sectors as $B$.
The grid crosses
surface-$3$, surface-$5$, and the $[[17,1,5]]$ color code with seven rates
$p\in\{10^{-5},3\times10^{-5},10^{-4},3\times10^{-4},10^{-3},
3\times10^{-3},10^{-2}\}$ and six channel choices.
In public $I,X,Y,Z$
order, these are
\begin{equation*}
    \begin{aligned}
    \text{depolarizing}:\quad &(1-p,p/3,p/3,p/3),\\
    \text{Z-biased}:\quad &(1-p,p/22,p/22,10p/11),\\
    \text{correlated}:\quad &(1-(2-c)p,(1-c)p,cp,(1-c)p),
    \end{aligned}
\end{equation*}
with $c\in\{0,.05,.5,.9\}$ for the correlated family.
The bias is
$p_Z/(p_X+p_Y)=10$.
Here, $p$ is the total nonidentity probability for the
first two families and each binary marginal for the correlated family,
whose total nonidentity probability is $(2-c)p$.
All $3\times7\times6=126$
cases are evaluated by nonnegative Rank DP on a CPU, with likelihood intervals certifying relative error at most $10^{-10}$.
At depolarizing $p=10^{-5}$, \cref{tab:postselection-enclosures} gives the three codes' conditional failure probabilities and their numerical enclosures.
The displayed intervals round the computed endpoints outwards.

\begin{table}[!htbp]
\centering\small
\caption{Zero-syndrome postselection at depolarizing $p=10^{-5}$,
with identity recovery.
$A$ is acceptance, and $B$ is the event of a nontrivial logical error.
The interval for $P(B\mid A)$ follows from separate bounds on $P(A)$ and $P(A\cap B)$.}
\label{tab:postselection-enclosures}
\begin{tabular}{lrl}
\toprule
Code & $P(A)$ (rounded) & Numerical enclosure of $P(B\mid A)$ \\
\midrule
Surface $[[9,1,3]]$ & $.9999100036$ &
$[8.8891555612839,\;8.8891555612842]\times10^{-16}$ \\
Surface $[[25,1,5]]$ & $.9997500301$ &
$[6.5846913682157,\;6.5846913682165]\times10^{-26}$ \\
Color $[[17,1,5]]$ & $.9998300136$ &
$[6.2966111212538,\;6.2966111212570]\times10^{-26}$ \\
\bottomrule
\end{tabular}
\end{table}

A separate predetermined sampling check uses one million draws for each
of four depolarizing cases: surface-$3$ at $p=.03$ and $.10$, and surface-$5$
and color-$17$ at $p=.10$.
The base seed is 2026091905, with case-index
spawn keys $0,1,2,3$ in that order.
Intervals are equal-tailed
Clopper--Pearson $95\%$ intervals per estimand, using all draws for
unconditional probabilities and accepted draws for the conditional
probability.
The surface-$3$, $p=.10$ sample has 530 accepted logical failures
versus expectation 477.59; its unconditional and conditional intervals
both lie above the exact value.
The two intervals use the same 530
events.
The full sample is retained, and independent physical-error
enumeration agrees with the Rank DP calculation.

\paragraph{Circuit events and partial acceptance.}
The circuit examples below use rotated Z memory with gate/reset/measurement
rates $.001$, no extra idle noise, and identity recovery.
The failure event
is the final measured memory-observable flip $L_0=1$.
\Cref{tab:partial-postselection} compares all-quiet acceptance with first-round
acceptance, which marginalizes the later detectors.
For a selected
set $J$ of constrained detectors, acceptance is
$A_J=\{s_j=0\text{ for all }j\in J\}$.
Its probability is computed using
the rows $\vb{D}_J$ alone; appending the observable row and fixing its value
to one gives $P(A_J,L_0=1)$.
This directly sums over every outcome of
the unconstrained detectors.

\begin{table}[!htbp]
\centering\small
\begin{tabular}{rlrr}
\toprule
Rounds & Acceptance condition & $P^{\cl}(A)$ & $P^{\cl}(L_0=1\mid A)$ \\
\midrule
7 & All detectors quiet & .7230370 & $5.49453\times10^{-8}$ \\
7 & First round quiet & .9717330 & $.0263165$ \\
9 & All detectors quiet & .6617884 & $5.75289\times10^{-8}$ \\
9 & First round quiet & .9717330 & $.0333414$ \\
\bottomrule
\end{tabular}
\caption{Distance-$3$ rotated Z-memory postselection with gate/reset/measurement
rates $.001$ and no extra idle noise. Identity recovery is fixed. Later
detectors are marginalized when only the first round is constrained; the
failure event remains the final measured memory-observable flip.}
\label{tab:partial-postselection}
\end{table}

Rank DP gives separate numerical bounds for the probabilities under both acceptance conditions.
\Cref{app:circuit-mld-revision} compares the numerical accuracy of individual likelihoods, complete likelihood vectors, and ML decisions.

\subsection{Decoder assessment and independent checks}
\label{app:decoder-details}

\paragraph{Decoder settings and cohorts.}
Quaternary BP+OSD provides a related correlation-aware
approach~\cite{kung_2023_quaternary}.
At each qubit, the joint Pauli channel gives the conditional priors
\begin{align*}
    P(x=1\mid z=0)&=\frac{p_X}{p_I+p_X},&
    P(x=1\mid z=1)&=\frac{p_Y}{p_Z+p_Y},\\
    P(z=1\mid x=0)&=\frac{p_Z}{p_I+p_Z},&
    P(z=1\mid x=1)&=\frac{p_Y}{p_X+p_Y}.
\end{align*}
Marginal BP+OSD~\cite{panteleev_osd} first solves the two binary syndrome problems with
priors $p_X+p_Y$ and $p_Z+p_Y$.
Starting from that correction, the
conditional heuristic uses the current Z estimate to update X with
the first pair of priors, then updates Z with the second pair; the
opposite orientation updates Z first.
Each orientation restarts from
the same marginal correction.
All syndrome-valid
intermediates are retained; the highest joint physical-error probability
wins, with lexicographic correction bits breaking ties.
Low effort uses one
conditional sweep per orientation, ten BP iterations, and order-zero OSD
(OSD-0).
High effort uses four sweeps, fifty BP iterations, and
combination-sweep OSD (OSD-CS) of order two.
BP uses the sum-product algorithm, parallel scheduling, and one thread.
These settings
jointly specify the iteration count, number of sweeps, and OSD search.

Both binary marginals are $.05$, with $p_Y=.0025,.025,.045$ as in
\cref{subsec:decoder-gap}.
The distance-$3$ population calculation sums all
256 syndromes under the true joint distribution.
After a separate
256-shot pilot, the distance-$5$ study freezes three fresh 8192-shot
cohorts, using seeds 2026091811, 2026091812, and 2026091813 in noise-point
order.
The primary contrast is marginal-low minus conditional-low at
$p_Y=.045$, evaluated at the fixed sample size.

\paragraph{Sampling and numerical intervals.}
For $N$ IID physical draws $(s_i,\ell_i)$, the conditional failure probability is $R_\delta(s_i)$, the excess conditional failure probability is $\Delta_\delta(s_i)$ from \cref{eq:decoder-regret}, and the realized failure indicator is $\mathbf{1}\{\delta(s_i)\ne\ell_i\}$.
Each lies in $[0,1]$; a paired difference between two decoders lies in $[-1,1]$.
Let $V_i$ denote the reported per-shot quantity, $s_V^2$ its unbiased sample variance, and $R=b-a$ the width of its known support $[a,b]$.
Thus, $R=1$ for conditional failure probabilities, excess conditional failure probabilities, and failure indicators, and $R=2$ for paired differences.
Applying the empirical-Bernstein inequality
of Maurer and Pontil~\cite[Theorem~4]{MaurerPontil2009} to both signs and
taking a union bound gives the two-sided statistical radius
\begin{equation}
    \rho_{\rm stat}=
    \sqrt{\frac{2s_V^2\log(4/\alpha)}{N}}
    +\frac{7R\log(4/\alpha)}{3(N-1)}.
    \label{eq:decoder-bernstein}
\end{equation}
The reported interval is
$[\overline{V}-\rho_{\rm stat}-\rho_{\rm num},
  \overline{V}+\rho_{\rm stat}+\rho_{\rm num}]\cap[a,b]$,
with $\alpha=.05$.
If $\epsilon$ is the largest certified posterior $\ell_1$ error bound over the cohort's distinct syndromes, the implementation conservatively uses $\rho_{\rm num}=2\epsilon$ for conditional failure probabilities and excess conditional failure probabilities and $4\epsilon$ for paired differences of conditional failure probabilities.
Realized failure indicators and their paired differences use $\rho_{\rm num}=0$.
Each interval has the stated coverage for its individual estimand.
Paired
differences retain the same physical draws, and repeated syndromes retain
their multiplicities.

\paragraph{Population and sampled results.}
At distance $3$ and $p_Y=.045$, the joint Bayes, marginal BP+OSD,
and conditional-low population failure probabilities are
$.01757028$, $.06969064$, and $.04010803$.
Conditional-low reduces the failure probability by $2.9583$ percentage points relative to marginal BP+OSD.
Conditional-high has $.0003846522$ greater population failure probability
than low, with a numerical interval excluding zero.
Thus, the low setting has the smaller overall failure probability for this channel.
Exact factorized inference and marginal BP+OSD have the same reported
failure probabilities at both correlated points in the distance-$3$
population and distance-$5$ cohorts.

At the strongest distance-$5$ correlation, actual failures are
$386/8192$ for marginal, $118/8192$ for conditional-low, $140/8192$ for
conditional-high, and $30/8192$ for joint Bayes.
The marginal-minus-low
paired failure reduction is $.03271484$ with interval
$[.02390427,.04152542]$.
The low-minus-high interval for the difference
in mean conditional failure probabilities is $[-.00927801,.00192004]$
and includes zero.
At the independent-noise point, the practical decoder's
observed mean excess conditional failure probability is $.0000723871$,
with interval $[0,.00153497]$; all distance-$3$ independent-control gaps
are zero.
Averages of conditional failure probabilities and observed
failure fractions estimate the same population quantity on the same
draws.
The conditional-probability estimator integrates out the remaining
logical-outcome randomness and reduces variance by conditioning on the
syndrome.

All nine population, pilot, and confirmatory calculations are completed, with all sampled draws retained.
The independent audit
checks 44\,656 stored corrections, their syndrome and logical maps,
retained candidate probabilities, paired statistics, and archived source files.
It enumerates all 262\,144 physical Pauli assignments for each distance-$3$
noise point as a floating-point cross-check at tolerance $10^{-12}$.
The population intervals use the numerical error bounds for nonnegative Rank DP and exact rational accumulation.
These comparisons quantify decoder failure probabilities under the fixed settings above.

\subsection{Computational environment and validation}
\label{app:evidence-index}

The principal CPU studies use a shared Intel Xeon Platinum 8268 host
running Linux, with workers pinned to CPU cores and BLAS/OpenMP thread
counts fixed to one.
The software environment uses Python 3.11.15 and
NumPy 2.4.3.
Circuit generation and sampling use Stim 1.15.0~\cite{stim}.
TN path search and execution use cotengra 0.7.5~\cite{cotengra} and
opt\_einsum 3.4.0~\cite{opt_einsum}; the selected representations and
executors are specified in
\cref{app:rm-capacity,app:capacity-revision,app:circuit-mld-revision}.
BP+OSD uses \texttt{ldpc} 2.4.1.
Multiprecision Fourier calculations
use \texttt{mpmath} 1.3.0 with the
arithmetic assumptions in \cref{app:fourier-certificate}.
The native extended-precision reference format used for the Reed-Muller experiments has $64$ significand
bits and normal exponent range $[-16382,16383]$, stored in $16$ bytes
on this platform.

Independent validation checks model and query identities, likelihood
enclosures, and the reported statistics.
The experiment-specific
subsections give the algebraic, enumeration, sampling, and calibration
checks, including their numerical tolerances.
\FloatBarrier

\end{document}